\newtheorem{theorem}{Theorem}
\newtheorem{lemma}{Lemma}
\newtheorem{proposal}{Proposal}
\let\oldnl\nl
\newcommand{\nonl}{\renewcommand{\nl}{\let\nl\oldnl}}
\newcommand{\by}{\mathbf{y}}
\newcommand{\bx}{\mathbf{x}}
\newcommand{\bw}{\mathbf{w}}
\newcommand{\bh}{\mathbf{h}}
\newcommand{\bq}{\mathbf{q}}
\newcommand{\br}{\mathbf{r}}
\newcommand{\bz}{\mathbf{z}}
\newcommand{\bb}{\mathbf{b}}
\newcommand{\bv}{\mathbf{v}}
\newcommand{\bd}{\mathbf{d}}
\newcommand{\bff}{\mathbf{f}}
\newcommand{\bg}{\mathbf{g}}
\newcommand{\bn}{\mathbf{n}}
\newcommand{\bzero}{\mathbf{0}}
\newcommand{\bmu}{\bm{\mu}}
\newcommand{\bA}{\mathbf{A}}
\newcommand{\bI}{\mathbf{I}}
\newcommand{\bU}{\mathbf{U}}
\newcommand{\bV}{\mathbf{V}}
\newcommand{\bS}{\mathbf{S}}
\newcommand{\bW}{\mathbf{W}}
\newcommand{\bH}{\mathbf{H}}
\newcommand{\bs}{\mathbf{s}}
\newcommand{\bF}{\mathbf{F}}
\newcommand{\bP}{\mathbf{P}}
\newcommand{\bJ}{\mathbf{J}}
\newcommand{\bD}{\mathbf{D}}
\newcommand{\bM}{\mathbf{M}}
\newcommand{\bmm}{\mathbf{m}}
\newcommand{\bB}{\mathbf{B}}
\newcommand{\bQ}{\mathbf{Q}}
\newcommand{\bZ}{\mathbf{Z}}
\newcommand{\halfa}{\Hat{\alpha}}
\newcommand{\talfa}{\Tilde{\alpha}}
\newcommand{\bhs}{\Hat{\mathbf{s}}}
\newcommand{\bhf}{\Hat{\mathbf{f}}}
\newcommand{\bhq}{\Hat{\mathbf{q}}}
\newcommand{\bhb}{\Hat{\mathbf{b}}}
\newcommand{\bhphi}{\Hat{\bm{\phi}}}
\newcommand{\bhnu}{\Hat{\bm{\nu}}}
\newcommand{\bhu}{\Hat{\mathbf{u}}}
\newcommand{\bbq}{\overline{\mathbf{q}}}
\newcommand{\bbf}{\overline{\mathbf{f}}}
\newcommand{\bbs}{\overline{\mathbf{s}}}
\newcommand{\bbr}{\overline{\mathbf{\br}}}
\newcommand{\bbh}{\overline{\mathbf{\bh}}}
\newcommand{\inv}[1]{\frac{1}{#1}}
\newcommand{\normDensity}{\mathcal{N}}
\newcommand{\expectation}{\mathbb{E}}
\title{Divergence Estimation in Message Passing algorithms}
\author{Nikolajs Skuratovs, 
Michael Davies, ~\IEEEmembership{Fellow IEEE}\thanks{This work was supported by the ERC project C-SENSE (ERC-ADG-2015-694888). MD is also supported by a Royal Society Wolfson Research Merit Award.}}
\begin{document}

\maketitle

\begin{abstract}
    Many modern imaging applications can be modeled as compressed sensing linear inverse problems. When the measurement operator involved in the inverse problem is sufficiently random, denoising Scalable Message Passing (SMP) algorithms have a potential to demonstrate high efficiency in recovering compressed data. One of the key components enabling SMP to achieve fast convergence, stability and predictable dynamics is the Onsager correction that must be updated at each iteration of the algorithm. This correction involves the denoiser's divergence  that is traditionally estimated via the Black-Box Monte Carlo (BB-MC) method \cite{MC-divergence}. While the BB-MC method demonstrates satisfying accuracy of estimation, it requires executing the denoiser additional times at each iteration and might lead to a substantial increase in computational cost of the SMP algorithms. In this work we develop two Large System Limit models of the Onsager correction for denoisers operating within SMP algorithms and use these models to propose two practical classes of divergence estimators that require no additional executions of the denoiser and demonstrate similar or superior correction compared to the BB-MC method. 
\end{abstract}

\begin{IEEEkeywords}
    Message Passing, Divergence Estimation, Denoiser, Onsager Correction, Expectation Propagation 
\end{IEEEkeywords}

\section{Introduction}

In this work we consider a particular sub-problem that arises in certain iterative methods designed to recover a signal $\bx \in \mathbb{R}^N$ from a set of linear measurements 

\noindent
\begin{equation}
    \by = \bA \bx + \bw
    \label{eq:y_measurements}
\end{equation}

\noindent
where $\by \in \mathbb{R}^M$ is the set of measurements, $\bw \in \mathbb{R}^M$  is a zero-mean i.i.d. Gaussian noise vector $\bw \sim \normDensity(0,v_w \bI_M)$ and $\bA \in \mathbb{R}^{M \times N}$ is a measurement matrix that is assumed to be available. We consider the large scale compressed sensing scenario $M < N$ with a subsampling factor $\delta = \frac{M}{N} = O(1)$. 

While there are many first-order iterative methods for recovering $\bx$ from the set of measurement \eqref{eq:y_measurements} including \cite{Matching_Pursuits}, \cite{Orthogonal_Matching_Pursuits}, \cite{Hard-thresholding}, \cite{ISTA} and many others, in this work we focus on the family of \textit{Scalable Message Passing (SMP)} algorithms that includes  Approximate Message Passing (AMP) \cite{AMP}, Orthogonal AMP (OAMP) \cite{OAMP},  Vector AMP (VAMP) \cite{VAMP}, Conjugate Gradient VAMP (CG-VAMP) \cite{OurSecondPaper}, \cite{CG_EP}, \cite{OurPaper}, Warm-Started CG-VAMP (WS-CG-VAMP) \cite{OurSecondPaper}, Convolutional AMP (CAMP) \cite{CAMP} and others. When the measurement operator $\bA$ comes from a certain family of random matrices, which may be different for each example of SMP, these algorithms demonstrate high per-iteration improvement and stable and predictable dynamics. Additionally, it is evidenced that SMP algorithms can recover complex signals like natural images by employing powerful Plug-and-Play (PnP) denoisers like BM3D \cite{BM3D}, Non-Local Means \cite{NLM}, Denoising CNN \cite{NS-VAMP} and others, and demonstrate State-of-The-Art performance for certain examples of $\bA$ \cite{D-AMP}. 

On a general level, an SMP algorithm is an iterative method with a linear step followed by a denoising step. It can be shown \cite{UnifiedSE}, \cite{NS-VAMP}, \cite{SE_AMP}, \cite{CAMP} that one can be flexible with the choice of denoisers in SMP as long as the key ingredient, the divergence of the denoiser at each iteration, can be computed to form a so-called \textit{Onsager Correction} for the denoiser. In the literature on SMP algorithms \cite{D-AMP}, \cite{OurSecondPaper}, \cite{MRI_VAMP_2020}, \cite{D-OAMP}, \cite{D-VAMP}, \cite{NS-VAMP} and others, the suggested method for computing the divergence of a PnP denoiser is the Black-Box Monte Carlo (BB-MC) method \cite{MC-divergence}. The BB-MC method computes an estimate of the divergence of a function $\bff(\bx)$ that admits a well-defined second-order Taylor expansion by executing this function again at point $\bx + \epsilon \bn$ with the scalar $\epsilon$ approaching zero and where $\bn$ is a zero-mean i.i.d. random vector with unit variance and finite higher order moments. Then one can show that the divergence $\inv{N} \nabla_{\bx} \cdot \bff(\bx) = \inv{N} \sum_{i = 1}^N \frac{\bff(\bx)}{\partial x_i}$ of $\bff$ is equivalent to \cite{MC-divergence}

\noindent
\begin{equation}
    \inv{N} \nabla_{\bx} \cdot \bff(\bx) = \lim_{\epsilon \rightarrow 0} \expectation_{\epsilon} \Bigg[  \bn^T \bigg( \frac{\bff(\bx + \epsilon \bn) - \bff(\bx)}{\epsilon} \bigg) \Bigg]
    \label{eq:black_box_div_estimator}
\end{equation}

\noindent
To approximate the expectation operator in \eqref{eq:black_box_div_estimator}, one can use MC trials and implement the inner product inside of the expectation multiple times and average the results. However, given that the function $\bff$ is of the appropriate class and the dimension of $\bx$ is sufficiently large, one can obtain a satisfactory accuracy of divergence estimation with only a single trial.

While this approach provides a practical method for the divergence estimation and leads to stable dynamics of SMP algorithms, it has two drawbacks. First, it assumes that the chosen denoiser $\bff$ admits a well-defined second-order Taylor expansion, which is not the case for denoisers like BM3D and for ReLU based CNNs \cite{NS-VAMP} that involves non-linear operations like thresholding as subroutines. This violation might result in unsatisfactory accuracy of the estimation and lead to the necessity for additional MC trials. Additionally one can no longer use too small values of $\epsilon$ as in this case the estimator \eqref{eq:black_box_div_estimator} becomes unstable \cite{MC-divergence}, which leads to the necessity to tune this parameter very carefully and, to the best of our knowledge, there is no rigorous method for this. 

The second problem with the BB-MC method is that it requires executing the denoiser once again or even multiple times, if one needs higher accuracy of the estimate. When the dimension of the inverse problem is large, as in modern computational imaging problems, executing powerful denoisers can be the dominant cost of the algorithm and it is desired to execute it as infrequently as possible.

In this work, we leverage the properties of the SMP algorithms to develop rigorous Large System Limit models for the divergence of a denoiser. We use the developed models to propose two divergence estimation techniques applicable for any SMP algorithm following the State Evolution (SE), although we also demonstrate numerically that the methods are stable and accurate even for algorithms violating such dynamics. The first method allows estimating the divergence of the denoiser at a cost dominated by one inner-product with the output of the denoiser and works as a black-box tool. Such a fast estimator can be used to optimize the denoiser using the SURE technique \cite{SURE} to achieve the optimal performance of SMP algorithms. Yet, this method is less robust with respect to the decreased dimensionality of the inverse problem $N$ and $M$. Thus, we propose the second method that demonstrates stable and accurate performance even for dimensions $N$ of order $10^4$ and leads to superior performance of SMP algorithms compared to the case where a BB-MC estimator is used. The cost of the second method is dominated by one matrix-vector product with the measurement operator $\bA$. We numerically compare the proposed methods against the BB-MC method in the context of AMP, MF-OAMP, CG-VAMP and WS-CG-VAMP used for recovering natural images from compressed measurements.

\subsection{Notations}

We use roman $v$ for scalars, small boldface $\bv$ for vectors and capital boldface $\bV$ for matrices. We frequently use the identity matrix $\bI_N$ with a subscript to define that this identity matrix is of dimension $N$ or without a subscript where the dimensionality is clear from the context. We define $Tr\big\{ \bM \big\}$ to be the trace of a matrix $\bM$ and $\kappa(\bM)$ to be the condition number of $\bM$. We use $||\cdot||_k$ to define the $l_k$ norm and $||\cdot||$ specifically for $l_2$ norm. The divergence of a function $f(\bx)$ with respect to the vector $\bx$ is defined as $\nabla_{\bx} \cdot f(\bx) = \sum_{i = 1}^N \frac{f(\bx)}{\partial x_i}$. By writing $q(x) = \normDensity(\bmm, \Sigma)$ we mean that the density $q(x)$ is normal with mean vector $\bmm$ and covariance matrix $\Sigma$. We reserve the letter $t$ for the outer-loop iteration number of the EP- and VAMP-based algorithms and use the letter $i$ for the iteration number of the Conjugate Gradient algorithm. We mark an estimate of some stochastic quantities such as $v$ with a tilde, i.e. $\Tilde{v}$. Lastly, we use notation \textit{i.i.d.} for a shorthand of \textit{independent and identically distributed}.

\section{Background on SMP algorithms}

In this section we briefly review the structure and the main properties of SMP algorithms to set up the context of the paper. For more details, please refer to \cite{OurSecondPaper}, \cite{NS-VAMP}, \cite{UnifiedSE}, \cite{D-AMP}, \cite{OAMP} and the references therein. On a general level, any SMP algorithm can be written as an iterative process involving the following two steps

\noindent
\begin{align}
    LB: \quad & \br_t = \bff_L \big( \by, \bs_t, \bs_{t-1}, ..., \bs_0 \big) \label{eq:LB} \\
    DB: \quad & \bs_{t+1} = \bff_D \big( \br_t \big) \label{eq:DB}
\end{align}

\noindent
with the respective oracle error vectors

\noindent
\begin{gather}
    \bh_t = \br_t - \bx \label{eq:h_t} \\
    \bq_t = \bs_t - \bx \label{eq:q_t}
\end{gather}

\noindent
In \eqref{eq:LB} and \eqref{eq:DB}, the notations $LB$ and $DB$ stand for the ``linear block'' and the ``denoising block'', and, depending on the choice of SMP algorithms, one has to construct the function $\bff_L$ and $\bff_D$ in different ways. Next we split the general structure \eqref{eq:LB}-\eqref{eq:DB} into two groups and discuss how the functions $\bff_L$ and $\bff_D$ are designed in each group.

\subsection{OAMP-based algorithms} \label{sec:VAMP-based algorithms}

The first group of algorithms, which we refer as OAMP-based algorithms, constructs $\bff_D$ and $\bff_L$ such that their output errors are asymptotically orthogonal to their input errors \cite{OAMP}. This group of algorithms includes Match Filter OAMP (MF-OAMP) \cite{OAMP}, \cite{D-OAMP}, VAMP \cite{NS-VAMP}, \cite{EP_Keigo}, CG-VAMP \cite{CG_EP}, \cite{OurSecondPaper}, \cite{OurPaper}, WS-CG-VAMP \cite{OurSecondPaper} and others. Here the structure of the function $\bff_D$ from \eqref{eq:DB} is

\noindent
\begin{equation}
    \bff_D(\br_t) = C_t \Big( \bg_D(\br_t) - \alpha_t \br_t \Big) \label{eq:f_d_VAMP}
\end{equation}

\noindent
where $\alpha_t$ is the divergence of a denoiser $\bg_D(\br_t)$ estimating $\bx$ from the intrinsic measurement $\br_t = \bx + \bh_t$ with $\bh_t$ modeled as zero-mean i.i.d. Gaussian with variance $v_{h_t}$ and independent of $\bx$. While in VAMP \cite{VAMP}, CG-VAMP and WS-CG-VAMP \cite{OurSecondPaper} it is suggested to use the scalar $C_t = (1 - \alpha_t)^{-1}$, in the work on Denoising OAMP \cite{D-OAMP} it is proposed to choose $C_t$ that minimizes the Mean Squared Error (MSE) of the resulting denoiser $\bff_D(\br_t)$. Lastly, the term $\alpha_t \br_t$ is the Onsager correction of the denoiser. This term ensures asymptotic orthogonality of the errors $\bq_t$ and $\bh_t$, leads to stable and efficient operation of the algorithm and makes the analysis of the algorithm analytically tractable \cite{UnifiedSE}, \cite{NS-VAMP}, \cite{AMP_convergence_general_A}.

In the MF-OAMP, VAMP and CG-VAMP algorithms, the linear step is dependent only on the last state $\bs_t$ and has the form

\noindent
\begin{equation}
    \bff_L \big( \by, \bs_t \big) = \bs_t + \gamma_t^{-1} \bA^T \bF_t \big( \by - \bA \bs_t \big) \label{eq:f_l_VAMP}
\end{equation}

\noindent
for different choices of $\bF_t$. In MF-OAMP, the matrix $\bF_t$ corresponds to the identity $\bI_M$ \cite{OAMP}, while in VAMP\footnote{In the original work presenting VAMP, it is used a different form of the linear update that can be related to \eqref{eq:f_l_VAMP} through the Woodbury transformation. For more details, please refer to \textit{Appendix B} in \cite{EP_Keigo}.} \cite{NS-VAMP}, \cite{EP_Keigo} the vector $\bF_t \big( \by - \bA \bs_t \big)$ is the solution to the system of linear equations 

\noindent
\begin{equation}
    \bW_t \bmu_t = \bz_t \label{eq:SLE}
\end{equation}

\noindent
for $\mu_t$, where

\noindent
\begin{equation}
    \bz_t = \by - \bA \bs_t
\end{equation}

\noindent
and 

\noindent
\begin{equation}
    \bW_t = v_w \bI_M + v_{q_t} \bA \bA^T \label{eq:W_t}
\end{equation}

\noindent
with $v_{q_t}$ modeling the variance of the error vector $\bq_t$. In CG-VAMP \cite{CG_EP}, \cite{OurSecondPaper} the exact solutions $\bmu_t$ is approximated with $i$ iterations of the zero-initialized Conjugate Gradient (CG) algorithm\footnote{In this work we used the CG algorithm as in \cite{CG_EP}.} that produces an output $\bmu_t^i = \bF_t^i \big( \by - \bA \bs_t \big)$. In WS-CG-VAMP \cite{OurSecondPaper} the CG algorithm uses multiple non-zero initializations that lead to the improved reconstruction performance of the algorithm, but dependence of $\bff_L$ on the whole history of $\bs_{\tau}$, $\tau = 0,...,t$ and loss of the 1D SE property.

The form of the scalar $\gamma_t$ in \eqref{eq:f_l_VAMP} depends on the form of the function $\bF_t$. For MF-OAMP, it is equal to $1$ \cite{OAMP}, while in VAMP it corresponds to \cite{EP_Keigo}

\noindent
\begin{equation}
    \gamma_t = \lim_{N \rightarrow \infty } \inv{N} Tr \big\{ \bA^T \bW_t^{-1} \bA  \big\} \label{eq:gamma_VAMP}
\end{equation}

\noindent
For CG-VAMP, where $\bF_t^i$ represents the CG algorithm with $i$ iterations, one can use Theorem 1 from \cite{OurSecondPaper} to estimate $\gamma_t$ iteratively using only the information generated by CG with the overall cost dominated by $i$ inner-products of $M$ dimensional vectors. The scalar $\gamma_t$ plays a similar role to the divergence of the denoiser $\alpha_t$ in the denoising block and ensures the asymptotic orthogonality between the error vectors $\bh_t$ and $\bq_t$. For OAMP-based algorithms with a multi-dimensional SE as WS-CG-VAMP \cite{OurSecondPaper}, there are multiple scalars $\gamma_t^k$ and the linear update \eqref{eq:f_l_VAMP} involves multiple corrections terms. For more details, please refer to \cite{OurSecondPaper} and \cite{UnifiedSE}.

\subsection{AMP-based algorithms} \label{sec:AMP-based algorithms}

For AMP-based algorithms including AMP \cite{AMP} and CAMP \cite{CAMP}, the structure of the function $\bff_D$ from \eqref{eq:DB} is

\noindent
\begin{equation}
    \bff_D(\br_t) = \bg_D(\br_t) \label{eq:f_d_AMP}
\end{equation}

\noindent
Although here the denoiser is not corrected, in AMP it is compensated by designing the function $\bff_L$ to have the following form

\noindent
\begin{equation}
    \bff_L \big( \by, \bs_t, \bs_{t-1}, ..., \bs_0 \big) = \bs_t + \bA^T \bz_t \label{eq:f_l_AMP}
\end{equation}

\noindent
with $\bz_t$ updated as

\noindent
\begin{equation}
    \bz_t = \by - \bA \bs_t - \alpha_{t-1} \delta^{-1} \bz_{t-1} \label{eq:z_t_AMP}
\end{equation}

\noindent
where $\delta = \frac{M}{N}$ is the subsampling factor and $\alpha_{t-1}$ is the divergence of $\bg_D(\br_{t-1})$. In \eqref{eq:z_t_AMP} the term $\alpha_{t-1} \delta^{-1} \bz_{t-1}$ is the Onsager correction that accounts for both the linear and the denoising steps at once. In CAMP, the function \eqref{eq:f_l_AMP} has the same structure, but the update for $\bz_t$ involves a more complex Onsager correction that includes $t-1$ divergences $\alpha_{\tau}, \tau = 0...t-1$ \cite{CAMP}.

\subsection{Error dynamics of Message Passing algorithms} \label{sec:error_dynamics}

Given that the linear block and the denoising block in AMP- and OAMP-based algorithms are designed as discussed above, the dynamics of the error vector $\bh_t$ and $\bq_t$ from \eqref{eq:h_t} and \eqref{eq:q_t} can be rigorously defined under the following assumptions \cite{UnifiedSE}

\textbf{Assumption 1}: The dimensions of the signal model $N$ and $M$ approach infinity with a fixed ratio $\delta = \frac{M}{N} = O(1)$

\noindent
\textbf{Assumption 2}:
\begin{enumerate}[a)]
    \item For AMP: The measurement matrix $\bA$ is orthogonally invariant, such that in the SVD of $\bA = \bU \bS \bV^T$, the matrices $\bU$ and $\bV$ are independent of other random terms and are uniformly distributed on the set of orthogonal matrices, while the matrix $\bS^T \bS$ has the Limiting Eigenvalue Distribution \cite{rand_mat_methods_book} with the first $2t$ moments equal to the first $2t$ moments of  Mar\u{c}henko-Pastur (MP) distribution \cite{rand_mat_methods_book}, where $t$ is the maximum number of iterations of AMP.
    
    \item For OAMP-based and CAMP: The same condition on $\bV$, while $\bU$ is allowed to be any orthogonal matrix and the matrix $\bS^T \bS$ is allowed to have any Limiting Eigenvalue Distribution with compact support. For those cases, we say $\bA$ is Right-Orthogonally invariant.
\end{enumerate}

\textbf{Assumption 3}: The denoiser $\bg_D$ is uniformly Lipschitz so that the sequence of functions $\bg_D: \mathbb{R}^N \mapsto \mathbb{R}^N$ indexed by $N$ are Lipschitz continuous with a Lipschitz constant $L_N < \infty$ as $N \rightarrow \infty$ \cite{NS-VAMP}, \cite{AMP_SE_non_separable}. Additionally, we assume the sequences of the following inner-products are almost surely finite as $N \rightarrow \infty$ \cite{NS-VAMP}

\noindent
\begin{gather*}
    \lim_{N \rightarrow \infty} \inv{N} \bg_D(\bx + \bd_1)^T \bg_D(\bx + \bd_2), \; \; \lim_{N \rightarrow \infty} \inv{N} \bx^T \bg_D(\bx + \bd_1), \\ 
    \lim_{N \rightarrow \infty} \inv{N} \bd_1^T \bg_D(\bx + \bd_2), \quad \lim_{N \rightarrow \infty} \inv{N} \bx^T \bz_1, \quad  \lim_{N \rightarrow \infty} \inv{N} ||\bx||^2 
\end{gather*}

\noindent
where $\bd_1, \bd_2 \in \mathbb{R}^N$ with $( \bd_{1,n}, \bd_{2,n} ) \sim \normDensity(\bzero, C)$ for some positive definite $C \in \mathbb{R}^2$.

\noindent
Additionally, without the loss of generality we let $\bA$ be normalized so that $\inv{N} Tr \big\{ \bA \bA^T \big\} = 1$. Under these assumptions, both $\bq_t$ and $\bh_t$ can be modeled as zero-mean random vectors with variances

\noindent
\begin{gather*}
    \lim_{N \rightarrow \infty} \inv{N} ||\bq_t||^2 \overset{a.s.}{=} v_{q_t} \\
    \lim_{N \rightarrow \infty} \inv{N} ||\bh_t||^2 \overset{a.s.}{=} v_{h_t}
\end{gather*}

\noindent
Moreover, by defining a vector $\bb_t = \bV^T \bq_t$ we have that \cite{UnifiedSE}, \cite{VAMP}

\noindent
\begin{gather}
    \bh_t \sim \normDensity(0, v_{h_t} \bI_N) \label{eq:h_t_gaussian} \\
    \bb_t \sim \normDensity(0, v_{q_t} \bI_N) \label{eq:b_t_gaussian}
\end{gather}

\noindent
where 

\noindent
\begin{gather}
    \lim_{N \rightarrow \infty} \inv{N} \bh_{\tau}^T \bq_{k} \overset{a.s.}{=} 0  \label{eq:h_q_independence} \\
    \lim_{N \rightarrow \infty} \inv{N} \bw^T \bD \bb_{\tau} \overset{a.s.}{=} 0 \label{eq:w_Aq_independence}
\end{gather}

\noindent
for any $\tau, k \leq t$ and with $\bq_0 = - \bx$ and for any matrix $\bD \in \mathbb{R}^{M \times N}$ whose limiting spectral distribution has finite support. 

Lastly, for AMP, VAMP, MF-OAMP and CG-VAMP, it was shown that there exists a 1D State Evolution (SE) that defines the dynamics of the magnitude of the error propagated in the SMP algorithms

\noindent
\begin{equation}
    v_{h_{t+1}} = SE_{t+1}(v_{h_t})
\end{equation}

\noindent
The form of the function $SE_{t+1}$ depends on the chosen SMP algorithm, but is independent of a particular realization of the true signal $\bx$ \cite{AMP_convergence_general_A}, \cite{NS-VAMP}, \cite{UnifiedSE}. A similar evolution can be defined for WS-CG-VAMP and CAMP, but it would be dependent on the whole set $\{ v_{h_{\tau}} \}_{\tau=0}^t$ instead of a single $v_{h_t}$ \cite{OurSecondPaper}, \cite{CAMP}. The SE provides the means of optimizing the functions $\bff_L$ and $\bff_D$ to obtain the optimal performance of the algorithm and provides a theoretical tool to study the stability and efficiency of SMP algorithms. In particular, the SE was used in \cite{SE_AMP}, \cite{VAMP}, \cite{CAMP} to show that AMP, VAMP and CAMP can achieve Bayes optimal reconstruction under Assumptions 1-3 given the denoiser $\bg_D$ is Bayes optimal and the subsampling factor $\delta$ is above a certain threshold.

\section{Efficient estimation of the divergence in SMP algorithms} \label{sec:div_est_introduction}

In SMP algorithms, one of the key ingredients that ensures stable, efficient and predictable dynamics is the Onsager correction that involves the divergence $\alpha_t$ of the denoiser $\bg_D(\br_t)$. In this section we develop two theoretical models for the divergence $\alpha_t$ in SMP and propose the associated estimators that can be computed using only the observed data in the algorithm and do not require additional executions of the denoiser. We begin with an intuition behind our methods and then move to the formal results.

\subsection{Intuition}

In the center of the developed techniques are the following parametrized denoiser and its oracle error

\noindent
\begin{gather}
    \bhf(\br_t, \halfa) = \bg_D(\br_t) - \halfa \br_t \label{eq:bhf} \\
    \bhq_{t+1}(\halfa) = \bhf(\br_t, \halfa) - \bx \label{eq:bhq}
\end{gather}

\noindent
where $\halfa$ is a scalar parameter. Note that when $\halfa = \alpha_t$, \eqref{eq:bhf} is an instance of \eqref{eq:f_d_VAMP} with $C_t = 1$ and therefore $\bhq_{t+1}(\alpha_t)$ follows the asymptotic identities \eqref{eq:h_t_gaussian} - \eqref{eq:w_Aq_independence}. However, here we stress that the parametrized denoiser \eqref{eq:bhf} is used only to prove certain steps, while in the algorithm we use either \eqref{eq:f_d_AMP} or \eqref{eq:f_d_VAMP}, depending on the chosen SMP algorithm.

The idea behind our method is to seek a function $E(\halfa)$ that has a root at $\alpha_t$ and we could solve for it. A straightforward example would be 

\noindent
\begin{equation}
    E(\halfa) = \inv{N} \bh_t^T \bhq_{t+1}(\halfa) \label{eq:naive_E}
\end{equation}

\noindent
Then, one could recover $\alpha_t$ by solving $E(\halfa) = 0$. However, this example of $E(\halfa)$ cannot be implemented in practice since it is explicitly formulated in terms of the error vectors that are not available. In this work we use the observed quantities in the algorithm to construct two types of practical functions $E(\halfa)$ that equated to zero can be used to estimate $\alpha_t$.

\subsection{Algebraic divergence estimator} \label{sec:algebraic_divergence_estimator}

The first class of estimators we propose is a practical extension of the naive and unavailable estimator \eqref{eq:naive_E}. To derive the method, rewrite $\bhq_{t+1} = \bhf(\br_t, \halfa) - \bx$ and $\bh_t = \br_t - \bx$ and consider the asymptotic regime $N \rightarrow \infty$ to obtain

\noindent
\begin{equation*}
    \lim_{N \rightarrow \infty} \inv{N} (\br_t - \bx)^T (\bhf(\br_t, \halfa) - \bx) \overset{a.s.}{=} \lim_{N \rightarrow \infty} \inv{N} (\br_t - \bx)^T \bhf(\br_t, \halfa) 
\end{equation*}

\noindent
where we used \eqref{eq:h_q_independence} to show that  $\lim_{N \rightarrow \infty} \inv{N} (\br_t - \bx)^T \bx \overset{a.s.}{=} 0$. Still, the above equation involves $\bx$ explicitly, which can be resolved by considering the difference $\br_t - \br_{t-1}$ instead of $\br_t$ alone

\noindent
\begin{align}
    &\lim_{N \rightarrow \infty} \inv{N} (\br_t - \br_{t-1})^T (\bhf(\br_t, \halfa) - \bx) \nonumber\\
    &= \lim_{N \rightarrow \infty} \inv{N} (\bh_t + \bx - \bh_{t-1} - \bx)^T (\bhf(\br_t, \halfa) - \bx) \nonumber\\
    &= \lim_{N \rightarrow \infty} \inv{N} (\bh_t - \bh_{t-1})^T (\bhf(\br_t, \halfa) - \bx) \nonumber\\
    &\overset{a.s.}{=} \lim_{N \rightarrow \infty} \inv{N} (\bh_t - \bh_{t-1})^T \bhf(\br_t, \halfa) \label{eq:first_E}
\end{align}

\noindent
This result suggests that if we define a scalar function

\noindent
\begin{equation}
    E_1(\halfa) = (\br_t - \br_{t-1})^T \bhf(\br_t, \halfa) \label{eq:E_1}
\end{equation}

\noindent
then we can recover such $\halfa$ that ensures the orthogonality between $\bhq_{t+1}(\halfa)$ and $\bh_t$ and $\bh_{t-1}$. The following theorem summarizes and generalizes this idea.

\begin{theorem} \label{th:closed_form_estimator}
    
    Given an SMP algorithm \eqref{eq:LB}-\eqref{eq:DB} with a denoiser $\bg_D(\br_t)$, under Assumptions 1-3 and assuming \eqref{eq:h_t_gaussian} - \eqref{eq:w_Aq_independence} hold up to iteration $t$, the divergence $\alpha_t$ of $\bg_D(\br_t)$ almost surely converges to
    
    \noindent
    \begin{equation}
        \lim_{N \rightarrow \infty} \alpha_t \overset{a.s.}{=} \frac{(\br_t - \bbr_t)^T \bg_D(\br_t)}{(\br_t - \bbr_t)^T \br_t} \label{eq:closed_form_divergence_estimator}
    \end{equation}
    
    \noindent
    when 
    
    \noindent
    \begin{equation}
        \lim_{N \rightarrow \infty} (\bh_t - \bbh_t)^T \bh_t \overset{a.s.}{\neq} 0
    \end{equation}
    
    \noindent
    where
    
    \noindent
    \begin{equation}
        \bbr_t = \sum_{\tau = 0}^{t-1} k_{\tau}^t \br_{\tau} \quad \quad \bbh_t = \sum_{\tau = 0}^{t-1} k_{\tau}^t \bh_{\tau} \label{eq:bbr_bbh}
    \end{equation}
    
    \noindent
    with scalar weights $\sum_{\tau = 0}^{t-1} k_{\tau}^t = 1$.
    
\end{theorem}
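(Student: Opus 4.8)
The plan is to convert the heuristic construction \eqref{eq:E_1} into an exact affine equation in $\halfa$ whose unique root is $\alpha_t$, and then solve it in closed form. Define the generalized residual
\begin{equation*}
    E(\halfa) = \inv{N}(\br_t - \bbr_t)^T \bhf(\br_t, \halfa) = \inv{N}(\br_t - \bbr_t)^T\big(\bg_D(\br_t) - \halfa\,\br_t\big),
\end{equation*}
which is manifestly affine in $\halfa$ with slope $-\inv{N}(\br_t-\bbr_t)^T\br_t$. The first step is the algebraic identity that removes $\bx$: since the weights satisfy $\sum_{\tau=0}^{t-1}k_\tau^t = 1$, the common signal component cancels,
\begin{equation*}
    \br_t - \bbr_t = (\bh_t + \bx) - \sum_{\tau=0}^{t-1}k_\tau^t(\bh_\tau + \bx) = \bh_t - \bbh_t,
\end{equation*}
so that $E(\halfa) = \inv{N}(\bh_t - \bbh_t)^T\bhf(\br_t,\halfa)$ is expressed purely through oracle errors, exactly as in the generalization of \eqref{eq:first_E}.

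Next I would evaluate $E$ at the true divergence. With $C_t = 1$ the corrected denoiser $\bhf(\br_t,\alpha_t)$ is an instance of \eqref{eq:f_d_VAMP}, so by the remark following \eqref{eq:bhq} its error $\bhq_{t+1}(\alpha_t)$ obeys the asymptotic identities \eqref{eq:h_t_gaussian}--\eqref{eq:w_Aq_independence}; in particular $\lim_{N\to\infty}\inv{N}\bh_\tau^T\bhq_{t+1}(\alpha_t)\overset{a.s.}{=}0$ for every $\tau\le t$. Together with $\lim_{N\to\infty}\inv{N}\bh_\tau^T\bx\overset{a.s.}{=}0$, which follows from \eqref{eq:h_q_independence} and $\bq_0 = -\bx$, this gives
\begin{align*}
    \lim_{N\to\infty}E(\alpha_t) &= \lim_{N\to\infty}\inv{N}(\bh_t - \bbh_t)^T\big(\bhf(\br_t,\alpha_t) - \bx\big) \\
    &= \lim_{N\to\infty}\inv{N}(\bh_t - \bbh_t)^T\bhq_{t+1}(\alpha_t) \overset{a.s.}{=}0,
\end{align*}
so $\alpha_t$ is a root of the limiting affine function. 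Solving the linear relation $\alpha_t\,(\br_t - \bbr_t)^T\br_t = (\br_t - \bbr_t)^T\bg_D(\br_t)$ and noting that the factor $\inv{N}$ cancels in the ratio yields \eqref{eq:closed_form_divergence_estimator} directly.

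Finally I would address well-posedness: the division is legitimate precisely when the slope is asymptotically nonzero. Substituting $\br_t - \bbr_t = \bh_t - \bbh_t$ and $\br_t = \bx + \bh_t$ and dropping the vanishing $\bx$ term reduces the denominator to $\lim_{N\to\infty}\inv{N}(\bh_t - \bbh_t)^T\bh_t$, which is exactly the non-degeneracy hypothesis stated in the theorem; this also shows the affine $E$ has a single root, so the recovered $\halfa$ is unique. I expect the main obstacle to be the second step: rigorously justifying that $\bhq_{t+1}(\alpha_t)$ inherits orthogonality against \emph{all} past errors $\bh_0,\dots,\bh_t$ rather than merely $\bh_t$, and that the almost-sure limits of every inner product above actually exist. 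The former rests on the inductive assumption that \eqref{eq:h_t_gaussian}--\eqref{eq:w_Aq_independence} hold up to iteration $t$, while the latter is supplied term-by-term by the bounded-inner-product and uniform-Lipschitz conditions of Assumption 3; I would check that each inner product appearing in the argument is covered by that list before passing to the limit.
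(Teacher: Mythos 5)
Your proposal is correct, and its skeleton is the same as the paper's proof: cancel $\bx$ via the weight normalization so that $\br_t - \bbr_t = \bh_t - \bbh_t$, reduce the denominator to $\inv{N}(\bh_t - \bbh_t)^T\bh_t$ using $\inv{N}\bh_\tau^T\bx \rightarrow 0$ (from \eqref{eq:h_q_independence} with $\bq_0 = -\bx$), and divide under the stated non-degeneracy condition. The one place you deviate is the justification of the central step. The paper applies Stein's Lemma directly to $\inv{N}(\bh_t-\bbh_t)^T\bg_D(\br_t)$, obtaining $\alpha_t \inv{N}(\bh_t-\bbh_t)^T\bh_t$; this uses only the Gaussianity \eqref{eq:h_t_gaussian} of the errors $\bh_\tau$, $\tau \le t$, which is exactly what the theorem's hypotheses grant. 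You instead evaluate $E(\halfa)$ at $\halfa=\alpha_t$ and invoke the remark following \eqref{eq:bhq}, namely that $\bhq_{t+1}(\alpha_t)$, being the error of an instance of \eqref{eq:f_d_VAMP}, inherits the identities \eqref{eq:h_t_gaussian}--\eqref{eq:w_Aq_independence}, in particular orthogonality to all $\bh_\tau$, $\tau\le t$. The two statements are algebraically equivalent once $\inv{N}\bh_\tau^T\bx \rightarrow 0$ is in hand, but note the logical order: orthogonality of the error at iteration $t+1$ is the \emph{conclusion} of the SE induction step, not part of the hypothesis ``identities hold up to iteration $t$,'' and its proof in the SE literature is precisely the Stein's-Lemma computation the paper performs. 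So your route is valid if the paper's remark is accepted as a black box (you correctly flag this as the main obstacle yourself), whereas the paper's direct use of Stein's Lemma keeps the argument within the stated hypotheses and is what you would write once you unpack the inheritance claim. A last shared subtlety: Stein's Lemma across iterations needs \emph{joint} Gaussianity of $(\bh_\tau,\bh_t)$, not just the marginal statement \eqref{eq:h_t_gaussian}; both your argument and the paper's lean on this implicitly.
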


\begin{proof}
    First, due to the normalization of the weights $k_{\tau}^t$, we have that $\bbr_t - \bx = \bbh_t$. Then, consider the inner-product

    \noindent
    \begin{equation}
        \inv{N} (\br_t - \bbr_t)^T \bg_D(\br_t) = \inv{N} (\bh_t - \bbh_t)^T \bg_D(\br_t) \label{eq:r_br_g}
    \end{equation}
    
    \noindent
    where $\bbr_t$ and $\bbh_t$ are as in \eqref{eq:bbr_bbh}. Next, using the Strong Low of Large Numbers, the Stein's Lemma \cite{SURE} and the definition of the error vectors $\bh_{\tau}$ from \eqref{eq:h_t_gaussian}, we can show that \eqref{eq:r_br_g} almost surely converges to
    
    \noindent
    \begin{equation*}
        \lim_{N \rightarrow \infty} \inv{N} (\bh_t - \bbh_t)^T \bg_D(\br_t) \overset{a.s.}{=} \lim_{N \rightarrow \infty} \alpha_t \inv{N} (\bh_t - \bbh_t)^T \bh_t 
    \end{equation*}
    
    \noindent
    Next, using \eqref{eq:h_q_independence}, we can show that
    
    \noindent
    \begin{equation}
        \lim_{N \rightarrow \infty} \inv{N} (\br - \bbr_t)^T \br_t \overset{a.s.}{=} \lim_{N \rightarrow \infty} \inv{N} (\bh - \bbh_t)^T \bh_t \label{theorem_denominator_identity}
    \end{equation}
    
    \noindent
    Then, assuming $\lim_{N \rightarrow \infty} \inv{N} (\bh_t - \bbh_t)^T \bh_t \overset{a.s.}{\neq} 0$, we can arrive at the identity for $\alpha_t$ as in \eqref{eq:closed_form_divergence_estimator} by dividing both sides in the last result by $\inv{N} (\bh_t - \bbh_t)^T \bh_t$.
\end{proof}

In the following, we refer the estimator based on \eqref{eq:closed_form_divergence_estimator} as an \textit{algebraic estimator}.

By equating \eqref{eq:E_1} to zero and solving for $\halfa$, one can show that the indication function $E_1$ leads to the algebraic estimator with $\bbr_t = \br_{t-1}$. While in the LSL \eqref{eq:closed_form_divergence_estimator} holds for any set of weights $k_{\tau}^t$ as long as the normalization is satisfied, in the finite dimensional case these weights are expected to affect the accuracy of estimation. When $N$ is finite, the asymptotic identities used to derive Theorem \ref{th:closed_form_estimator} are no longer exact and an additional error emerges. This error might be substantial in the case if, for example, we use $\bbr_t = \br_{t-1}$. In this case, the term $\inv{N} (\bh - \bbh_t)^T \bx$ assumed to be equal to zero in \eqref{theorem_denominator_identity} might have considerable magnitude due to the fact that the magnitude of $\bx$ remains the same throughout the algorithm and might significantly exceed the magnitude of $\bh_t$ and of $\bbh_t$. Then, any small alignment of these error vectors with $\bx$ would result in a substantial quantity that affects the accuracy of the LSL approximation \eqref{theorem_denominator_identity}. 

On the other hand, the finite dimensionality also implies that the asymptotic evolution model of $\bh_t$ is corrupted by error that accumulates as the algorithm progresses. One of the effects of this error is that the core identity used to prove Theorem \ref{th:closed_form_estimator}

\noindent
\begin{equation}
    \inv{N} \bh_{\tau}^T \bg_D(\br_t) = \alpha_t \inv{N} \bh_{\tau}^T \bh_t
\end{equation}

\noindent
becomes less accurate for finite $N$ as the difference between $t$ and $\tau$ increases. For this reason we might observe poor quality of divergence estimates if we use $\bbr_t = \br_0$. The detailed analysis of the optimal choice of weights $k_{\tau}^t$ is left for further study, while in this work we consider the cases $\bbr_t = \br_{t-1}$ and $\bbr_t = \br_0$. The important advantage of these two options is that the computational cost of the resulting algebraic estimator is dominated by the cost of two inner-products of $N$-dimensional vectors. Such a low cost allows one to efficiently tune the denoiser using the SURE technique \cite{SURE} to optimize the performance of the denoising block. Yet, as it will be demonstrated in the simulation section, theses types of the algebraic estimator are sensitive to finiteness of $N$ and $M$, and demonstrate satisfactory accuracy only for inverse problems of dimension of order $10^6$ and larger. Next, we present another estimation method that demonstrates much higher robustness at lower dimensions, discuss the pros and cons of both methods and suggest a strategy to leverage the advantages of both of them.

\subsection{Polynomial divergence estimator} \label{sec:poly_div_est}

In this section we present another way of constructing a practical indication function $E(\halfa)$ that has a zero at the divergence $\alpha_t$ of the denoiser $\bg_D(\br_t)$. To obtain such a function, consider a Stein's Unbiased Risk Estimate (SURE) \cite{SURE} for the parametrized denoiser $\bhf(\br_t,\halfa)$. Using \eqref{eq:bhf}, the Strong Law of Large Numbers and Stein's Lemma \cite{SURE}, we can show that the MSE $\lim_{N \rightarrow \infty} \inv{N} ||\bhf(\br_t, \halfa) - x||^2$ almost surely converges to

\noindent
\begin{align}
    \lim_{N \rightarrow \infty} &\inv{N} ||\bhf(\br_t, \halfa) - \bx||^2 \overset{a.s.}{=} \lim_{N \rightarrow \infty} \inv{N} ||\bhf(\br_t, \halfa)||^2 + v_x \nonumber\\
    &- \lim_{N \rightarrow \infty} 2 \inv{N} \br_t^T \bhf(\br_t, \halfa) + 2 (\alpha_t - \halfa) v_{h_t} \label{eq:bhf_SURE}
\end{align}

\noindent
where we defined $v_x \overset{a.s.}{=} \lim_{N \rightarrow \infty} ||\bx||^2$ to be the variance of $\bx$. Then, if we define a function 

\noindent
\begin{equation}
    J_1(\halfa) = \inv{N} ||\bhf(\br_t, \halfa)||^2 - 2 \inv{N} \br_t^T \bhf(\br_t, \halfa) + v_x
\end{equation}

\noindent
we have that $J_1(\alpha_t)$ is an unbiased estimator of the MSE for $\bhf(\br_t, \alpha_t)$ and the term $\eta_{t+1} = 2 (\alpha_t - \halfa) v_{h_t}$ is the risk's estimation error that arises when we choose $\halfa \neq \alpha_t$. At the same time, the same MSE can be obtained from a different estimator that leverages the SE identity \eqref{eq:b_t_gaussian} and \eqref{eq:w_Aq_independence}. Using the definition of the vectors $\by$ and $\bhq_{t+1}(\halfa)$, and the Strong Law of Large Numbers, we can show that

\noindent
\begin{align}
    &\lim_{N \rightarrow \infty} \inv{N} ||\by - \bA \bhf(\br_t,\halfa)||^2 = \lim_{N \rightarrow \infty} \inv{N} ||\bw - \bA \bhq_{t+1}(\halfa)||^2 \nonumber\\
    &\overset{a.s.}{=} \delta v_w + \lim_{N \rightarrow \infty} \inv{N} ||\bA \bhq_{t+1}(\halfa)||^2 - \frac{2}{N} \bw^T \bA \bhq_{t+1}(\halfa) \label{eq:SE_divergence_criterion}
\end{align}

\noindent
Here we can use the conditioning technique \cite{SE_AMP}, \cite{VAMP}, \cite{UnifiedSE} for the random matrix $\bA$ to study the interaction between $\bhq_{t+1}(\halfa)$ and $\bA$. In Appendix A we show that the first inner-product in \eqref{eq:SE_divergence_criterion} corresponds to

\noindent
\begin{equation*}
    \lim_{N \rightarrow \infty} \inv{N} ||\bA \bhq_{t+1}(\halfa)||^2 \overset{a.s.}{=} \lim_{N \rightarrow \infty} \inv{N} ||\bhq_{t+1}(\halfa)||^2 + \zeta_{t+1}(\halfa)
\end{equation*}

\noindent
where $\zeta_{t+1}(\halfa)$ depends on the whole history of vectors $(\bh_t, \bh_{t-1}, ..., \bh_0)$ and $(\bq_t, \bq_{t-1}, ..., \bq_0)$ when $\halfa \neq \alpha_t$ and almost surely converges to zero for $\halfa = \alpha_t$. Similarly, one can show that $\inv{N} \bw^T \bA \bhq_{t+1}(\halfa)$ almost surely converges to zero for $\halfa = \alpha_t$. Therefore one could define another MSE estimator

\noindent
\begin{equation}
    J_2(\halfa) = \inv{N} ||\by - \bA \bhf(\br_t,\halfa)||^2 - \delta v_w
\end{equation}

\noindent
that coincides with the exact MSE $\lim_{N \rightarrow \infty} \inv{N} ||\bhf(\br_t, \halfa) - x||^2$ for $\halfa = \alpha_t$, but involves another risk's estimation error $\beta_{t+1} = \zeta_{t+1}(\halfa) - 2 \inv{N} \bw^T \bA \bhq_{t+1}(\halfa)$ for $\halfa \neq \alpha_t$. 

The important observation about $J_1(\halfa)$ and $J_2(\halfa)$ is that their errors $\eta_{t+1}$ and $\beta_{t+1}$ behave differently for $\halfa \neq \alpha_t$ and both almost surely converge to zero for $\halfa = \alpha_t$. This implies that we could recover $\alpha_t$ by finding the appropriate root to

\noindent
\begin{equation}
    E_2(\halfa) = J_1(\halfa) - J_2(\halfa) = 0 \label{eq:E_1_minus_E_2}
\end{equation}

\noindent
The following theorem shows that \eqref{eq:E_1_minus_E_2} corresponds to a particular quadratic equation.

\begin{theorem} \label{th:two_quadratic_equations}
    Consider an example of SMP algorithms \eqref{eq:LB}-\eqref{eq:DB} with a denoiser $\bg_D(\br_t)$ that takes as an input the vector $\br_t = \bx + \bh_t$ with $\bh_t \sim \normDensity(\bzero, v_{h_t} \bI_N)$. Then, the equation \eqref{eq:E_1_minus_E_2} corresponds to the following quadratic equation
    
    \noindent
    \begin{gather}
        \lim_{N \rightarrow \infty} u_1 + u_2 \halfa + u_3 \halfa^2 = 0 \label{eq:quadratic_equation_1} 
    \end{gather}
    
    \noindent
    where scalar coefficients are defined as
    
    \noindent
    \begin{gather}
        u_1 = \inv{N} \Big( ||\bg_D - \br_t||^2 - v_{h_t} - \big( ||\bA \bg_D - \by||^2 - \delta v_w \big) \Big) \nonumber \\
        u_2 = \frac{2}{N} \Big( \big( \br_t - \bg_D \big)^T \br_t - \big( \by - \bA \bg_D \big)^T \bA \br_t \Big) \label{eq:u_2}\\
        u_3 = \inv{N} ||\br_t||^2 -\inv{N} ||\bA \br_t||^2 \label{eq:u_3}
    \end{gather}
    
    \noindent
    and $\bg_D$ is used as a shorthand for $\bg_D(\br_t)$. Additionally, under Assumptions 1-3 and assuming \eqref{eq:h_t_gaussian} - \eqref{eq:w_Aq_independence} hold up to iteration $t$, the divergence $\alpha_t$ of $\bg_D(\br_t)$ is a root to \eqref{eq:quadratic_equation_1}.
    
\end{theorem}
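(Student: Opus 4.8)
The plan is to split the statement into its two independent assertions: first the purely algebraic claim that $E_2(\halfa)=J_1(\halfa)-J_2(\halfa)$ collapses to the quadratic \eqref{eq:quadratic_equation_1} with the coefficients \eqref{eq:u_2}--\eqref{eq:u_3}, and second the claim that $\alpha_t$ is a root of that quadratic. For the first assertion I would simply substitute the affine parametrization $\bhf(\br_t,\halfa)=\bg_D(\br_t)-\halfa\br_t$ from \eqref{eq:bhf}, together with $\bA\bhf(\br_t,\halfa)=\bA\bg_D(\br_t)-\halfa\bA\br_t$, into the definitions of $J_1$ and $J_2$ and expand. Since $\bhf$ is affine in $\halfa$, each of $\|\bhf\|^2$ and $\|\by-\bA\bhf\|^2$ is \emph{exactly} a quadratic in $\halfa$ at finite $N$, so no approximation is needed to see that $E_2$ is quadratic. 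Collecting the $\halfa^2$ term gives $u_3=\inv{N}\|\br_t\|^2-\inv{N}\|\bA\br_t\|^2$ directly, and collecting the $\halfa$ term gives $u_2$ after grouping $\frac{2}{N}(\|\br_t\|^2-\bg_D^T\br_t)=\frac{2}{N}(\br_t-\bg_D)^T\br_t$ and handling the $J_2$ contribution identically.

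The only nontrivial point in the first assertion is the constant term, which carries the unobservable signal power $v_x$ that enters through $J_1$. Here I would invoke the large-system identity $v_x \overset{a.s.}{=}\lim_{N\to\infty}\inv{N}\|\br_t\|^2-v_{h_t}$, which follows from $\br_t=\bx+\bh_t$ together with the asymptotic orthogonality $\inv{N}\bx^T\bh_t\overset{a.s.}{\to}0$ implied by \eqref{eq:h_q_independence} (using $\bq_0=-\bx$). Substituting this relation eliminates both $v_x$ and the stray $\inv{N}\|\br_t\|^2$ that appears in the constant part of $J_1$, leaving exactly $u_1$. This is the single place where the limit in \eqref{eq:quadratic_equation_1} is genuinely used: at finite $N$ the constant term of $E_2$ differs from $u_1$ by the vanishing residual $\inv{N}\|\br_t\|^2-v_x-v_{h_t}$.

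For the second assertion I would show that $\halfa=\alpha_t$ forces $J_1$ and $J_2$ to coincide because both reduce to the true asymptotic MSE $\lim_{N\to\infty}\inv{N}\|\bhf(\br_t,\alpha_t)-\bx\|^2$. On the $J_1$ side this is immediate from the SURE expansion \eqref{eq:bhf_SURE}: the gap between $J_1(\halfa)$ and the true MSE is exactly $\eta_{t+1}=2(\alpha_t-\halfa)v_{h_t}$, which vanishes at $\halfa=\alpha_t$. On the $J_2$ side I would use \eqref{eq:SE_divergence_criterion} together with the conditioning computation summarized in Appendix A, by which $J_2(\halfa)$ differs from the same MSE by $\beta_{t+1}=\zeta_{t+1}(\halfa)-\frac{2}{N}\bw^T\bA\bhq_{t+1}(\halfa)$; both terms converge almost surely to zero precisely at $\halfa=\alpha_t$, because there $\bhf(\br_t,\alpha_t)$ is an instance of the corrected denoiser \eqref{eq:f_d_VAMP} with $C_t=1$ and its error $\bhq_{t+1}(\alpha_t)$ therefore obeys the Onsager orthogonality identities \eqref{eq:b_t_gaussian} and \eqref{eq:w_Aq_independence}. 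Hence $J_1(\alpha_t)\overset{a.s.}{=}J_2(\alpha_t)$, so $E_2(\alpha_t)\overset{a.s.}{=}0$ in the sense of \eqref{eq:E_1_minus_E_2}, and $\alpha_t$ is a root of \eqref{eq:quadratic_equation_1}.

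I expect the algebra behind the first assertion and the $J_1$ half of the second to be entirely routine; the real substance sits in the $J_2$ half, where establishing $\zeta_{t+1}(\alpha_t)\overset{a.s.}{=}0$ and $\inv{N}\bw^T\bA\bhq_{t+1}(\alpha_t)\overset{a.s.}{=}0$ is not elementary and must rest on the right-orthogonal-invariance conditioning argument of Appendix A rather than on direct manipulation. A secondary point deserving a remark is that the quadratic generally admits two real roots, so the theorem only certifies $\alpha_t$ as \emph{a} root; identifying which root is the correct divergence is a separate question outside the scope of this statement.
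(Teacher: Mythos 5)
Your plan reproduces the paper's own proof structure (Appendix A) almost exactly, just in reverse order: the paper first proves Lemma~\ref{lemma:E1_E2_LSL_model}, giving the LSL models $J_1(\halfa) \overset{a.s.}{=} \mathrm{MSE} + \eta_{t+1}(\halfa)$ and $J_2(\halfa) \overset{a.s.}{=} \mathrm{MSE} + \beta_{t+1}(\halfa)$ with both errors vanishing at $\halfa=\alpha_t$, and then does the finite-$N$ expansion of $J_1-J_2$ into $u_1+u_2\halfa+u_3\halfa^2$. Your algebraic half is correct and matches the paper's, including the one place the limit is genuinely used: trading $v_x$ for $\lim_{N\to\infty}\inv{N}\|\br_t\|^2-v_{h_t}$ in the constant term to obtain $u_1$. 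Your $J_1$/Stein argument is likewise the paper's.

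The gap is in the $J_2$ half. You justify the decomposition $\inv{N}\|\bA\bhq_{t+1}(\halfa)\|^2 \overset{a.s.}{=} \inv{N}\|\bhq_{t+1}(\halfa)\|^2 + \zeta_{t+1}(\halfa)$ by appealing to ``the conditioning computation summarized in Appendix A'' --- but Appendix A \emph{is} the proof being reconstructed, so as a standalone argument this is circular: that decomposition, for \emph{arbitrary} $\halfa$ (not just $\alpha_t$), is precisely the technical content of the theorem. The paper establishes it by writing $\bhb_{t+1}=\bV^T\bhq_{t+1}(\halfa)$ and invoking the evolution model of \cite{UnifiedSE}, $\bhb_{t+1}=\bB_{t+1}\bhnu_{t+1}+\bM_{t+1}\bhphi_{t+1}+\bhu_{t+1}$, i.e.\ a split into projections onto the $\bq$-history and the $\bh$-history plus an independent Gaussian part; then the mutual orthogonality of $\bB_{t+1}$ and $\bM_{t+1}$ together with Lemma~\ref{lemma:normal_vectors_matrix_interaction} shows that $\inv{N}\|\bS\bB_{t+1}\bhnu_{t+1}\|^2$ and $\inv{N}\|\bS\bhu_{t+1}\|^2$ concentrate to their uncontracted counterparts, so only the $\bM_{t+1}\bhphi_{t+1}$ terms survive into $\zeta_{t+1}(\halfa)$. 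A second, smaller imprecision: at $\halfa=\alpha_t$ what kills $\zeta_{t+1}$ is not \eqref{eq:b_t_gaussian}/\eqref{eq:w_Aq_independence} as you cite, but the fact that $\bhphi_{t+1}=\bH_{t+1}^{\dagger}\bhq_{t+1}(\alpha_t)\to 0$, which follows from the $\bh$--$\bq$ orthogonality \eqref{eq:h_q_independence} applied to the properly corrected error; \eqref{eq:w_Aq_independence} is then what kills the remaining cross-term $\inv{N}\bw^T\bA\bhq_{t+1}$. Supplying this lemma (or an equivalent conditioning argument of your own) is what would turn your outline into a complete proof; everything else you wrote is sound.
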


\begin{proof}
    See Appendix A
\end{proof}

\noindent
Note that the coefficients of the equation \eqref{eq:quadratic_equation_1} are formed only from the available data at iteration $t$. Then, one way to estimate $\alpha_t$ is by computing the roots to \eqref{eq:quadratic_equation_1} and identifying which of the two roots is the correct one. In the following, we refer to this estimator as a \textit{polynomial estimator}.

Given a method for identifying the right root, the computational cost of the polynomial estimator is dominated by the cost of computing two matrix-vector products $\bA \bg_D(\br_t)$ and $\bA \br_t$. This cost can be reduced by reusing the results to form the updated vector $\bz_{t+1}$ 

\noindent
\begin{equation*}
    \bz_{t+1} = \by - \bA \bs_{t+1} = \by - \bA \bg_D(\br_t) - \alpha_t \bA \br_t
\end{equation*}

\noindent
In that case, the per-iteration computation cost is dominated by one matrix-vector product with $\bA$. 

As it will be demonstrated in the simulation section, SMP algorithms with the polynomial estimator demonstrate stable dynamics similar to the BB-MC estimator even for $N$ and $M$ of order $10^4$. When the dimension increases, the accuracy of correction of the proposed method is even superior to BB-MC and leads to improved dynamics of the SMP algorithms. To combine the advantages of the algebraic and the polynomial estimators, we suggest the use of the algebraic estimator to tune the denoiser via SURE and use the polynomial estimator to compute the final correction scalar $\alpha_t$. This approach combines the advantages of both methods and results in a fast and efficient performance of the denoising block.

\subsection{Root identification for the polynomial estimator} \label{sec:tendency_in_root_identification}

While Theorem \ref{th:two_quadratic_equations} relates the correction scalar $\alpha_t$ to one of the the roots $\halfa_1$ and $\halfa_2$ of \eqref{eq:quadratic_equation_1}, it is still required to identify which of the two roots is the right one. In this subsection, we study the LSL properties of the quadratic equation \eqref{eq:quadratic_equation_1} and propose a method for assigning $\alpha_t$ to either $\halfa_1$ or $\halfa_2$. 

To establish the theoretical connection between $\alpha_t$ and $\halfa_1$ and $\halfa_2$, we refer to the following two properties of a generic quadratic equation. We know that a quadratic polynomial corresponds to a parabola, which is oriented either up or down and this orientation is uniquely identified based on the sign of $u_3$. Second, if we know the orientation of the parabola, we can uniquely identify the root if we know the sign of the derivative of the polynomial at the desired root. Here, the derivative of the polynomial from \eqref{eq:quadratic_equation_1} corresponds to

\noindent
\begin{equation}
    \frac{\partial}{\partial \halfa} \Big( u_1 + u_2 \halfa + u_3 \halfa^2 \Big) = u_2 + 2 u_3 \halfa \label{eq:quadratic_equation_derivative}
\end{equation}

\noindent
In this case, if we worked out the sign of \eqref{eq:quadratic_equation_derivative} at $\halfa = \alpha_t$ and found out that it is, for example, always positive, then $\alpha_t$ would correspond to the smallest root of \eqref{eq:quadratic_equation_1} if $u_3$ is positive and to the largest root if $u_3$ is negative. The opposite would hold if it turns out that \eqref{eq:quadratic_equation_derivative} at $\halfa = \alpha_t$ is always negative. The following theorem presents the LSL identities for \eqref{eq:quadratic_equation_derivative} at $\halfa = \alpha_t$ for MF-OAMP, VAMP and CG-VAMP.

\begin{theorem} \label{th:divergence_identity_OAMP_based_algorithms}
    
    Consider the MF-OAMP, VAMP and CG-VAMP algorithms equipped with $\bg_D(\br_t)$ denoiser and let $\alpha_t$ be the divergence of $\bg_D(\br_t)$. Define a corrected denoiser
    
    \noindent
    \begin{equation}
        \bbf(\br_t) = \bg_D(\br_t) - \alpha_t \br_t \label{eq:th_DF_func}
    \end{equation}
    
    \noindent
    and its error
    
    \noindent
    \begin{equation*}
        \bbq_{t+1} = \bbf(\br_t) - \bx
    \end{equation*}
    
    \noindent
    Additionally define an inner-product $\psi_t = \inv{N} \bq_t^T \bbq_{t+1}$, where $\bq_t$ is the error $\bq_t = \bs_t - \bx$ from the previous iteration. Then, under Assumptions 1-3 and assuming \eqref{eq:h_t_gaussian} - \eqref{eq:w_Aq_independence} hold up to iteration $t$, the derivative \eqref{eq:quadratic_equation_derivative} at $\halfa = \alpha_t$ almost surely converges to
    
    \begin{itemize}
    
        \item For MF-OAMP:
        
        \noindent
        \begin{equation}
            \lim_{N \rightarrow \infty} \frac{1}{2} \big( u_2 + 2 \alpha u_3 \big) \overset{a.s.}{=}  \big( 1 - \chi_2 \big) \big( \psi_t - v_{q_t} \big) \label{eq:LSL_deriv_MF_OAMP}
        \end{equation}
        
        \noindent
        where $\chi_2 = \inv{N} Tr \big\{ (\bA \bA^T)^2 \big\}$.
        
        \item For VAMP:
        
        \noindent
        \begin{equation}
            \lim_{N \rightarrow \infty} \frac{1}{2} \big( u_2 + 2 \alpha u_3 \big) \overset{a.s.}{=} \frac{\big(v_w - v_{h_t} \big)}{v_{q_t}} \big( \psi_t - v_{q_t} \big) \label{eq:LSL_deriv_VAMP}
        \end{equation}
        
        \item For CG-VAMP:
        
        
        \noindent
        \begin{align}
            &\lim_{N \rightarrow \infty} \frac{1}{2} \big( u_2 + 2 u_3 \alpha \big) \overset{a.s.}{=} \frac{k_t v_{q_t} - \psi_t}{\gamma_t v_{q_t}} - (v_{q_t} - \psi_t) - v_w \nonumber\\
            &+ \frac{ \frac{\psi_t}{v_{q_t}} \Big( \lim_{N\rightarrow \infty} v_w \inv{N} \bw^T \bmu_t^i - \delta v_w + 2 v_w \gamma_t v_{q_t} \Big) }{ \gamma_t v_{q_t}} \label{eq:LSL_deriv_CG_VAMP}
        \end{align}
        
        \noindent
        where $\bmu_t^i$ is the CG approximation with $0 \leq i \leq M$ iterations of the system of linear equations \eqref{eq:SLE} and $k_t = \frac{\inv{N}||\bA^T \bmu_t^i||^2}{\gamma_t}$.
        
    \end{itemize}
    
\end{theorem}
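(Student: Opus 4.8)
The plan is to reduce the derivative $\frac{1}{2}(u_2 + 2\alpha_t u_3)$ to an algorithm-independent ``master'' expression, simplify it with the SE orthogonality identities, and only at the final stage invoke the algorithm-specific structure of the linear block $\bff_L$. First I would substitute the corrected-denoiser relation $\bg_D(\br_t) = \bbf(\br_t) + \alpha_t \br_t$ into the coefficients $u_2, u_3$ of \eqref{eq:u_2}--\eqref{eq:u_3}. The $\alpha_t$-proportional pieces recombine term by term, so that the signal-domain part collapses to $\inv{N}(\br_t - \bbf)^T \br_t$ and the measurement-domain part to $-\inv{N}(\by - \bA\bbf)^T \bA\br_t$, giving
\begin{equation*}
    \tfrac{1}{2}(u_2 + 2\alpha_t u_3) \overset{a.s.}{=} \inv{N}(\br_t - \bbf)^T \br_t - \inv{N}(\by - \bA\bbf)^T \bA\br_t .
\end{equation*}
Writing this in error coordinates via $\br_t - \bbf = \bh_t - \bbq_{t+1}$ and $\by - \bA\bbf = \bw - \bA\bbq_{t+1}$ isolates the two objects I then have to control: a signal-domain term $T_1 = \inv{N}(\bh_t - \bbq_{t+1})^T \br_t$ and a measurement-domain term $T_2 = \inv{N}(\bw - \bA\bbq_{t+1})^T \bA\br_t$.

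The second step disposes of $T_1$. Using $\br_t = \bx + \bh_t$ together with the orthogonalities $\inv{N}\bh_t^T \bbq_{t+1} \to 0$ and $\inv{N}\bh_t^T \bx \to 0$ from \eqref{eq:h_q_independence}, $T_1$ reduces to $v_{h_t} - \inv{N}\bbq_{t+1}^T \bx$. The residual $\inv{N}\bbq_{t+1}^T \bx$ is algorithm-independent but does not vanish on its own; the observation I would exploit is that an identical residual reappears inside $T_2$ (from the $\bA^T\bA\bx$ coupling, since $\inv{N}\bbq_{t+1}^T\bA^T\bA\bx \to \inv{N}\bbq_{t+1}^T\bx$) and cancels in $T_1 - T_2$, so the intractable quantity $\inv{N}\bg_D(\br_t)^T\bx$ never enters the final answer. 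To expose $T_2$ I substitute the linear block $\bh_t = \bq_t + \gamma_t^{-1}\bA^T \bmu_t$ with $\bmu_t = \bF_t(\bw - \bA\bq_t)$, which gives $\bA\br_t = \bA\bs_t + \gamma_t^{-1}\bA\bA^T \bmu_t$ and the bridge identity $\psi_t = \inv{N}\bq_t^T \bbq_{t+1} = -\gamma_t^{-1}\inv{N}\bmu_t^T \bA\bbq_{t+1}$, which is what ultimately lets every measurement-domain inner product be re-expressed through $\psi_t$.

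The third step evaluates $T_2$ with the conditioning technique for $\bA$ (as in Appendix A). Two families of terms arise: noise couplings such as $\gamma_t^{-1}\inv{N}\bw^T \bA\bA^T \bF_t\bw$, and spectral couplings $\inv{N}\bbq_{t+1}^T (\bA^T\bA)^k \bq_t$. Because $\bV^T \bq_t$ and $\bV^T \bbq_{t+1}$ are asymptotically isotropic by \eqref{eq:b_t_gaussian}, the conditioning argument reduces the latter to $m_k \psi_t$ with $m_k = \lim_{N\to\infty}\inv{N}Tr\{(\bA\bA^T)^k\}$, where $m_1 = 1$ and $m_2 = \chi_2$. The three cases then differ only through $\bF_t$: for MF-OAMP ($\bF_t = \bI$, $\gamma_t = 1$) the products terminate at $(\bA^T\bA)^2$ and produce the moment $\chi_2$, while the SE relation $v_{h_t} = v_w - (1-\chi_2)v_{q_t}$ supplies the $-v_{q_t}$ piece, yielding $(1-\chi_2)(\psi_t - v_{q_t})$; for VAMP ($\bF_t = \bW_t^{-1}$) I would use the resolvent identity $\bA\bA^T \bW_t^{-1} = v_{q_t}^{-1}(\bI - v_w \bW_t^{-1})$ together with the trace definition \eqref{eq:gamma_VAMP} of $\gamma_t$ to collapse both couplings into $\frac{v_w - v_{h_t}}{v_{q_t}}(\psi_t - v_{q_t})$; for CG-VAMP the approximate solution $\bmu_t^i$ does not satisfy this resolvent identity, so the couplings cannot be closed and remain as the explicit residuals $\inv{N}\bw^T \bmu_t^i$ and $k_t = \frac{\inv{N}\|\bA^T\bmu_t^i\|^2}{\gamma_t}$ of \eqref{eq:LSL_deriv_CG_VAMP}.

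The main obstacle I anticipate is the conditioning step itself: justifying rigorously that the $\bV$-coupled inner products $\inv{N}\bbq_{t+1}^T(\bA^T\bA)^k\bq_t$ concentrate on $m_k\psi_t$ without extra fluctuation terms, and that the noise couplings with $\bw$ vanish except for the stated traces. This requires $\bbq_{t+1}$ to inherit the isotropy of a genuine iterate error, which holds precisely because $\bbf(\br_t)$ is the properly Onsager-corrected denoiser, but the bookkeeping of cancellations---verifying that the $\inv{N}\bbq_{t+1}^T\bx$ residual drops out and that, for CG-VAMP, exactly the combination in \eqref{eq:LSL_deriv_CG_VAMP} survives---is the delicate, algorithm-specific part.
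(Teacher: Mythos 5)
Your proposal is correct and follows essentially the same route as the paper's Appendix B proof: the same algebraic reduction of $u_2 + 2\alpha_t u_3$ to $\inv{N}(\br_t - \bbf(\br_t))^T\br_t - \inv{N}(\by - \bA\bbf(\br_t))^T\bA\br_t$, the same cancellation of the $\bx$-residuals via the SE orthogonality identities, the same identification of the noise coupling $\inv{N}\bw^T\bA\bh_t \to v_w$, and the same algorithm-specific closing (the moment $\chi_2$ and the relation $v_{h_t} = v_w + (\chi_2 - 1)v_{q_t}$ for MF-OAMP, the resolvent identity $\bA\bA^T = (\bW_t - v_w\bI)/v_{q_t}$ together with $v_{h_t} = \gamma_t^{-1} - v_{q_t}$ for VAMP, and unresolved residual terms for CG-VAMP). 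The only organizational difference is that the paper first isolates a general intermediate identity (its Theorem 4, proved via Stein's lemma on the Jacobian of a possibly nonlinear, history-dependent $\bF_t$) and then specializes, whereas you compute directly treating $\bF_t$ as a matrix, which suffices for the three named algorithms.
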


\begin{proof}
    For the proof of \eqref{eq:LSL_deriv_MF_OAMP} and \eqref{eq:LSL_deriv_VAMP} see Appendix B. The proof of \eqref{eq:LSL_deriv_CG_VAMP} is rather technical and is omitted in the current work for the sake of space. The proof of \eqref{eq:LSL_deriv_CG_VAMP} and the additional analysis of this asymptotic result is available in the supplementary materials.
\end{proof}

\noindent
Before proceeding next, we would like to emphasize that the function $\bbf(\br_t)$ uses the proper correction scalar $\alpha_t$ in the contrast to $\bhf(\br_t)$, which uses an arbitrary parameter $\halfa$. At the same time, $\bbf(\br_t)$ naturally emerges in the proof of the theorem, while in the actual algorithm we use the corrected denoiser $\bff_D(\br_t)$ from \eqref{eq:f_d_VAMP}, which assumes an additional scaling by an arbitrary scalar $C_t$.

Equipped with Theorem \ref{th:divergence_identity_OAMP_based_algorithms}, our goal is to identify whether the derivatives \eqref{eq:LSL_deriv_MF_OAMP}-\eqref{eq:LSL_deriv_CG_VAMP} are positive or negative and we start with assuming that the common factor $\psi_t - v_{q_t}$ is negative

\noindent
\begin{equation}
    \psi_t - v_{q_t} < 0 \label{eq:psi_less_v_q}
\end{equation}

\noindent
We motivate this assumption in the following way. First, note that to show \eqref{eq:psi_less_v_q} holds, it is sufficient to show that $\inv{N} ||\bbq_{t+1}||^2$ is less or equal to $\inv{N} ||\bq_t||^2$. Doing this for a general denoiser and for a general free parameter $C_t$ in the update \eqref{eq:f_d_VAMP} is challenging and here we assume that the denoiser is close to the Bayes-optimal denoiser. Then, from \cite{VAMP}, \cite{OAMP}, \cite{EP_Keigo} we know that when the denoiser $\bg_D$ is Bayes-optimal, the variance $v_{q_{t+1}}$ is monotonic with $v_{h_t}$. Therefore, next we assume that our SMP algorithm is progressing after iteration $t$ so that the magnitude of the intrinsic noise is decreasing 

\noindent
\begin{equation}
    v_{h_t} < v_{h_{t-1}} \label{eq:v_h_progressing}
\end{equation}

\noindent
This would justify the assumption that $\inv{N} \bq_{t+1}^T \bq_t - \inv{N} \bq_t^T \bq_t$ is negative, but in our case, the vector $\bbq_{t+1}$ assumes $\overline{C}_{t+1} = 1$ in the update \eqref{eq:f_d_VAMP}, while the error $\bq_{t}$ might be using the optimal value of $C_t$ \cite{OAMP}, \cite{VAMP}, \cite{EP_Keigo}

\noindent
\begin{equation}
    C_{t}^{optimal} = \frac{v_{h_{t-1}}}{v_{h_{t-1}} - MSE(\bg_D(\br_{t-1}))}
\end{equation}

\noindent
When we choose a powerful denoiser like BM3D that is close to the Bayes-optimal denoiser, the ratio $\frac{MSE(\bg_D(\br_t)}{v_{h_t}}$ usually is small, which leads to $C_{t}^{optimal}$ being close to 1. This implies that the update \eqref{eq:th_DF_func} is close to the optimal one for the Bayes-optimal denoisers and we expect to see that $\inv{N} ||\bbq_{t+1}||^2$ is less than $\inv{N} ||\bq_t||^2$ even for small improvements in $v_{h_t}$. This, together with \eqref{eq:v_h_progressing} motivates the assumption \eqref{eq:psi_less_v_q}.

Next we consider the other factors in \eqref{eq:LSL_deriv_MF_OAMP}-\eqref{eq:LSL_deriv_CG_VAMP}. Define $\chi_j$ to be the $j$-th emperical moment of the eigenvalues $\bS \bS^T$ as

\noindent
\begin{equation}
    \chi_j = \inv{N} Tr\Big\{ \big( \bA \bA^T \big)^j \Big\} \label{eq:chi}
\end{equation}

\noindent
Recall form Section \ref{sec:error_dynamics} that we assume the normalization $\chi_1 = 1$. Then, from the standard relationship of the second and the first moments we have that

\noindent
\begin{equation}
    \chi_2 = \chi_1^2 + \inv{N} Tr \big\{ ( \Lambda - \chi_1 )^2 \big\} \geq \chi_1^2 = 1 \label{eq:MF_OAMP_chi2}
\end{equation}

\noindent
Using \eqref{eq:psi_less_v_q} and \eqref{eq:MF_OAMP_chi2} in \eqref{eq:LSL_deriv_MF_OAMP}, we can show for MF-OAMP that the derivative \eqref{eq:quadratic_equation_derivative} is always positive. 

For VAMP, the identification of the sign of \eqref{eq:LSL_deriv_VAMP} can be completed using the following lemma

\begin{lemma} \label{lemma:v_h_minus_v_w_positivity}
    
    Consider the VAMP algorithm. Under Assumptions 1-3 and assuming \eqref{eq:h_t_gaussian} - \eqref{eq:w_Aq_independence} hold up to iteration $t$, the difference $v_{h_t} - v_w$ is positive if the following inequality holds
    
    \noindent
    \begin{equation}
        v_{q_t} > \frac{v_w}{\delta^{-1} - 1} \label{eq:v_h_t_minus_v_w_positivity_condition}
    \end{equation}
    
\end{lemma}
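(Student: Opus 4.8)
The plan is to reduce the claimed inequality to a statement about the scalar $\gamma_t$ alone and then dispose of it with one elementary pointwise estimate. Since $v_{h_t} - v_w > 0$ is equivalent to $\gamma_t^{-1} > v_{q_t} + v_w$, i.e. to $\gamma_t < (v_{q_t} + v_w)^{-1}$, I would split the argument into two parts: first establish the VAMP linear-step variance identity $v_{h_t} = \gamma_t^{-1} - v_{q_t}$, and then show that the hypothesis $v_{q_t} > v_w/(\delta^{-1}-1)$ forces $\gamma_t$ below the required threshold.

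For the first part I would start from the VAMP linear update \eqref{eq:f_l_VAMP}. Writing $\bz_t = \by - \bA\bs_t = \bw - \bA\bq_t$ and subtracting $\bx$, the intrinsic error becomes
\[ \bh_t = \big(\bI_N - \gamma_t^{-1}\bA^T\bW_t^{-1}\bA\big)\bq_t + \gamma_t^{-1}\bA^T\bW_t^{-1}\bw . \]
I would then compute $v_{h_t} = \lim_{N\to\infty}\tfrac1N\|\bh_t\|^2$ by diagonalizing with the SVD $\bA = \bU\bS\bV^T$: in the $\bV$-basis the matrix $\bA^T\bW_t^{-1}\bA$ is diagonal with entries $\lambda_j/(v_w + v_{q_t}\lambda_j)$ on the $M$ nonzero singular directions, where $\lambda_j$ are the eigenvalues of $\bA\bA^T$. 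Using \eqref{eq:b_t_gaussian} for $\bb_t = \bV^T\bq_t$ and invoking \eqref{eq:w_Aq_independence} to discard the cross term between $\bq_t$ and $\bw$, the Strong Law of Large Numbers turns the two surviving quadratic forms into empirical eigenvalue averages. A short simplification that substitutes the definition $\gamma_t = \lim_{N\to\infty}\tfrac1N\sum_{j}\lambda_j/(v_w+v_{q_t}\lambda_j)$ then collapses everything to $v_{h_t} = \gamma_t^{-1} - v_{q_t}$; the cancellation is clean because the combined numerator $v_{q_t}\lambda^2 + v_w\lambda$ factors as $\lambda(v_w + v_{q_t}\lambda)$. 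The main obstacle will be this bookkeeping, namely tracking the $N-M$ zero modes and verifying that the cross term vanishes in the large system limit.

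With the identity in hand the conclusion is elementary. For every positive eigenvalue one has the strict pointwise bound $\lambda_j/(v_w + v_{q_t}\lambda_j) < v_{q_t}^{-1}$ since $v_w > 0$; averaging over the $M$ nonzero modes gives $\gamma_t < M/(N v_{q_t}) = \delta\, v_{q_t}^{-1}$, hence $\gamma_t^{-1} > v_{q_t}\delta^{-1}$. Substituting into the identity,
\[ v_{h_t} - v_w = \gamma_t^{-1} - v_{q_t} - v_w > v_{q_t}\delta^{-1} - v_{q_t} - v_w = v_{q_t}\big(\delta^{-1} - 1\big) - v_w , \]
and because $\delta < 1$ makes $\delta^{-1}-1 > 0$, the right-hand side is positive precisely when $v_{q_t} > v_w/(\delta^{-1}-1)$, which is the stated hypothesis. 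Note that this crude bound, rather than the tighter Jensen estimate available from the normalization $\tfrac1N Tr\{\bA\bA^T\}=1$, is exactly what produces the sufficient condition in the form claimed.
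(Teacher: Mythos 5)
Your proof is correct, and it reaches the paper's sufficient condition by a slightly different finishing route. Both arguments rest on the same two pillars: the VAMP variance identity $v_{h_t} = \gamma_t^{-1} - v_{q_t}$ (which the paper simply cites as \eqref{eq:VAMP_v_h}, whereas you re-derive it from the linear update --- your cancellation $v_{q_t}\lambda^2 + v_w\lambda = \lambda(v_w + v_{q_t}\lambda)$ is exactly right) and the spectral representation of $\gamma_t$ in \eqref{eq:VAMP_gamma}. Where you differ is the positivity argument. The paper keeps everything exact: it writes $v_{h_t} - v_w = \big(1 - \gamma_t(v_{q_t}+v_w)\big)/\gamma_t$ and rearranges $1 - \gamma_t(v_{q_t}+v_w)$ into the single integral \eqref{eq:lemma_one_proof_final_step}, whose numerator $\big((\delta^{-1}-1)v_{q_t} - v_w\big)\lambda + \delta^{-1} v_w$ is pointwise positive under the hypothesis. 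You instead extract the hypothesis-free uniform bound $\gamma_t < \delta/v_{q_t}$ from the pointwise estimate $\lambda/(v_w + v_{q_t}\lambda) < 1/v_{q_t}$ and finish with scalar algebra. Your version is more modular (the bound on $\gamma_t$ is a clean, reusable fact, and since the hypothesis \eqref{eq:v_h_t_minus_v_w_positivity_condition} is strict, even the non-strict limit $\gamma_t \le \delta/v_{q_t}$ would suffice, so nothing is lost in the $N \to \infty$ passage); the paper's version retains a sharper intermediate expression, since its integrand keeps the extra positive term $\delta^{-1}v_w$ that your bound discards --- in principle \eqref{eq:lemma_one_proof_final_step} can remain positive slightly beyond the stated threshold, though the paper, like you, only extracts the condition \eqref{eq:v_h_t_minus_v_w_positivity_condition} from it. Your closing remark correctly identifies this trade-off.
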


\begin{proof}
    See Appendix C
\end{proof}

Then, assuming the denoiser $\bg_D(\br_t)$ is Bayes-optimal, we guarantee $v_{h_t} - v_w > 0$ up till the iteration where $MSE(\bg_D(\br_t))$ drops below the right hand side of \eqref{eq:v_h_t_minus_v_w_positivity_condition}. Because $\delta = \frac{M}{N}$ tends to be much smaller than $1$, in practice, when the signal $\bx$ has complex structure like in the case with natural images, the denoiser's MSE is unlikely to drop below that level. Thus, we assume \eqref{eq:v_h_t_minus_v_w_positivity_condition} does hold and, together with \eqref{eq:psi_less_v_q}, this implies that \eqref{eq:LSL_deriv_VAMP} is positive.

To make a similar conclusion about positivity of \eqref{eq:LSL_deriv_CG_VAMP} for CG-VAMP, we note that both MF-OAMP and VAMP are special cases of CG-VAMP with $0$ and $M$ iterations for the CG algorithm respectively. Therefore, when the system of linear equations \eqref{eq:SLE} is poorly or accurately approximated, we already know that \eqref{eq:LSL_deriv_CG_VAMP} is positive. The analysis of the case, where the SLE \eqref{eq:SLE} is approximated with a moderate accuracy, is omitted in this work for the sake of space and is provided in the supplementary materials. Thus, in the following we assume that \eqref{eq:LSL_deriv_CG_VAMP} is positive when the SLE \eqref{eq:SLE} is approximated with a moderate accuracy.

Based on the above conclusions and assumptions, we propose the following strategy for identifying $\alpha_t$ among $\halfa_1$ and $\halfa_2$ for MF-OAMP, VAMP and CG-VAMP.

\begin{proposal} \label{proposal:OAMP_alpha_estimator}
    
    Consider the MF-OAMP, VAMP and CG-VAMP algorithms equipped with a denoiser $\bg_D(\br_t)$. Then, the divergence $\alpha_t$ of $\bg_D$ at $\br_t$ could be estimated as
    
    
    \noindent
    \begin{equation}
        \talfa_t = 
        \begin{cases} 
            max(\halfa_1,\halfa_2), & \mbox{if } u_3 < 0 \\ 
            min(\halfa_1,\halfa_2), & \mbox{if } u_3 > 0
        \end{cases} \label{eq:proposed_div_est_1}
    \end{equation}
    
    \noindent
    where the scalar $u_3$ is computed from \eqref{eq:u_3}.
    
\end{proposal}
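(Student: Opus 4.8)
The plan is to reduce the root-identification problem in Proposal~\ref{proposal:OAMP_alpha_estimator} to a question about the sign of the slope of the quadratic \eqref{eq:quadratic_equation_1} at its root $\alpha_t$. By Theorem~\ref{th:two_quadratic_equations}, $\alpha_t$ is one of the two roots $\halfa_1,\halfa_2$ of the parabola $f(\halfa)=u_1+u_2\halfa+u_3\halfa^2$, so the only remaining task is to decide which root it is. I would exploit two elementary facts about a quadratic: its orientation (opening upward or downward) is fixed by the sign of $u_3$ from \eqref{eq:u_3}, and, once the orientation is known, the sign of the derivative \eqref{eq:quadratic_equation_derivative} at a given root uniquely distinguishes the smaller root from the larger one. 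Hence establishing a definite sign for $u_2+2u_3\alpha_t$, together with the sign of $u_3$, pins down the assignment of $\alpha_t$ to $\max(\halfa_1,\halfa_2)$ or $\min(\halfa_1,\halfa_2)$ and yields exactly the case split asserted in the Proposal.

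The core step is to determine the sign of $u_2+2u_3\alpha_t$ for each of MF-OAMP, VAMP and CG-VAMP, for which I would invoke Theorem~\ref{th:divergence_identity_OAMP_based_algorithms}. That theorem supplies the LSL value of $\tfrac12(u_2+2u_3\alpha_t)$ in each case as a product of interpretable factors sharing the common term $\psi_t-v_{q_t}$, and the strategy is then to show that every factor in \eqref{eq:LSL_deriv_MF_OAMP}--\eqref{eq:LSL_deriv_CG_VAMP} has a controllable sign. For MF-OAMP I would combine the assumption \eqref{eq:psi_less_v_q} that $\psi_t-v_{q_t}<0$ with the moment inequality \eqref{eq:MF_OAMP_chi2}, which forces $1-\chi_2\le 0$; their product then fixes the sign of \eqref{eq:LSL_deriv_MF_OAMP}. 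For VAMP the only remaining nontrivial factor is $v_w-v_{h_t}$, whose sign I would control through Lemma~\ref{lemma:v_h_minus_v_w_positivity}: under the condition \eqref{eq:v_h_t_minus_v_w_positivity_condition}, argued to hold for structured signals and small $\delta$, one has $v_{h_t}-v_w>0$, i.e. $v_w-v_{h_t}<0$, again fixing the sign of \eqref{eq:LSL_deriv_VAMP}.

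For CG-VAMP I would not attack \eqref{eq:LSL_deriv_CG_VAMP} head-on. Instead I would use the observation that MF-OAMP and VAMP are the two extreme instances of CG-VAMP, corresponding to $i=0$ and $i=M$ iterations of the CG solver for \eqref{eq:SLE}; since the sign is already established at both endpoints, the same sign is inherited for an intermediate number of iterations by an interpolation argument in $i$. With the sign of the slope at $\alpha_t$ thus fixed and consistent across all three algorithms, combining it with the sign of $u_3$ via the parabola geometry of the first paragraph produces precisely the rule \eqref{eq:proposed_div_est_1}.

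I expect the main obstacle to be justifying the sign of the common factor $\psi_t-v_{q_t}$, i.e. the assumption \eqref{eq:psi_less_v_q}, which is the least automatic ingredient: it suffices to show that $\inv{N}||\bbq_{t+1}||^2\le\inv{N}||\bq_t||^2$, but this holds rigorously only under additional structure. I would establish it by assuming the denoiser $\bg_D$ is near Bayes-optimal, so that $v_{q_{t+1}}$ is monotone in $v_{h_t}$, and that the algorithm is still progressing in the sense of \eqref{eq:v_h_progressing}; the near-optimal regime makes $C_t^{optimal}\approx 1$, so that the unit scaling implicit in $\bbq_{t+1}$ is close to optimal and the error does not grow from iteration $t$ to $t+1$. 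A secondary difficulty is the CG-VAMP interpolation step, where one must verify that the slope cannot change sign as $i$ ranges between the two endpoints; this is the part where I would lean on the detailed asymptotic analysis of \eqref{eq:LSL_deriv_CG_VAMP} deferred to the supplementary materials.
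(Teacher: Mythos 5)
Your proposal follows essentially the same route as the paper: reduce root identification to the sign of the derivative \eqref{eq:quadratic_equation_derivative} at $\alpha_t$ combined with the parabola orientation given by $u_3$, then fix that sign case-by-case via Theorem~\ref{th:divergence_identity_OAMP_based_algorithms}, the assumption \eqref{eq:psi_less_v_q}, the moment inequality \eqref{eq:MF_OAMP_chi2} for MF-OAMP, Lemma~\ref{lemma:v_h_minus_v_w_positivity} for VAMP, and the endpoint ($i=0$, $i=M$) argument with deferral to the supplementary materials for CG-VAMP. The only cosmetic difference is that you phrase the CG-VAMP step as an ``interpolation in $i$,'' whereas the paper simply assumes positivity at moderate CG accuracy, but both rest on the same deferred analysis, so the arguments coincide.
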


\noindent
As it will be demonstrated in the simulations section, the proposed method always correctly identified $\alpha_t$ among $\halfa_1$ and $\halfa_2$ for MF-OAMP and CG-VAMP with the BM3D denoiser.

A similar result could be derived for AMP and CAMP, but the derivation is much more involved due to the complex recursive structure of the vector $\bh_t$ \cite{UnifiedSE}, \cite{CAMP} and is left for further work. Nevertheless, when $\bA$ is an i.i.d. Gaussian matrix, the dynamics of AMP and MF-OAMP are very similar and, due to the similarity of the algorithms, one could assume that the behaviour of the derivative \eqref{eq:quadratic_equation_derivative} is similar for both algorithms. In the simulation section we compare the performance of MF-OAMP and AMP and confirm that AMP with the BM3D denoiser and the polynomial estimator \eqref{eq:proposed_div_est_1} leads to stable dynamics similar to AMP with the BB-MC divergence estimator. A similar situation is observed with respect to WS-CG-VAMP, which has a complex recursive evolution model of the vector $\bh_t$ \cite{OurSecondPaper}, and for which proving the LSL result of the derivative \eqref{eq:quadratic_equation_derivative} is challenging. Yet, in Section \ref{sec:simulations_CG_VAMP_and_WS_CG_VAMP} we numerically confirm that WS-CG-VAMP with the estimator \eqref{eq:proposed_div_est_1} exhibits stable dynamics and even demonstrates a faster convergence rate compared to the same algorithm but with the BB-MC divergence estimator.

\subsection{Implementation details of the polynomial estimator}

As discussed in Section \ref{sec:algebraic_divergence_estimator}, when we consider a practical reconstruction problem of a finite dimensional signal $\bx$, there are additional stochastic components emerging in the algorithm. At the same time, certain sharp estimators might be sensitive to such error and provide inconsistent estimates. For example, when an SMP algorithm is close to a fixed point, we have observed that the roots to the quadratic equation \eqref{eq:quadratic_equation_1} might become complex. Since the target value -- the divergence of the denoiser $\bg_D(\br_t)$ with a real input $\br_t$ -- is real, we suggest to use the stationary point

\noindent
\begin{equation}
    \Tilde{\alpha}_t = -\frac{u_2}{2 u_3} \label{eq:alpha_estimator_at_fixed_point}
\end{equation}

\noindent
of the quadratic equation \eqref{eq:quadratic_equation_1} as the estimate of the correction scalar $\alpha_t$.

\section{Simulation results} \label{sec:simulation_beginning}

In this section we compare the proposed divergence estimators against the BB-MC method \cite{MC-divergence} within AMP, MF-OAMP, CG-VAMP and WS-CG-VAMP. We did not consider the VAMP algorithm explicitly because it requires precomputing the SVD of $\bA$, which is not feasible for large dimensions $N$ and $M$, while in this work we focus on large dimensional inverse problems. 

To the best of our knowledge, there is no general practice for tuning the BB-MC divergence estimator for denoisers that violate the continuity assumption like in the case of the BM3D denoiser. In this work we use the heuristic for choosing the scalar $\epsilon$ from \eqref{eq:black_box_div_estimator} as in the GAMP library\footnote{The link to the code is https://sourceforge.net/projects/gampmatlab/}

\noindent
\begin{equation*}
    \epsilon = 0.1 \min \big( \sqrt{v_{h_t}}, \inv{N} ||\br_t||_1 \big) + e
\end{equation*}

\noindent
where $e$ is the the float point precision in MATLAB. This choice  of  the  parameter $\epsilon$ demonstrated  stable  estimation throughout iterations $t$ for all the considered algorithms. 

Additionally, when we refer to the polynomial method for divergence estimation, we mean \eqref{eq:proposed_div_est_1} when the roots are real and \eqref{eq:alpha_estimator_at_fixed_point} for complex roots. Lastly, in all the experiments where BB-MC is involved, we use a single MC trial (additional execution of the denoiser) to estimate the divergence.

\subsection{Polynomial vs algebraic estimators}

We begin with the comparison of the polynomial estimator \eqref{eq:proposed_div_est_1} against the algebraic estimators \eqref{eq:closed_form_divergence_estimator} with $\bbr_t = \br_0$ and $\bbr_t = \br_{t-1}$. For this purpose, we consider the CG-VAMP algorithm, recovering a natural image demonstrated on the right of Figure \ref{fig:ground_truth_image} of dimension $2048$ by $2048$. We choose the measurement matrix $\bA$ to be the Fast ill-conditioned Johnson-Lindenstrauss Transform (FIJL)\cite{NS-VAMP} which acts as a prototypical ill-conditioned CS matrix. In our experiments, the FIJL operator $\bA = \bJ \bS \bP \bH \bD$ is composed of the following matrices \cite{NS-VAMP}: the values of the diagonal matrix $\bD$ are either $-1$ or $1$ with equal probability; the matrix $\bH$ is the Discreet Cosine Transform (DCT); The matrix $\bP$ is a random permutation matrix and the matrix $\bS$ is a diagonal matrix with geometric singular values that achieve the desired condition number as was considered in \cite{NS-VAMP}. Lastly, the subsampling matrix $\bJ$ is an $M$ by $N$ matrix that has ones on the main diagonal. For such a matrix, we set the condition number $\kappa(\bA) = 1000$ and the subsampling factor $\delta = 0.05$. Additionally, we set the measurement noise variance $v_w$ to achieve $40 dB$ SNR. Finally, we used BM3D\footnote{The BM3D library used throughout the simulations can be downloaded from the website of the authors of the denoiser http://www.cs.tut.fi/~foi/GCF-BM3D/. For this particular implementation we used the 'profile' to be 'np'.} denoiser \cite{BM3D} in the denoising block. 

\noindent
\begin{figure}
\centering
\includegraphics[width=0.5\textwidth]{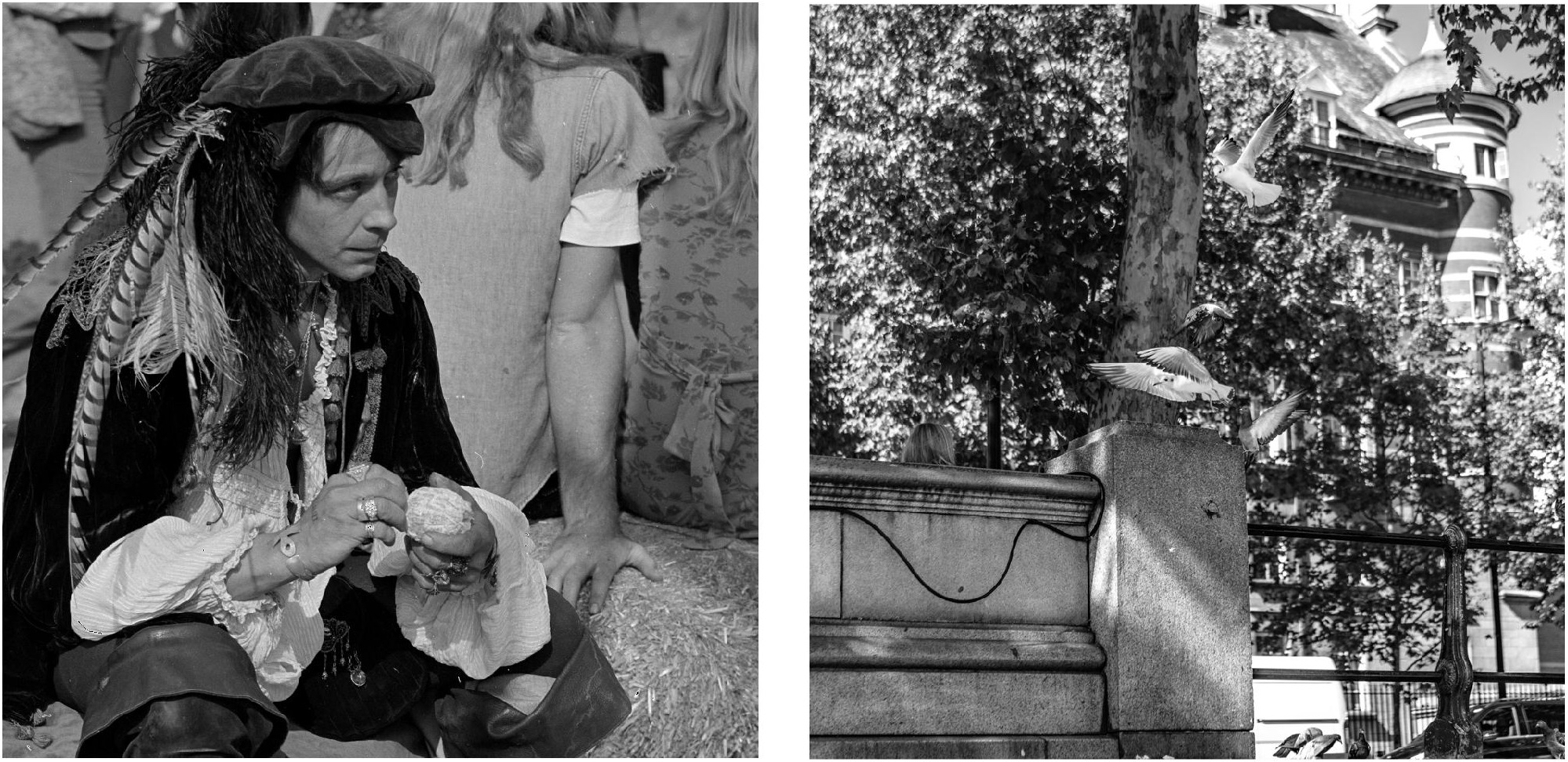}
\caption{The ground truth images}
\label{fig:ground_truth_image}
\end{figure}

In the first experiment, we run a single CG-VAMP algorithm where $\alpha_t$ is estimated by the polynomial estimator and, additionally, the two algebraic estimators are computed in parallel (these two values are not used within the algorithm and are only archived). For this experiment, we computed the normalized error $\frac{(\Hat{\alpha}_t - \alpha_t)^2}{(\alpha_t)^2}$, where $\Hat{\alpha}_t$ corresponds to either the estimate produced by the polynomial or by the two algebraic estimator, and the ``oracle'' correction $\alpha_t$ corresponds to

\noindent
\begin{equation}
    \alpha_t = \frac{\bh_t^T \bg_D(\br_t)}{N v_{h_t}} \label{eq:alpha_oracle}
\end{equation}

\noindent
The results averaged over $15$ iterations are shown on Figure \ref{fig:BM3D_CG_VAMP_2048}. As seen from the figure, the polynomial estimator demonstrates the best accuracy of estimating the ``oracle'' correction \eqref{eq:alpha_oracle}, while the algebraic estimator with $\bbr_t = \br_{t-1}$ demonstrates second to the best performance. On the other hand, the algebraic estimator with $\bbr_t = \br_{0}$ turns out to perform considerably worse than the other two and therefore is not recommended neither for computing $\alpha_t$ nor for estimating the divergence of the denoiser $\bg_D$ for its optimization via SURE. 

Next, we assess the stability of the algebraic estimator. For this, we compare two CG-VAMP algorithms: one where $\alpha_t$ is computed based on the polynomial estimator as in the previous experiment, and one where $\alpha_t$ is estimated by the algebraic estimator with $\bbr_t = \br_{t-1}$. Here, we computed the same error for $\alpha_t$ and the Normalized MSE (NMSE) $\frac{||\bg_D(\br_t) - \bx||^2}{||\bx||^2}$. The two error measures  averaged over $15$ realizations are shown on Figure \ref{fig:BM3D_CG_VAMP_2048_v_b_and_div}. As seen from the left plot depicting the NMSE, the CG-VAMP algorithm with the algebraic estimator with $\bbr_t = \br_{t-1}$ diverges the halfway through the execution, while the same algorithm but with the polynomial estimator demonstrates high stability. Based on this and the previous experiment, one could potentially use the algebraic estimator for computing a "rough" estimate of the divergence of $\bg_D$ to, for example, estimate the SURE, and use the polynomial estimator for computing the final correction $\halfa_t$ to ensure stable dynamics of SMP.

\noindent
\begin{figure}
\includegraphics[width=0.5\textwidth]{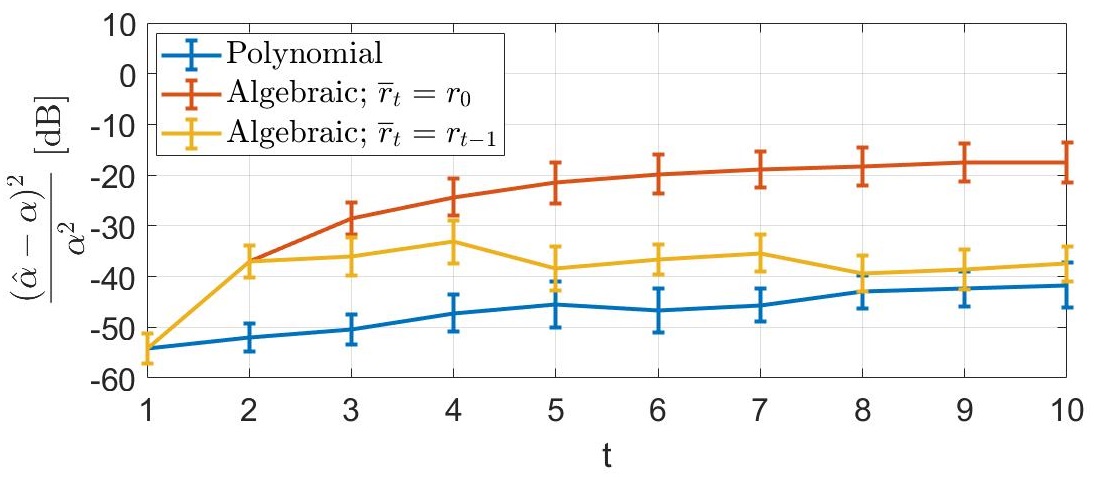}
\caption{Divergence estimation error with the standard deviation error bars of the polynomial and the algebraic estimators}
\label{fig:BM3D_CG_VAMP_2048}
\end{figure}

\noindent
\begin{figure}
\includegraphics[width=0.5\textwidth]{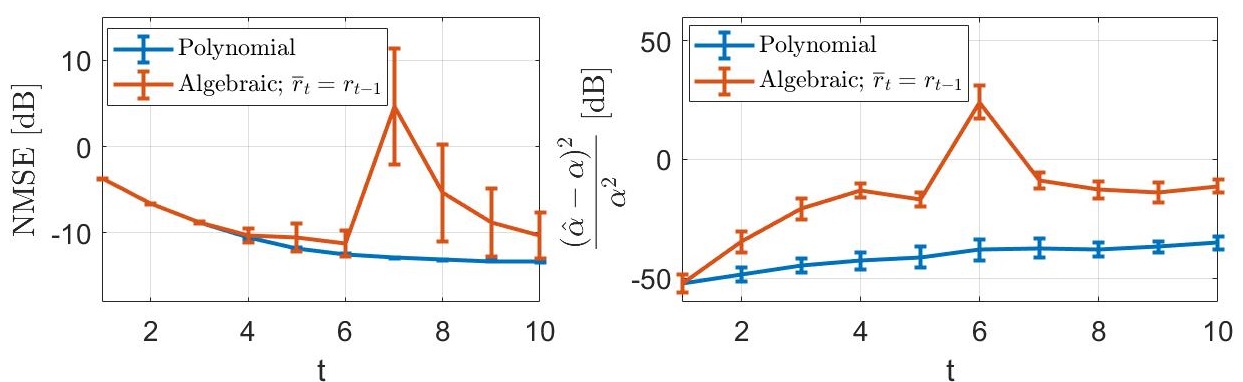}
\caption{Left: the Normalized MSE (NMSE). Right: divergence estimation error with the standard deviation error bars of the polynomial and the algebraic estimators}
\label{fig:BM3D_CG_VAMP_2048_v_b_and_div}
\end{figure}

\subsection{AMP and MF-OAMP}

Next, we compare the BB-MC and the polynomial estimators for estimating $\alpha_t$ within AMP and MF-OAMP algorithms. In particular, we consider the problem of recovering a natural image 'man' shown on the left of Figure \ref{fig:ground_truth_image} of dimension $400$ by $400$ from the set of measurements \eqref{eq:y_measurements} with subsampling factor $\delta = 0.05$. We chose $\bA$ to be a random Gaussian matrix with the normalization $\inv{N} Tr\{ \bA \bA^T \} = 1$ and chose the variance $v_w$ of the measurement noise $\bw$ that achieves $40dB$ SNR. Similarly, we used BM3D in the denoising block. In this setting, we compare the NMSE of two different AMP algorithms: one with BB-MC estimator and one with the polynomial estimator. The same experiment is repeated for MF-OAMP. The results averaged over $50$ realizations are demonstrated on Figure \ref{fig:bm3d_mf_oamp_amp_NMSE}. First, we notice that the dynamics of AMP and of MF-OAMP with the BB-MC divergence estimators are very similar. This observation supports the assumption in Section \ref{sec:tendency_in_root_identification} that the behaviour of the derivative \eqref{eq:quadratic_equation_derivative} should be very similar for the two algorithms as well and, therefore, supports the idea to use the polynomial estimator for AMP. This assumption is confirmed on the Figure \ref{fig:bm3d_mf_oamp_amp_NMSE} showing that both MF-OAMP and AMP with the two types of divergence estimators have almost identical reconstruction dynamics, while the execution time of the algorithms with the polynomial estimator was almost twice smaller. 

\noindent
\begin{figure}
\includegraphics[width=0.5\textwidth]{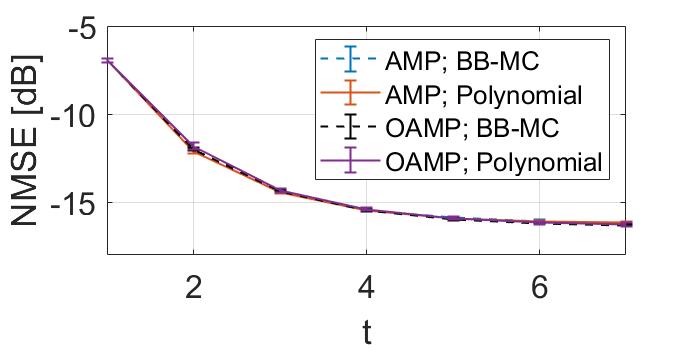}
\caption{NMSE with the standard deviation error bars of AMP and of MF-OAMP with the two divergence estimators: BB-MC \eqref{eq:black_box_div_estimator} and the polynomial divergence estimator \eqref{eq:proposed_div_est_1}.}
\label{fig:bm3d_mf_oamp_amp_NMSE}
\end{figure}

\subsection{CG-VAMP and WS-CG-VAMP} \label{sec:simulations_CG_VAMP_and_WS_CG_VAMP}

Next, we study the CG-VAMP and WS-CG-VAMP algorithms for the case where $\bA$ is a FIJL operator as in the first experiment. We consider recovering the same image 'man' at an increased resolution of $1024$ by $1024$, keeping the subsampling factor $\delta = 0.05$ and the measurement noise variance $v_w$ to give $40 dB$ SNR. We keep the BM3D denoiser and test the CG-VAMP algorithms for three condition numbers $\kappa(\bA) = (100, 1000, 10000)$. For all the executions we used the fixed number of iterations for the CG algorithm $i=5$. The NMSE of the algorithms averaged over $15$ realizations is shown on Figure \ref{fig:NLM_CG_VAMP_NMSE_different_delta}. As we see from the plot, the CG-VAMP algorithm with the polynomial divergence estimator demonstrates a similar reconstruction performance for smaller condition numbers $\kappa(\bA)$ and exhibits slightly improved convergence rate for larger $\kappa(\bA)$. Additionally we computed the estimation error of $\alpha$ as in the first experiment. The averaged result over 15 realizations for $\kappa(A) = 1000$ is depicted on Figure \ref{fig:NLM_CG_VAMP_alpha}. As seen from the plot, the polynomial estimator demonstrates higher accuracy of estimation for those iterations $t$, where the algorithm is not at the fixed point, and exhibits a similar accuracy to the BB-MC estimator when the algorithm converges.

\noindent
\begin{figure}
\includegraphics[width=0.5\textwidth]{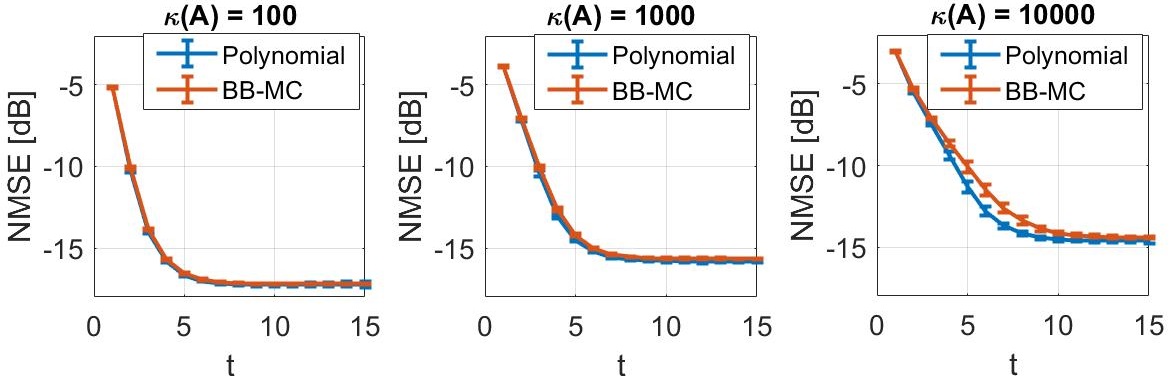}
\caption{NMSE with the standard deviation error bars for two CG-VAMP algorithms: with the BB-MC divergence estimator \eqref{eq:black_box_div_estimator} and with the polynomial divergence estimator \eqref{eq:proposed_div_est_1}.}
\label{fig:NLM_CG_VAMP_NMSE_different_delta}
\end{figure}

\noindent
\begin{figure}
\includegraphics[width=0.5\textwidth]{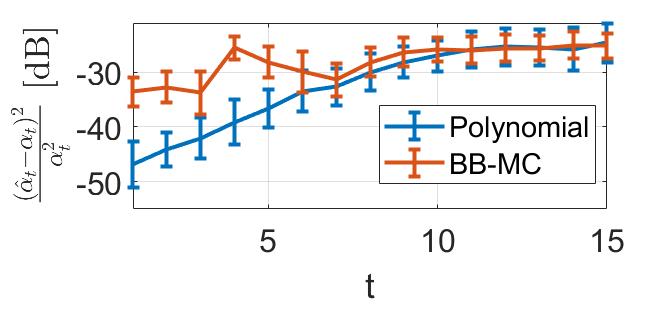}
\caption{NMSE with the standard deviation error bars of estimating the correction scalar $\alpha_t$ within CG-VAMP}
\label{fig:NLM_CG_VAMP_alpha}
\end{figure}

Next, we keep exactly the same setting as for CG-VAMP and consider the performance of WS-CG-VAMP with the BB-MC and the polynomial estimators. We use the practical version of the WS-CG-VAMP algorithm proposed in \cite{OurSecondPaper}. This version of WS-CG-VAMP has the same computational cost as the regular CG-VAMP algorithm, demonstrates stable and improved reconstruction properties, but violates the exact State Evolution. We compared the two versions of WS-CG-VAMP against CG-VAMP with the same number of iterations for CG $i=5$ and against ``pseudo-VAMP'', which corresponds to CG-VAMP with $i=500$. This number of inner-loop iterations achieves the relative residual $\frac{||\bz_t - \bW_t \bmu_t^{i}||^2}{||\bz_t||}$ of order $-70dB$, which justifies the usage of such an algorithm as the benchmark. The NMSE of the algorithms averaged over $15$ realizations is shown on Figure \ref{fig:NLM_WS_CG_VAMP_vs_CG_VAMP_NMSE}. Despite the fact that the considered version of WS-CG-VAMP violates the SE, which is required for the derivation of the polynomial divergence estimation method, the reconstruction quality of WS-CG-VAMP with the divergence estimator \eqref{eq:proposed_div_est_1} is similar to WS-CG-VAMP with the BB-MC divergence estimator. Similarly to \cite{OurSecondPaper}, in this experiment we observe that WS-CG-VAMP with only $5$ inner-loop iterations is able to converge to a fixed point with almost identical quality as the pseudo-VAMP algorithm's.

\noindent
\begin{figure}
\includegraphics[width=0.5\textwidth]{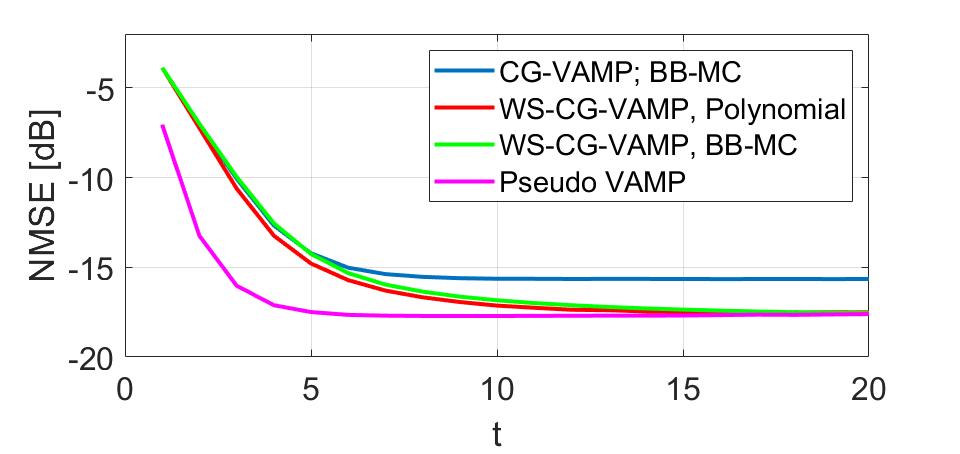}
\caption{The comparison of CG-VAMP (blue), WS-CG-VAMP with BB-MC estimator (green) and with the proposed estimator \eqref{eq:proposed_div_est_1} (red). The purple curve corresponds to the approximated VAMP algorithm achieved by setting the number of CG iterations to $i=500$.}
\label{fig:NLM_WS_CG_VAMP_vs_CG_VAMP_NMSE}
\end{figure}

\section{Conclusions}

In this work we have proposed two alternative to the traditional Black-Box Monte Carlo (BB-MC) \cite{MC-divergence} methods for estimating the divergence of denoisers within SMP algorithms. Similarly to BB-MC, the proposed methods do not use any additional information about the denoiser apart from its input and output. However, contrary to the BB-MC method, the two suggested estimators do not require executing the denoiser additional times and, therefore, significantly accelerate the SMP algorithm when an expensive denoiser such as BM3D is used. Due to the negligible computational cost of the algebraic method, we believe that it could be used to tune the denoiser via SURE \cite{SURE} to achieve the best performance of the denoising block in SMP. The second estimation method -- the polynomial estimator -- complements the first one and demonstrates high robustness with respect to the dimensionality of the inverse problem and improved accuracy of correction compared to the BB-MC method.

While the two proposed estimators are exact in the Large System Limit (LSL), for finite $N$ their accuracy suffers from additional stochastic error. In future work, we would like to understand why the polynomial estimator is more robust with respect to the decreased dimensionality and whether it is possible to modify the fast algebraic estimator accordingly to increase its robustness.

Lastly, the theoretical identification of the right root in the polynomial method is not fully rigorous and general yet. From the numerical study with both OAMP-based and AMP-based algorithms equipped with different types of denoisers we have found that the derivative \eqref{eq:quadratic_equation_derivative} was always positive. Similarly, we have observed that the scalar $u_3$ from the quadratic polynomial \eqref{eq:quadratic_equation_1} was always negative, which implies that $\alpha_t$ always corresponds to the smallest root within $\halfa_1$ and $\halfa_2$. We have observed these results (not demonstrated here) to hold for different condition numbers and different types of singular value distribution of $\bA$. The current proof technique in Section \ref{sec:tendency_in_root_identification} is developed specifically for a few examples of OAMP-based algorithms, while it seems that there is a more general result explaining the observed tendency for more broader range of SMP algorithms. We plan to investigate this observation in further works.

\section*{Appendix A}

\subsection*{Preliminary Lemmas}

Before proceeding to the proof of Theorem \ref{th:two_quadratic_equations}, we introduce several important lemmas that will be used further.

\begin{lemma} \label{lemma:normal_vectors_matrix_interaction}
    
    \cite{rand_mat_silverstein}, \cite{NS-VAMP}, \cite{AMP_SE_non_separable}: Let $\bx_1$ and $\bx_2$ be two $N$-sized zero-mean isotropic Gaussian vectors with variances $v_1$ and $v_2$ respectively and $\bD$ be a $N$ by $N$ symmetric positive semidefinite matrix independent of $\bx_1$ and $\bx_2$. Additionally assume that as $N \rightarrow \infty$, the Empirical Eigenvalue Distribution of $\bD$ converges to a density with compact support. Then almost surely we have that
    
    \noindent
    \begin{equation}
        \lim_{N \rightarrow \infty} \inv{N} \bx_1^T \bD \bx_2 \overset{a.s.}{=} \frac{\bx_1^T \bx_2}{N} \inv{N} Tr \big\{ \bD \big\} \label{eq:normal_vect_matrix_inner_prod}
    \end{equation}
    
\end{lemma}

\noindent
The following lemma defines certain Large System Limit (LSL) properties of the linear block in OAMP-based algorithms with the 1D SE.

\begin{lemma} \label{lemma:q_t_behaviour}
    
    \cite{VAMP}, \cite{UnifiedSE}: Let $\bq_{\tau} = \bs_{\tau} - \bx$ for $\tau \leq t$ be the error vector in OAMP-based algorithms with the 1D SE, $\bA$ be the measurement operator from \eqref{eq:y_measurements} and $\bD$ be as above in Lemma \ref{lemma:normal_vectors_matrix_interaction}. Then, under Assumptions 1-3 and assuming \eqref{eq:h_t_gaussian} - \eqref{eq:w_Aq_independence} hold up to iterations $t$, we have the following asymptotic results for $\tau, k \leq t$
    
    \noindent
    \begin{equation*}
        \lim_{N \rightarrow \infty} \inv{N} \bq_{\tau} \bA^T \bD (\bw - \bA \bq_{k}) \overset{a.s.}{=} - \frac{\bq_{\tau}^T \bq_k}{N} \inv{N} Tr \big\{ \bA \bA^T \bD \big\}
    \end{equation*}
    
    \noindent
    and
    
    \noindent
    \begin{equation}
        \lim_{N \rightarrow \infty} \inv{N} \bq_{\tau}^T \big( \bq_k + \gamma_t^{-1} \bA^T \bD (\bw - \bA \bq_k \big) \overset{a.s.}{=} 0
    \end{equation}
    
    \noindent
    where $\gamma_t = \inv{N} Tr \big\{ \bA \bA^T \bD \big\}$.
    
\end{lemma}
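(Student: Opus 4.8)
The plan is to reduce both claims to Lemma~\ref{lemma:normal_vectors_matrix_interaction} and the decoupling identity \eqref{eq:w_Aq_independence} by passing to the SVD $\bA = \bU \bS \bV^T$ and working with the rotated error vectors $\bb_\tau = \bV^T \bq_\tau$, which \eqref{eq:b_t_gaussian} lets us treat as isotropic Gaussian. First I would split the target inner-product into a noise part and a signal part,
\[
\inv{N}\bq_\tau^T \bA^T \bD(\bw - \bA\bq_k) = \inv{N}\bq_\tau^T\bA^T\bD\bw - \inv{N}\bq_\tau^T\bA^T\bD\bA\bq_k,
\]
and treat the two pieces separately.

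For the noise part I would take the transpose and insert the SVD to obtain $\inv{N}\bq_\tau^T\bA^T\bD\bw = \inv{N}\bw^T(\bD^T\bU\bS)\bb_\tau$. Since $\bD^T\bU\bS \in \mathbb{R}^{M\times N}$ is a product of a matrix with compactly supported spectrum, an orthogonal matrix, and a bounded $\bS$, it has bounded singular values and is therefore admissible for \eqref{eq:w_Aq_independence}; hence this term vanishes almost surely as $N\to\infty$. For the signal part, the SVD gives $\inv{N}\bq_\tau^T\bA^T\bD\bA\bq_k = \inv{N}\bb_\tau^T\tilde{\bD}\bb_k$ with $\tilde{\bD} = (\bU\bS)^T\bD(\bU\bS)$, which is symmetric positive semidefinite with compactly supported spectrum and, crucially, depends only on $\bU,\bS,\bD$ and not on $\bV$. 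Treating $\bb_\tau,\bb_k$ as isotropic Gaussian vectors independent of $\tilde{\bD}$, Lemma~\ref{lemma:normal_vectors_matrix_interaction} yields $\inv{N}\bb_\tau^T\tilde{\bD}\bb_k \overset{a.s.}{=} \frac{\bb_\tau^T\bb_k}{N}\inv{N}Tr\{\tilde{\bD}\}$. I would then use orthogonality of $\bV$ to replace $\bb_\tau^T\bb_k$ by $\bq_\tau^T\bq_k$, and the cyclic property of the trace together with $\bA\bA^T = \bU\bS\bS^T\bU^T$ to rewrite $Tr\{\tilde{\bD}\} = Tr\{\bA\bA^T\bD\}$, which gives the first identity.

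The second identity then follows immediately: adding $\inv{N}\bq_\tau^T\bq_k$ and factoring out $\frac{\bq_\tau^T\bq_k}{N}$ leaves the bracket $1 - \gamma_t^{-1}\inv{N}Tr\{\bA\bA^T\bD\}$, which collapses to zero once $\gamma_t = \inv{N}Tr\{\bA\bA^T\bD\}$ is substituted. So the two statements are really one computation plus a trivial corollary.

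The main obstacle is the single step I have glossed over: justifying that the rotated vectors $\bb_\tau,\bb_k$ may be treated as isotropic Gaussian vectors independent of $\tilde{\bD}$ when invoking Lemma~\ref{lemma:normal_vectors_matrix_interaction}. Because $\bq_\tau,\bq_k$ are produced by the iteration and hence depend on $\bV$, this requires the standard conditioning argument: conditioned on the history up to iteration $t$, $\bV$ remains Haar-distributed on the relevant manifold, so $\bb_\tau = \bV^T\bq_\tau$ behaves as a fresh isotropic Gaussian decoupled from $\tilde{\bD}$, precisely as guaranteed by \eqref{eq:b_t_gaussian}--\eqref{eq:w_Aq_independence} under Assumptions~1--3. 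Everything else is routine algebra with the SVD, the orthogonality of $\bV$, and the cyclicity of the trace.
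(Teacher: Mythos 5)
Your proposal is correct, but there is nothing in the paper to compare it against: the paper never proves Lemma~\ref{lemma:q_t_behaviour}, it simply imports it from \cite{VAMP} and \cite{UnifiedSE} as a preliminary fact. Your reconstruction is faithful to how those references (and the paper's own proofs of Theorem~\ref{th:two_quadratic_equations} and Theorem~\ref{th:divergence_general_identity}) operate: rotate into the SVD basis, treat $\bb_\tau = \bV^T\bq_\tau$ as isotropic Gaussian via \eqref{eq:b_t_gaussian}, kill the noise cross-term with \eqref{eq:w_Aq_independence}, and reduce the quadratic form $\bb_\tau^T\tilde{\bD}\bb_k$ to a trace via Lemma~\ref{lemma:normal_vectors_matrix_interaction} plus cyclicity, after which the second identity is immediate bookkeeping. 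Two remarks. First, you silently fixed a glitch in the statement: ``$\bD$ as in Lemma~\ref{lemma:normal_vectors_matrix_interaction}'' is $N\times N$ there, but for $\bA^T\bD(\bw-\bA\bq_k)$ to parse it must be $M\times M$; your reading (symmetric PSD, compact limiting spectrum, independent of the errors, of size $M$) is the intended one, and correspondingly $\tilde{\bD}=\bS^T\bU^T\bD\bU\bS$ is the $N\times N$ object to which Lemma~\ref{lemma:normal_vectors_matrix_interaction} applies. Second, you correctly identify the only non-routine step --- that $\bb_\tau,\bb_k$ may be treated as Gaussian and independent of $\tilde{\bD}$ --- and you are right that this is exactly what the conditioning technique of \cite{VAMP}/\cite{UnifiedSE} establishes; since the lemma's hypotheses already assume \eqref{eq:h_t_gaussian}--\eqref{eq:w_Aq_independence} hold up to iteration $t$, invoking that framework rather than reproving it is the appropriate level of rigor here, and it matches how the paper itself leans on these identities in Appendices A and B.
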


\subsection*{Proof of Theorem \ref{th:two_quadratic_equations}}

\begin{proof}[\unskip\nopunct]

Now we are ready to prove Theorem \ref{th:two_quadratic_equations}. We continue the idea introduced in Section \ref{sec:poly_div_est}, where we defined the parametrized denoiser $\bhs_{t+1} = \bhf(\br, \halfa)$ in \eqref{eq:bhf} and its oracle error $\bhq_{t+1}$ in \eqref{eq:bhq}. Additionally, we introduced two functions 

\noindent
\begin{gather}
    J_1(\halfa) = \inv{N} ||\bhs_{t+1}||^2 - 2 \inv{N} \br_t^T \bhs_{t+1} + v_x\\
    J_2(\halfa) = \inv{N}||\by - \bA \bhs_{t+1}||^2 - \delta v_w
\end{gather}

\noindent
where $v_x \overset{a.s.}{=} \lim_{N \rightarrow \infty} \inv{N} ||\bx||^2$ is the variance of the ground-truth signal $\bx$. Here we did and in the following we will drop the dependence of $\bhs_{t+1}$ and of $\bhq_{t+1}$ on the scalar parameters $\halfa$ to simplify the notations. 

The following lemma establishes the LSL properties of $J_1(\halfa)$ and of $J_2(\halfa)$

\begin{lemma} \label{lemma:E1_E2_LSL_model}

    Let the parametrized update $\bhs_{t+1} = \bhf(\br_t, \halfa)$ and its error $\bhq_{t+1}$ be as in \eqref{eq:bhf} and \eqref{eq:bhq} respectively. Define the following matrices
    
    \noindent
    \begin{gather*}
        \bH_{t+1} = \big( \bh_0, \bh_1,..., \bh_{t} \big) \\
        \bQ_{t+1} = \big( \bq_0, \bq_1,..., \bq_{t} \big) \\
        \bM_{t+1} = \bV^T \bH_{t+1} \\
        \bB_{t+1} = \bV^T \bQ_{t+1}
    \end{gather*}
    
    \noindent
    and vectors $\bhnu_{t+1} = \bQ_{t+1}^{\dagger} \bhq_{t+1}$ and $\bhphi_{t+1} = \bH_{t+1}^{\dagger} \bhq_{t+1}$, where $\dagger$ represents the pseudo-inverse operator. Lastly, define $\alpha_t$ to be the divergence of the denoiser $\bg_D(\br_t)$. Then, under Assumptions 1-3 and assuming \eqref{eq:h_t_gaussian} - \eqref{eq:w_Aq_independence} hold up to iterations $t$, the functions $J_1(\halfa)$ and $J_2(\halfa)$ are asymptotically equivalent to
    
    \noindent
    \begin{gather*}
        \lim_{N \rightarrow \infty} J_1(\halfa) \overset{a.s.}{=} \lim_{N \rightarrow \infty} \inv{N} || \bhs_{t+1} - \bx ||^2 + \eta_{t+1}(\halfa)  \\
        \lim_{N \rightarrow \infty} J_2(\halfa) \overset{a.s.}{=} \lim_{N \rightarrow \infty} \inv{N} || \bhs_{t+1} - \bx ||^2 + \beta_{t+1}(\halfa)
    \end{gather*}
    
    \noindent
    where
    
    \noindent
    \begin{gather*}
        \eta_{t+1}(\halfa) = 2 (\alpha_t - \halfa) v_{h_t} \\
        \begin{split}
            &\beta_{t+1}(\halfa) = \inv{N} \Big( ||\bS \bM_{t+1} \bhphi_{t+1}||^2 - ||\bM_{t+1} \bhphi_{t+1} ||^2 \Big) \\
            &+ 2 \inv{N} \Big( \bhnu_{t+1}^T \bB_{t+1}^T \bS^T \bS \bM_{t+1} \bhphi_{t+1} - \bw^T \bA \bhq_{t+1} \Big)
        \end{split}
    \end{gather*}

    \noindent
    Additionally, when $\halfa = \alpha_t$ we have that
    
    \noindent
    \begin{gather*}
        \lim_{N \rightarrow \infty} \eta_{t+1}(\alpha_t) \overset{a.s.}{=} 0  \\
        \lim_{N \rightarrow \infty} \beta_{t+1}(\alpha_t) \overset{a.s.}{=} 0 
    \end{gather*}

\end{lemma}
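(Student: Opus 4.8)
The plan is to treat $J_1$ and $J_2$ separately, since the two functions exploit different structural properties: $J_1$ relies only on the Gaussianity of $\bh_t$ through Stein's Lemma, whereas $J_2$ requires resolving the interaction of $\bhq_{t+1}$ with the random operator $\bA$ via its SVD. For $J_1$ I would substitute $\bx = \br_t - \bh_t$ and compare $J_1(\halfa)$ against $\inv{N}||\bhs_{t+1} - \bx||^2 = \inv{N}||\bhs_{t+1}||^2 - 2\inv{N}\bx^T\bhs_{t+1} + \inv{N}||\bx||^2$. After subtracting and using $\inv{N}||\bx||^2 \to v_x$, the difference collapses to a term proportional to $\inv{N}\bh_t^T\bhs_{t+1}$. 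Writing $\bhs_{t+1} = \bg_D(\br_t) - \halfa\br_t$, the Strong Law of Large Numbers together with Stein's Lemma gives $\inv{N}\bh_t^T\bg_D(\br_t) \to \alpha_t v_{h_t}$ (the core identity specialized to $\tau = t$), while $\inv{N}\bh_t^T\br_t \to v_{h_t}$ by \eqref{eq:h_q_independence} (the $\bh_t^T\bx$ cross term vanishing). Hence $\inv{N}\bh_t^T\bhs_{t+1} \to (\alpha_t - \halfa)v_{h_t}$, reproducing \eqref{eq:bhf_SURE} and yielding the stated $\eta_{t+1}(\halfa) = 2(\alpha_t - \halfa)v_{h_t}$; setting $\halfa = \alpha_t$ makes it vanish immediately.

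For $J_2$ I would first rewrite $\by - \bA\bhs_{t+1} = \bw - \bA\bhq_{t+1}$ using $\by = \bA\bx + \bw$ and $\bhq_{t+1} = \bhs_{t+1} - \bx$. Expanding the norm and applying the SLLN to $\inv{N}||\bw||^2 \to \delta v_w$ to cancel the $-\delta v_w$ term leaves $J_2 \to \inv{N}||\bA\bhq_{t+1}||^2 - 2\inv{N}\bw^T\bA\bhq_{t+1}$, where the cross term is carried unchanged into $\beta_{t+1}$. The core of the proof is then the evaluation of $\inv{N}||\bA\bhq_{t+1}||^2$: writing $\bA = \bU\bS\bV^T$ and using orthogonality of $\bU$, this equals $\inv{N}(\bV^T\bhq_{t+1})^T\bS^T\bS(\bV^T\bhq_{t+1})$. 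I would decompose $\bhq_{t+1}$, via the conditioning technique of \cite{UnifiedSE}, \cite{VAMP}, into the part explained by the past error matrices $\bH_{t+1}$ and $\bQ_{t+1}$ (with coefficients $\bhphi_{t+1} = \bH_{t+1}^{\dagger}\bhq_{t+1}$ and $\bhnu_{t+1} = \bQ_{t+1}^{\dagger}\bhq_{t+1}$) plus a fresh component asymptotically orthogonal to all past iterates whose image under $\bV^T$ is isotropic Gaussian and statistically independent of $\bS$.

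The decisive structural facts to exploit are threefold. First, by \eqref{eq:b_t_gaussian} the vectors $\bV^T\bq_\tau$ are isotropic, so by Lemma \ref{lemma:normal_vectors_matrix_interaction} the map $\bS$ is norm-preserving on the $\bQ$-part in the LSL, i.e. $\inv{N}||\bS\bB_{t+1}\bhnu_{t+1}||^2 \to \inv{N}||\bB_{t+1}\bhnu_{t+1}||^2$, so the pure $\bQ$-part contributes nothing beyond $\inv{N}||\bhq_{t+1}||^2$. Second, the fresh component's $\bV^T$-image is isotropic, so again by Lemma \ref{lemma:normal_vectors_matrix_interaction} and the normalization $\chi_1 = 1$ its norm is preserved. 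Third, the $\bh$-vectors live in the row space of $\bA$, so $\bV^T\bh_\tau$ is structured and $\bS$ is \emph{not} norm-preserving on the $\bH$-part, which is precisely what produces the surviving terms $\inv{N}\big( ||\bS\bM_{t+1}\bhphi_{t+1}||^2 - ||\bM_{t+1}\bhphi_{t+1}||^2 \big)$ together with the $\bH$--$\bQ$ cross term $2\inv{N}\bhnu_{t+1}^T\bB_{t+1}^T\bS^T\bS\bM_{t+1}\bhphi_{t+1}$. Collecting these gives $\inv{N}||\bA\bhq_{t+1}||^2 = \inv{N}||\bhq_{t+1}||^2 + \zeta_{t+1}(\halfa)$ and hence the claimed form of $\beta_{t+1}$. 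The main obstacle I anticipate is the bookkeeping of this decomposition: because the column spaces of $\bH_{t+1}$ and $\bQ_{t+1}$ overlap and $\bhphi_{t+1}, \bhnu_{t+1}$ are computed by independent pseudo-inversions, one must argue carefully, using the asymptotic orthogonality relations \eqref{eq:h_q_independence}--\eqref{eq:w_Aq_independence}, that the fresh-versus-explained cross terms and the redundant $\bQ$-part norms vanish, leaving exactly the two surviving contributions.

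Finally, to verify $\beta_{t+1}(\alpha_t) = 0$ I would use that at $\halfa = \alpha_t$ the error $\bhq_{t+1}$ coincides with the properly corrected $\bbq_{t+1}$, which by \eqref{eq:h_q_independence} is asymptotically orthogonal to every past $\bh_\tau$. Consequently the regression coefficients $\bhphi_{t+1} = \bH_{t+1}^{\dagger}\bbq_{t+1} \to \bzero$, annihilating both the $||\bS\bM_{t+1}\bhphi_{t+1}||^2 - ||\bM_{t+1}\bhphi_{t+1}||^2$ bracket and the cross term, while $\inv{N}\bw^T\bA\bbq_{t+1} \to 0$ by \eqref{eq:w_Aq_independence}. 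This simultaneously recovers the matched-case identity $\lim_{N\to\infty}\inv{N}||\bA\bhq_{t+1}||^2 = \lim_{N\to\infty}\inv{N}||\bhq_{t+1}||^2$ asserted in the main text, closing the proof.
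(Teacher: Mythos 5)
Your proposal is correct and follows essentially the same route as the paper: the same Stein's-Lemma computation of $\inv{N}\bh_t^T\bhs_{t+1}$ for $J_1$, and for $J_2$ the same SVD-based decomposition of $\bV^T\bhq_{t+1}$ into $\bB_{t+1}\bhnu_{t+1} + \bM_{t+1}\bhphi_{t+1}$ plus an isotropic fresh component, with norm preservation under $\bS$ on the $\bQ$-part and fresh part, and $\bhphi_{t+1} \to \bzero$ together with \eqref{eq:w_Aq_independence} in the matched case. The bookkeeping you flag as the main obstacle (orthogonality of the explained parts and vanishing of the fresh-versus-explained cross terms) is exactly what the paper resolves by citing the evolution model and the asymptotic orthogonality of $\bB_{t+1}$ and $\bM_{t+1}$ from \cite{UnifiedSE}, rather than re-deriving it.
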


\begin{proof}
    For the rest of the proof, we let the Assumptions 1-3 and the asymptotic identities \eqref{eq:h_t_gaussian} - \eqref{eq:w_Aq_independence} hold up to iteration $t$. Then, we start with the following norm

\noindent
\begin{align*}
    \lim_{N \rightarrow \infty} &\inv{N} ||\bhs_{t+1} - \bx||^2 \overset{a.s.}{=} \lim_{N \rightarrow \infty} \inv{N} ||\bhs_{t+1}||^2 + v_x - \frac{2}{N} \bx^T \bhs_{t+1} \nonumber\\
    &= \lim_{N \rightarrow \infty} \inv{N} ||\bhs_{t+1}||^2 + v_x - \frac{2}{N} \big( \br_t^T \bhs_{t+1} - \bh_t^T \bhs_{t+1} \big) 
\end{align*}

\noindent
where we used the identity $\bx = \br_t - \bh_t$. Next, from the definition of $\bhs_{t+1}$ we have that

\noindent
\begin{align}
    \lim_{N \rightarrow \infty} \inv{N} \bh_t^T \bhs_{t+1}& = \lim_{N \rightarrow \infty} \inv{N} \bh_t^T (\bg_D(\br_t) - \halfa \br_t) \nonumber\\
    &\overset{a.s.}{=} \alpha_t v_{h_t} - \halfa_t v_{h_t}
\end{align}

\noindent
where we used the Strong Law of Large Numbers and Stein's Lemma \cite{SURE}. Finally, by defining $\eta_{t+1}(\halfa) = 2 (\alpha_t - \halfa) v_{h_t}$ and $J_1(\halfa) = \inv{N} ||\bhs_{t+1}||^2 - 2 \br_t^T \bhs_{t+1} + v_x$ we get the asymptotic identity for $J_1(\halfa)$. 

Next, to prove the identity for $J_2(\halfa)$, we can use \eqref{eq:y_measurements} to show that

\noindent
\begin{align}
    &\inv{N} ||\by - \bA \bhs_{t+1}||^2 = \inv{N} ||\bw - \bA \bhq_{t+1}||^2 \nonumber\\
    &= \inv{N} ||\bw||^2 + \inv{N} ||\bA \bhq_{t+1}||^2 - 2 \inv{N} \bw^T \bA \bhq_{t+1}
\end{align}

\noindent
Define the vector $\bhb_{t+1} = \bV^T \bhq_{t+1}$. Then, under the above assumptions, it was shown in \cite{UnifiedSE} that $\bhb_{t+1}$ has the following evolution model

\noindent
\begin{equation}
    \bhb_{t+1} = \bb_{t+1}(\halfa) = \bB_{t+1} \bhnu_{t+1} + \bM_{t+1} \bhphi_{t+1} + \bhu_{t+1}
\end{equation}

\noindent
where the matrices $\bB_{t+1}$ and $\bM_{t+1}$ and vectors $\bhnu_{t+1}$ and $\bhphi_{t+1}$ are as in Lemma \ref{lemma:E1_E2_LSL_model} and $\bhu_t$ is a zero-mean i.i.d. Gaussian vector independent of $\bB_{t+1}$ and of $\bM_{t+1}$. Additionally, one can show that the matrices $\bB_{t+1}$ and $\bM_{t+1}$ are orthogonal to each other \cite{UnifiedSE}. This implies that the variance of $\bhq_{t+1}$ can be written as

\noindent
\begin{align*}
    &\lim_{N \rightarrow \infty} \inv{N} ||\bhq_{t+1}||^2 =  \lim_{N \rightarrow \infty} \inv{N} ||\bhb_{t+1}||^2 \nonumber\\
    &\overset{a.s.}{=} \lim_{N \rightarrow \infty} \inv{N} \Big( ||\bB_{t+1} \bhnu_{t+1}||^2 +  ||\bM_{t+1} \bhphi_{t+1}||^2 + ||\bhu_{t+1}||^2 \Big)
\end{align*}

\noindent
Next, we can use the SVD of $\bA$, the fact that $\bB_{t+1}$ is a set of zero-mean Gaussian vectors and Lemma \ref{lemma:normal_vectors_matrix_interaction} to show that

\noindent
\begin{equation}
    \lim_{N \rightarrow \infty} \inv{N} ||\bS \bB_{t+1} \bhnu_{t+1}||^2 \overset{a.s.}{=} \lim_{N \rightarrow \infty} \inv{N} ||\bB_{t+1} \bhnu_{t+1}||^2
\end{equation}

\noindent
With these results we can obtain

\noindent
\begin{align}
    &\lim_{N \rightarrow \infty} \inv{N} ||\bA \bhq_{t+1}||^2 = \lim_{N \rightarrow \infty} \inv{N} || \bS \bhb_{t+1}||^2 \nonumber\\
    &= \lim_{N \rightarrow \infty} \inv{N} ||\bS \bB_{t+1} \bhnu_{t+1}||^2 + \inv{N} ||\bS \bM_{t+1} \bhphi_{t+1}||^2 \nonumber\\
    &+ \inv{N} ||\bS \bhu_{t+1}||^2 + 2 \inv{N} \bhnu_{t+1}^T \bB_{t+1}^T \bS^T \bS \bM_{t+1} \bhphi_{t+1} \nonumber\\
    &\overset{a.s.}{=} \lim_{N \rightarrow \infty} \inv{N} ||\bB_{t+1} \bhnu_{t+1}||^2 + \inv{N} ||\bS \bM_{t+1} \bhphi_{t+1}||^2 \nonumber\\
    &+ \inv{N} ||\bhu_{t+1}||^2 + 2 \inv{N} \bhnu_{t+1}^T \bB_{t+1}^T \bS^T \bS \bM_{t+1} \bhphi_{t+1}
\end{align}

\noindent
By adding and subtracting the norm $\inv{N} || \bM_{t+1} \bhphi_{t+1}||^2$ and defining

\noindent
\begin{align*}
    &\zeta_{t+1}(\halfa) = \inv{N} ||\bS \bM_{t+1} \bhphi_{t+1}||^2 - \inv{N} ||\bM_{t+1} \bhphi_{t+1}||^2 \nonumber\\
    &+ 2 \inv{N} \bhnu_{t+1}^T \bB_{t+1}^T \bS^T \bS \bM_{t+1} \bhphi_{t+1}
\end{align*}

\noindent
we can show that

\noindent
\begin{equation}
    \lim_{N \rightarrow \infty} \inv{N} ||\bA \bhq_{t+1}||^2 \overset{a.s.}{=} \inv{N} ||\bhq_{t+1}||^2 + \zeta_{t+1}(\halfa)
\end{equation}

\noindent
Then, if we set $\beta_{t+1}(\halfa) = \zeta_{t+1}(\halfa) - 2 \inv{N} \bw^T \bA \bhq_{t+1}$, we obtain the desired LSL result for $J_2(\halfa)$. 

Lastly, we can use \eqref{eq:h_q_independence} and \eqref{eq:w_Aq_independence} to show that for $\halfa = \alpha_t$ we have $\lim_{N \rightarrow \infty} \bhphi_{t+1} \overset{a.s.}{=} 0$ and $\lim_{N \rightarrow \infty} \inv{N} \bw^T \bA \bhq_{t+1} \overset{a.s.}{=} 0$, which implies that $\lim_{N \rightarrow \infty} \beta_{t+1}(\halfa) \overset{a.s.}{=} 0$. \end{proof}

\noindent
Lemma \ref{lemma:E1_E2_LSL_model} suggests that $\alpha_t$ is one of the roots to the following equation

\noindent
\begin{equation*}
    \lim_{N \rightarrow \infty} E^{[1]}(\halfa) = \lim_{N \rightarrow \infty} J_1(\halfa) - J_2(\halfa) = 0 
\end{equation*}

\noindent
Next we show that this equation is equivalent to \eqref{eq:quadratic_equation_1}. To increase the readability, in the following we use $\bg_D$ to refer to $\bg_D(\br_t)$. First, we expand the norm in $J_2(\halfa)$ to obtain

\noindent
\begin{equation}
    ||\by - \bA \bhs_{t+1}||^2 = ||\by||^2 + ||\bA \bhs_{t+1}||^2 - 2 \by^T \bA \bhs_{t+1} \label{eq:norm_y_minus_A_s_hat}
\end{equation}

\noindent
Using the definition of $\bhs_{t+1}$, we can expand the norm $||\bA \bhs_{t+1}||^2$ as

\noindent
\begin{equation}
    ||\bA \bhs_{t+1}||^2 = c_1 - c_2 \halfa + c_3 \halfa^2 \label{eq:norm_A_s_hat}
\end{equation}

\noindent
where the scalars $c_1$, $c_2$ and $c_3$ are

\noindent
\begin{equation*}
    c_1 = ||\bA \bg_D||^2 \quad c_2 = 2 \bg_D^T \bA^T \bA \br_t \quad c_3 = ||\bA \br_t||^2
\end{equation*}

\noindent
Similarly we expand the inner-product $\by^T \bA \bhs_{t+1}$

\noindent
\begin{equation}
    \inv{N} \by^T \bA \bhs_{t+1} = d_1 - d_2 \halfa \label{eq:y_A_s_hat}
\end{equation}

\noindent
with

\noindent
\begin{equation*}
    d_1 = \by^T \bA \bg_D \quad d_2 = \by^T \bA \br_t
\end{equation*}

\noindent
By defining a scalar $c_y = ||\by||^2$ and using \eqref{eq:norm_A_s_hat} and \eqref{eq:y_A_s_hat} in \eqref{eq:norm_y_minus_A_s_hat}, we obtain

\noindent
\begin{align}
    &||\by - \bA \bhs_{t+1}||^2 = c_y + c_1 - c_2 \halfa + c_3 \halfa^2 - 2 \big( d_1 - d_2 \halfa \big) \nonumber\\
    &= \big( c_y + c_1 - 2 d_1 \big) + \big( 2 d_2 - c_2 \big) \halfa + c_3 \halfa^2 \label{eq:norm_y_minus_A_s_hat_simplified}
\end{align}

\noindent
In the similar way we can expand the norm in $J_1(\halfa)$. Since

\noindent
\begin{gather}
    ||\bhs_{t+1}||^2 = ||\bg_D||^2 - 2 \halfa \br_t^T \bg_D + \halfa^2 ||\br_t||^2 \\
    \br_t^T \bhs_{t+1} = \br_t^T \bg_D - \halfa ||\br_t||^2
\end{gather}

\noindent
we can show that $J_1(\halfa)$ is equal to

\noindent
\begin{align}
    J_1(\halfa) = \inv{N} \big( m_1 + m_2 \halfa + m_3 \halfa^2 \big) \label{eq:norm_s_hat_minus_r}
\end{align}

\noindent
where we used the fact that $v_x \overset{a.s.}{=} \lim_{N \rightarrow \infty} \inv{N} ||\br_t||^2 - v_{h_t}$ and defined

\noindent
\begin{equation*}
    m_1 = ||\bg_D - \br_t||^2 \quad m_2 = 2 \br_t^T ( \br_t - \bg_D ) \quad m_3 = || \br_t||^2
\end{equation*}

\noindent
Using \eqref{eq:norm_y_minus_A_s_hat_simplified} and \eqref{eq:norm_s_hat_minus_r}, we can show that the difference $J_1(\halfa) - J_2(\halfa)$ is equal to

\noindent
\begin{align}
    &J_1(\halfa) - J_2(\halfa) = \inv{N} \Big( m_1 + m_2 \halfa + m_3 \halfa^2 - v_{h_t} \nonumber\\
    &- \big( c_y + c_1 - 2 d_1 \big) - \big( 2 d_2 - c_2 \big) \halfa - c_3 \halfa^2 + \delta v_w \Big) \nonumber\\
    &= u_1 + u_2 \halfa + u_3 \halfa^2
\end{align}

\noindent
where $u_1$ is

\noindent
\begin{align*}
    &u_1 = \inv{N} ( m_1 - v_{h_t} - c_y - c_1 + 2d_1 + \delta v_w) \nonumber\\
    &= \inv{N} ||\bg_D - \br_t||^2 - v_{h_t} - \big( \inv{N} ||\bA \bg_D - \by||^2 - \delta v_w \big)
\end{align*}

\noindent
$u_2$ is

\noindent
\begin{align*}
    &u_2 = \inv{N} \big( m_2 - 2 d_2 + c_2 \big) \nonumber\\
    &= \frac{2}{N} \Big( ( \br_t - \bg_D )^T \br_t - (\by - \bA \bg_D)^T \bA \br_t \Big)
\end{align*}

\noindent
and $u_3$ is

\noindent
\begin{equation*}
    u_3 = \inv{N} (m_3 - c_3) = \inv{N} ||\br_t||^2 - \inv{N} ||\bA \br_t||^2
\end{equation*}

\noindent
which completes the proof. \end{proof}

\section*{Appendix B}
\begin{proof}[\unskip\nopunct]

The proof of Theorem \ref{th:divergence_identity_OAMP_based_algorithms} for MF-OAMP, VAMP and CG-VAMP is based on the asymptotic result formulated in the following theorem.

\begin{theorem} \label{th:divergence_general_identity}
    
    Consider an OAMP-based algorithm \eqref{eq:LB}-\eqref{eq:DB}, where the denoising step uses a denoiser $\bg_D(\br_t)$, $\br_t = \bx + \bh_t$ and let $\alpha_t$ be the divergence of $\bg_D$ at $\br_t$. Let $\bff_L$ have a form
    
    \noindent
    \begin{equation}
        \bff_L(\bZ_t) = \inv{\sum_{\tau = 0}^t \gamma_t^{\tau}} \Big( \sum_{\tau = 0}^t \gamma_t^{\tau} \bs_{\tau} + \bA^T \bF_t(\bZ_t) \Big) \label{eq:th4_f_L}
    \end{equation}
    
    \noindent
    where $\bZ_t = \big(\bz_t, \bz_{t-1},..., \bz_0 \big)$, $\bz_{\tau} = \by - \bA \bs_{\tau}$ and 
    
    \noindent
    \begin{equation}
        \gamma_{\tau}^t = \inv{N} Tr \Big\{ \bA \bJ_{\bs_{\tau}} \big( \bF_t(\bZ_t) \big) \Big\} \label{eq:th4_gamma}
    \end{equation}
    
    \noindent
    Additionally, let $\bF_t$ have a finite Jacobian $\bJ_{\bs_{\tau}} \big( \bF_t(\bZ_t) \big)$ with respect to each $\bs_{\tau}$, $\tau \leq t$ as $N \rightarrow \infty$. Define a corrected denoiser
    
    \noindent
    \begin{equation}
        \bbf(\br_t) = \bg_D(\br_t) - \alpha_t \br_t \label{eq:bar_f}
    \end{equation}
    
    \noindent
    and its error
    
    \noindent
    \begin{equation*}
        \bbq_{t+1} = \bbf(\br_t) - \bx \label{eq:bar_q}
    \end{equation*}
    
    \noindent
    Then, under Assumptions 1-3 and assuming \eqref{eq:h_t_gaussian} - \eqref{eq:w_Aq_independence} hold up to iteration $t$, the derivative \eqref{eq:quadratic_equation_derivative} at $\halfa = \alpha_t$ almost surely converges to
    
    \noindent
    \begin{equation}
        \lim_{N \rightarrow \infty} u_2 + 2 u_3 \alpha_t \overset{a.s.}{=} \lim_{N \rightarrow \infty} 2 \big( v_{h_t} + \inv{N} \bh_t^T \bA^T \bA \bbq_{t+1} - v_w \big) \label{eq:th_LSL_poly_deriv}
    \end{equation}
    
\end{theorem}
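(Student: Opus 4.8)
The plan is to show that the half-combination $\tfrac{1}{2}(u_2 + 2u_3\alpha_t)$ collapses, after the cross terms that vanish under the State Evolution identities are removed, to $v_{h_t} + \inv{N}\bh_t^T\bA^T\bA\bbq_{t+1} - v_w$, i.e. the right side of \eqref{eq:th_LSL_poly_deriv} divided by two. First I would rearrange $\tfrac{1}{2}(u_2 + 2u_3\alpha_t)$ purely algebraically, using the coefficient formulas \eqref{eq:u_2}--\eqref{eq:u_3} and grouping the signal-domain and the measurement-domain contributions separately. Invoking \eqref{eq:bar_f} together with $\bbq_{t+1} = \bbf(\br_t) - \bx$, the two groupings factor through the identities $(1+\alpha_t)\br_t - \bg_D(\br_t) = \bh_t - \bbq_{t+1}$ and $\by - \bA\bbf(\br_t) = \bw - \bA\bbq_{t+1}$, so that the whole expression becomes $\inv{N}\br_t^T(\bh_t - \bbq_{t+1}) - \inv{N}(\bw - \bA\bbq_{t+1})^T\bA\br_t$, a handful of inner products of the observable vectors with the oracle quantities $\bh_t$, $\bbq_{t+1}$ and $\bw$.

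Second, I would substitute $\br_t = \bx + \bh_t$ throughout and discard the terms that vanish almost surely by \eqref{eq:h_t_gaussian}--\eqref{eq:w_Aq_independence}: namely $\inv{N}\bx^T\bh_t \to 0$, the Onsager orthogonality $\inv{N}\bh_t^T\bbq_{t+1} \to 0$ (which follows from Stein's Lemma and the fact that $\alpha_t$ is the divergence of $\bg_D$, so $\inv{N}\bh_t^T\bg_D(\br_t)\to\alpha_t v_{h_t}$), and $\inv{N}\bw^T\bA\bq_\tau\to 0$. This leaves $\inv{N}\br_t^T\bh_t\to v_{h_t}$ and the cross term $\inv{N}\bh_t^T\bA^T\bA\bbq_{t+1}$, which already match the target, plus a residual $\inv{N}\bbq_{t+1}^T(\bA^T\bA - \bI)\bx - \inv{N}\bw^T\bA\bh_t$ that must be shown to converge to $-v_w$.

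Third, and this is the crux, I would evaluate the residual using the prescribed structure \eqref{eq:th4_f_L} of the linear block. Since $\bs_\tau = \bq_\tau + \bx$ and the weights normalise the signal component, one gets $\bh_t = \tfrac{1}{\Gamma}(\sum_\tau\gamma_t^\tau\bq_\tau + \bA^T\bF_t(\bZ_t))$ with $\Gamma = \sum_\tau\gamma_t^\tau$. The term $\inv{N}\bbq_{t+1}^T(\bA^T\bA - \bI)\bx$ (with $\bx = -\bq_0$) is handled by rotating into the $\bV$-domain, where $\bq = \bV\bb$ and $\bA^T\bA = \bV\bS^T\bS\bV^T$, and applying Lemma \ref{lemma:normal_vectors_matrix_interaction}: because $\bbq_{t+1}$ lies, by its evolution model, in the span of the Gaussian vectors $\{\bq_\tau\}$ plus an independent Gaussian, and $\inv{N}Tr\{\bS^T\bS\} = 1$ by the normalisation, this inner product collapses to $\inv{N}\bbq_{t+1}^T\bx$ and cancels. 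For $\inv{N}\bw^T\bA\bh_t$ the only surviving piece is $\tfrac{1}{\Gamma}\inv{N}\bw^T\bA\bA^T\bF_t(\bZ_t)$; treating $\bw$ as the Gaussian and applying Gaussian integration by parts with $\frac{\partial\bz_\tau}{\partial\bw} = \bI_M$, together with the definition \eqref{eq:th4_gamma} of $\gamma_t^\tau$ as the $\bA\bA^T$-weighted divergence of $\bF_t$, yields $\inv{N}\bw^T\bA\bA^T\bF_t(\bZ_t)\to v_w\sum_\tau\gamma_t^\tau = v_w\Gamma$, so $\inv{N}\bw^T\bA\bh_t\to v_w$. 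Substituting both evaluations gives residual $\to -v_w$ and completes the proof.

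The main obstacle I anticipate is making the integration-by-parts step for $\inv{N}\bw^T\bA\bA^T\bF_t$ rigorous: $\bF_t$ depends on $\bw$ not only through the explicit $\bw$ in each $\bz_\tau = \bw - \bA\bq_\tau$ but also implicitly through the history $\bq_\tau$, which is itself built from $\bw$. Justifying that the implicit dependence contributes only vanishing $\inv{N}\bw^T(\cdots)\bb_\tau$-type terms, so that exactly the divergence $\Gamma$ survives, is precisely what the conditioning technique for the random matrix $\bA$ must supply, and this is where the finite-Jacobian hypothesis on $\bF_t$ and the form \eqref{eq:th4_gamma} of $\gamma_t^\tau$ become essential.
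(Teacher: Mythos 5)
Your proposal is correct and takes essentially the same route as the paper's proof in Appendix B: the same algebraic collapse of $u_2 + 2u_3\alpha_t$ into inner products of $\br_t$ with $\bh_t - \bbq_{t+1}$ and $\bw - \bA\bbq_{t+1}$, followed by the same two key evaluations, namely $\inv{N}\bw^T\bA\bh_t \rightarrow v_w$ via \eqref{eq:w_Aq_independence} and Gaussian integration by parts against the Jacobian definition \eqref{eq:th4_gamma}, and the vanishing of the $\bx$-terms. The only cosmetic difference is that you cancel $\inv{N}\bbq_{t+1}^T(\bA^T\bA - \bI)\bx$ by applying Lemma \ref{lemma:normal_vectors_matrix_interaction} in the $\bV$-domain, whereas the paper invokes Lemma \ref{lemma:q_t_behaviour} with $\bD = \bI$, which is the same fact.
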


\begin{proof}
    In the following, let Assumptions 1-3 and the asymptotic identities \eqref{eq:h_t_gaussian} - \eqref{eq:w_Aq_independence} hold up to iteration $t$ and $\alpha_t$ be the divergence of $\bg_D(\br_t)$. Then, we can use the definitions of $u_2$ and $u_3$ from Theorem \ref{th:two_quadratic_equations} to show that the derivative \eqref{eq:quadratic_equation_derivative} at $\halfa = \alpha_t$ is equal to

    \noindent
    \begin{align}
        &u_2 + 2 u_3 \alpha = \frac{2}{N} \bigg(  ||\br_t||^2 - \br_t^T \big( \bg_D(\br_t) - \alpha \br_t \big) - \br_t^T \bA^T \by \nonumber\\
        &+ \br_t^T \bA^T \bA \big( \bg_D(\br_t) - \alpha \br_t \big) \bigg) \label{eq:poly_deriv_full}
    \end{align}
    
    \noindent
    Note that by defining the output of the corrected denoiser $\bbs_{t+1} = \bbf(\br_t)$, we can rewrite \eqref{eq:poly_deriv_full} as
    
    \noindent
    \begin{align}
         & u_2 + 2 u_3 \alpha = \frac{2}{N} \big(  ||\br_t||^2 - \br_t^T \bbs_{t+1} - \br_t^T \bA^T \big( \by - \bA \bbs_{t+1} \big) \big) \nonumber\\
         &= \frac{2}{N} \bigg( ||\br_t||^2 - \br_t^T \Big( \bbs_{t+1} + \bA^T \big( \by - \bA \bbs_{t+1} \big) \Big) \bigg) \label{eq:deriv_pre_final_result}
    \end{align}
    
    \noindent
    Let $\br_t = \bx + \bh_t$ be updated as $\br_t = \bff_L(\bZ_t)$ with $\bff_L$ defined in \eqref{eq:th4_f_L}. Then, using the SVD of $\bA = \bU \bS \bV^T$, the definition $\bb_{\tau} = \bV^T \bq_{\tau}$ and \eqref{eq:w_Aq_independence}, we can show that
    
    \noindent
    \begin{equation}
        \lim_{N \rightarrow \infty} \inv{N} \bw^T \bA \bh_t \overset{a.s.}{=} \lim_{N \rightarrow \infty} \inv{\sum_{\tau = 0}^t \gamma_t^{\tau}} \inv{N} \bw^T \bA \bA^T \bF_t (\bZ_t) \label{eq:w_A_h}
    \end{equation}
    
    \noindent
    where $\bz_{\tau} = \bw - \bA \bq_{\tau}$. In the following, let $\bF_t$ have a finite Jacobian with respect to each $\bs_{\tau}$, $\tau \leq t$ as $N \rightarrow \infty$. Then, by using the Law of Large Numbers, Stein's Lemma \cite{SURE}, \eqref{eq:b_t_gaussian} and \eqref{eq:w_Aq_independence}, we obtain
    
    \noindent
    \begin{align}
        &\lim_{N \rightarrow \infty} \inv{N} \bw^T \bA \bA^T \bF_t (\bZ_t) \overset{a.s.}{=} \lim_{N \rightarrow \infty} \expectation \bigg[ \inv{N} \bw^T \bA \bA^T \bF_t (\bZ_t) \bigg] \nonumber\\
        &= \lim_{N \rightarrow \infty} \inv{N} \expectation \bigg[ \inv{M} Tr \Big\{ \bw (\bF_t (\bZ_t))^T \bA \bA^T \Big\} \bigg] \nonumber\\
        &\overset{a.s.}{=} \lim_{N \rightarrow \infty} \inv{N} \inv{M} Tr \bigg\{ \bA \bA^T \sum_{\tau = 0}^t \expectation \big[ \bw \bz_{\tau}^T \big] \bJ_{\bz_{\tau}} \big( \bF_t(\bZ_t) \big) \bigg\} \nonumber\\
        &\overset{a.s.}{=} \lim_{N \rightarrow \infty} \inv{N} \inv{M} Tr \bigg\{ \bA \bA^T \sum_{\tau = 0}^t \expectation \big[ \bw \bw^T \big] \bJ_{\bz_{\tau}} \big( \bF_t(\bZ_t) \big) \bigg\} \nonumber\\
        &\overset{a.s.}{=} \lim_{N \rightarrow \infty} v_w \sum_{\tau = 0}^t \inv{N} Tr \Big\{ \bA \bA^T \bJ_{\bz_{\tau}} \big( \bF_t(\bZ_t) \big) \Big\} = v_w \sum_{\tau = 0}^t \gamma_{\tau}^t \label{eq:w_A_At_F}
    \end{align}
    
    \noindent
    where the last step comes from the definition of $\bz_{\tau}$ and of $\gamma_{\tau}^t$. Therefore we conclude that \eqref{eq:w_A_h} almost surely converges to $v_w$ and, together with \eqref{eq:h_q_independence}, this implies that 
    
    \noindent
    \begin{align}
        \lim_{N \rightarrow \infty} \inv{N} \bh_t^T &\Big( \bx + \bbq_{t+1} + \bA^T \big( \bw - \bA \bbq_{t+1} \big) \Big) \nonumber\\
        &\overset{a.s.}{=} v_w - \inv{N} \bh_t^T \bA^T \bA \bbq_{t+1} \label{eq:h_h_hat}
    \end{align}
    
    \noindent
     Additionally, since $\by - \bA \bbs_{t+1} = \bw - \bA \bbq_{t+1}$, $\br_t = \bx + \bh_t$ and $\bx = -\bq_0$, we can use the Strong Law of Large Numbers and Lemma \ref{lemma:q_t_behaviour} to obtain
    
    \noindent
    \begin{align*}
        &\lim_{N \rightarrow \infty} \inv{N} \bx^T \Big( \bx + \bbq_{t+1} + \bA^T \big( \by - \bA \bbs_{t+1} \big) \Big) \nonumber\\
        &\overset{a.s.}{=} v_x - \lim_{N \rightarrow \infty} \inv{N} \bq_0^T \Big( \bbq_{t+1} + \bA^T (\bw - \bA \bbq_{t+1}) \overset{a.s.}{=} v_x
    \end{align*}
    
    \noindent
    Together with $\eqref{eq:h_h_hat}$, \eqref{eq:deriv_pre_final_result} and the fact that $\lim_{N \rightarrow \infty} \inv{N} ||\br_t||^2 \overset{a.s.}{=} v_x + v_{h_t}$, this implies the result \eqref{eq:th_LSL_poly_deriv}. \end{proof}

To proceed next, first, we mention that for MF-OAMP, VAMP and CG-VAMP we have $\bF_t(\bZ_t) = \bF_t(\bz_t)$. In those cases, we can do the same steps as in \eqref{eq:w_A_At_F} to show that

\noindent
\begin{equation}
    \gamma_t^t = \gamma_t = \lim_{N \rightarrow \infty} -\frac{\bq_t^T \bA^T \bF_t(\bz_t)}{N v_{q_t}} \label{eq:gamma_general}
\end{equation}

\noindent
and $\gamma_{\tau}^t = 0$ for $\tau < t$. In particular, substituting $\bF_t(\bz_t) = \bz_t$ into \eqref{eq:th4_f_L} for MF-OAMP, $\bF_t(\bz_t) = \bW_t^{-1} \bz_t$ for VAMP, leads us to the correction scalars $\gamma_t = 1$ for MF-OAMP and as in \eqref{eq:gamma_VAMP} for VAMP. For CG-VAMP we can use Lemma 1 from \cite{CG_EP} to show that as $N \rightarrow \infty$ the CG algorithm approximating the SLE \eqref{eq:SLE} almost surely converges to a matrix polynomial of $\bA \bA^T$, which implies $J_{s_t} \bF_t(\bz_t)$ is finite. Thus, we have a well-defined $\gamma_t^t = \gamma_t$ and $\gamma_{\tau} = 0$, $\tau < t$ for CG-VAMP. Lastly, the linear update $\bff_L$ with $\bF_t(\bZ_t)$ in WS-CG-VAMP also fits into the model \eqref{eq:th4_f_L} as shown in Theorem 3 in \cite{OurPaper}. Thus, the result \eqref{eq:th_LSL_poly_deriv} holds for MF-OAMP, VAMP, CG-VAMP and WS-CG-VAMP.

Next we finish the proof of Theorem \ref{th:divergence_identity_OAMP_based_algorithms} for MF-OAMP and VAMP algorithms, while the proof for CG-VAMP is presented in the supplementary materials. We begin with noting that

\noindent
\begin{align}
    &\lim_{N \rightarrow \infty} \inv{N} \bh_t^T \bA^T \bA \bbq_{t+1} \nonumber\\
    &= \lim_{N \rightarrow \infty} \inv{N} \Big( \bq_t + \gamma_t^{-1} \bA^T \bF_t(\bz_t) \Big)^T \bA^T \bA \bbq_{t+1} \nonumber\\
    &= \lim_{N \rightarrow \infty} \inv{N} \bq_t^T \bA^T \bA \bbq_{t+1} + \gamma_t^{-1} \inv{N} (\bF_t(\bz_t))^T\bA \bA^T \bA \bbq_{t+1} \nonumber\\
    &\overset{a.s.}{=} \lim_{N \rightarrow \infty} \psi_t + \gamma_t^{-1} \inv{N} (\bF_t(\bz_t))^T\bA \bA^T \bA \bbq_{t+1} \label{eq:OAMP_h_At_A_q_hat}
\end{align}

\noindent
where we used $\inv{N} Tr \big\{ \bA \bA^T \big\} = 1$ and defined a scalar $\psi_t = \inv{N} \bq_t^T \bbq_{t+1}$. Next, we consider MF-OAMP and VAMP separately.

\subsection{MF-OAMP}

First, using the definition of $\bff_L$ in MF-OAMP as above and Lemma \ref{lemma:q_t_behaviour}, we can obtain

\noindent
\begin{align}
    \lim_{N \rightarrow \infty} \inv{N} &(\bF_t(\bz_t))^T\bA \bA^T \bA \bbq_{t+1} = \lim_{N \rightarrow \infty} \bz_t^T \bA \bA^T \bA \bbq_{t+1} \nonumber\\
    &\overset{a.s.}{=} \lim_{N \rightarrow \infty} - \psi_t \inv{N} Tr \big\{ \Lambda^2 \big\} = - \psi_t \chi_2 \label{eq:Ft_A_At_A_q_bar}
\end{align}

\noindent
where $\chi_j$ is as in \eqref{eq:chi}. Similarly, we can use \eqref{eq:gamma_general} to show that 

\noindent
\begin{align}
    &v_{h_t} \overset{a.s.}{=} \lim_{N \rightarrow \infty} \inv{N} ||\bh_t||^2 =  \lim_{N \rightarrow \infty} \inv{N} || \bq_t + \bA^T \bz_t ||^2 \nonumber\\
    &\overset{a.s.}{=} \lim_{N \rightarrow \infty} v_{q_t} + 2 \inv{N} \bq_t^T \bA^T \bz_t + \inv{N} || \bA^T \bz_t ||^2 \nonumber\\
    &\overset{a.s.}{=} v_w + v_{q_t} \chi_2 - v_{q_t} \label{eq:v_h}
\end{align}

\noindent
where we also used \eqref{eq:w_Aq_independence}. Substituting \eqref{eq:OAMP_h_At_A_q_hat}, \eqref{eq:Ft_A_At_A_q_bar} and \eqref{eq:v_h} into \eqref{eq:th_LSL_poly_deriv} leads to the final result for MF-OAMP

\noindent
\begin{align}
    &\lim_{N \rightarrow \infty} \frac{1}{2} \big( u_2 + 2 \alpha u_3 \big) \overset{a.s.}{=} ( \chi_2 - 1) v_{q_t} + \psi_t - \psi_t \chi_2 \nonumber\\
    &= \big( \psi_t - v_{q_t} \big) - \chi_2 \big( \psi_t - v_{q_t} \big) = \big( 1 - \chi_2 \big) \big( \psi_t - v_{q_t} \big) \nonumber
\end{align}

\subsection{VAMP}

For VAMP, we begin by noticing that

\noindent
\begin{equation}
    \bA \bA^T = \frac{\bW_t - v_w \bI}{v_{q_t}}
\end{equation}

\noindent
which implies

\noindent
\begin{align}
    &\lim_{N \rightarrow \infty} \inv{N} (\bF_t(\bz_t))^T\bA \bA^T \bA \bbq_{t+1} \nonumber\\
    &= \lim_{N \rightarrow \infty} \frac{\inv{N} (\bF_t(\bz_t))^T\bW_t  \bA \bbq_{t+1} - v_w \inv{N} (\bF_t(\bz_t))^T\bA \bbq_{t+1}}{v_{q_t}} \nonumber\\
    &\overset{a.s.}{=} \lim_{N \rightarrow \infty} \frac{\inv{N}  (\bF_t(\bz_t))^T\bW_t  \bA \bbq_{t+1} + v_w \gamma_t \psi_t }{v_{q_t}} \label{eq:OAMP_based_alg_second_comp_general}
\end{align}

\noindent
where we also used \eqref{eq:gamma_general}. Since in VAMP $\bF_t(\bz_t) = \bW_t^{-1} \bz_t$, we use \eqref{eq:w_Aq_independence} to obtain

\noindent
\begin{align*}
    &\lim_{N \rightarrow \infty} \inv{N} (\bF_t(\bz_t))^T\bA \bA^T \bA \bbq_{t+1} \nonumber\\
    &= \lim_{N \rightarrow \infty} \frac{\inv{N} \bz_t^T \bW_t^{-1} \bW_t  \bA \bbq_{t+1} + v_w \gamma_t \psi_t }{v_{q_t}} \overset{a.s.}{=} \frac{-\psi_t  + v_w \gamma_t \psi_t }{v_{q_t}} 
\end{align*}

\noindent
Using this result together with \eqref{eq:OAMP_h_At_A_q_hat} and the fact that for VAMP the variance $v_{h_t}$ is updated as \cite{EP_Keigo}

\noindent
\begin{equation}
    v_{h_t} = \gamma_t^{-1} - v_{q_t}, \label{eq:VAMP_v_h}
\end{equation}

\noindent
we can obtain

\noindent
\begin{align}
    &\lim_{N \rightarrow \infty} \frac{1}{2} \big( u_2 + 2 \alpha u_3 \big) \nonumber\\
    &\overset{a.s.}{=} \psi_t - v_{q_t} + \gamma_t^{-1} - \gamma_t^{-1} \frac{\psi_t}{v_{q_t}} + \frac{v_w \psi_t}{v_{q_t}} - v_w \nonumber\\
    &= \psi_t - v_{q_t} - \gamma_t^{-1} v_{q_t}^{-1} \Big( \psi_t - v_{q_t} \Big) + \frac{v_w \psi_t - v_{q_t} v_w }{v_{q_t}} \nonumber\\
    &= \big( 1 + (v_w - \gamma_t^{-1}) v_{q_t}^{-1} \big) \Big( \psi_t - v_{q_t} \Big) \nonumber\\
    &= \frac{\big(v_w + v_{q_t} - \gamma_t^{-1} \big)}{v_{q_t}} \Big( \psi_t - v_{q_t} \Big) = \frac{\big(v_w - v_{h_t} \big)}{v_{q_t}} \Big( \psi_t - v_{q_t} \Big) \nonumber
\end{align}

\noindent
which completes the proof for VAMP. \end{proof}

\section*{Appendix C}
\begin{proof}[\unskip\nopunct]

Here we prove Lemma \ref{lemma:v_h_minus_v_w_positivity}, which analyzes the sign of the difference $v_{h_t} - v_w$. From the update rule \eqref{eq:VAMP_v_h} of $v_{h_t}$ we have that

\noindent
\begin{equation}
    v_{h_t} - v_w = \gamma_t^{-1} - v_{q_t} - v_w = \frac{1 - \gamma_t(v_{q_t} + v_w)}{\gamma_t} \label{eq:v_h_minus_v_w}
\end{equation}

\noindent
Because $\gamma_t$ is strictly positive \cite{EP_Keigo}, for the positivity of \eqref{eq:v_h_minus_v_w} it is sufficient to have $1 - \gamma_t(v_{q_t} + v_w)$ to be positive. To proceed next, we notice that as $N \rightarrow \infty$, we can define $\gamma_t$ through the limiting eigenvalue distribution $p(\lambda)$ of $\Lambda = \bS \bS^T$ as \cite{EP_Keigo}

\noindent
\begin{equation}
    \lim_{N \rightarrow \infty} \gamma_t = \delta \int \frac{\lambda}{v_w + v_{q_t} \lambda} p(\lambda) d \lambda \label{eq:VAMP_gamma}
\end{equation}

\noindent
With \eqref{eq:VAMP_gamma}, we can rewrite the term $\gamma_t(v_{q_t} + v_w)$ as

\noindent
\begin{align}
    &\gamma_t(v_{q_t} + v_w) = \delta \int \frac{\lambda(v_{q_t} + v_w)}{v_w + v_{q_t} \lambda} p(\lambda) d \lambda \nonumber\\
    &= \delta \int \frac{v_{q_t} \lambda + v_w - v_w + v_w \lambda}{v_w + v_{q_t} \lambda} p(\lambda) d \lambda \nonumber\\
    &= \delta \Big( 1 + \int \frac{v_w \lambda - v_w}{v_w + v_{q_t} \lambda} p(\lambda) d \lambda \Big)
\end{align}

\noindent
Which implies that the nominator of \eqref{eq:v_h_minus_v_w} is equivalent to

\noindent
\begin{align}
    &1 - \gamma_t(v_{q_t} + v_w) = \delta \Big( \delta^{-1} - 1 - \int \frac{v_w \lambda - v_w}{v_w + v_{q_t} \lambda} p(\lambda) d \lambda \Big) \nonumber\\
    &= \delta \Big( \int \frac{ (\delta^{-1} - 1)(v_w + v_{q_t} \lambda) - v_w \lambda + v_w}{v_w + v_{q_t} \lambda} p(\lambda) d \lambda \Big) \nonumber\\
    &= \delta \Big( \int \frac{ \big( (\delta^{-1} - 1)v_{q_t} - v_w \big) \lambda + \delta^{-1} v_w}{v_w + v_{q_t} \lambda} p(\lambda) d \lambda \Big) \label{eq:lemma_one_proof_final_step}
\end{align}

\noindent
Since all the scalar variables in \eqref{eq:lemma_one_proof_final_step} are positive and $p(\lambda)$ is a proper probability density function, it is sufficient to have the following inequality
\noindent
\begin{equation}
    v_{q_t} > \frac{v_w}{\delta^{-1} - 1}
\end{equation}

\noindent
for \eqref{eq:lemma_one_proof_final_step} to be positive, which implies positivity of \eqref{eq:v_h_minus_v_w}. \end{proof}

\printbibliography

\end{document}


\maketitle

\begin{abstract}
    In these supplementary materials we consider two sub-problems of the main paper ``Divergence Estimation in Message Passing algorithms". In particular, we provide the proof of Theorem 3 and the motivation for Proposal 1 for the Conjugate Gradient (CG) Vector Approximate Message Passing (CG-VAMP) algorithm.
\end{abstract}

\section{Context of the problem}

To make the document self-contained, first, we set up the context of the problem. Here, we consider the CG-VAMP algorithm recovering $\bx \in \mathbb{R}^N$ from a set of linear measurements 

\noindent
\begin{equation}
    \by = \bA \bx + \bw
    \label{eq:y_measurements}
\end{equation}

\noindent
where $\by \in \mathbb{R}^M$ is the set of measurements, $\bw \in \mathbb{R}^M$  is a zero-mean i.i.d. Gaussian noise vector $\bw \sim \normDensity(0,v_w \bI_M)$ and $\bA \in \mathbb{R}^{M \times N}$. Following the notation from the main paper, the CG-VAMP algorithms consists of the linear step

\noindent
\begin{equation}
    \bff_L \big( \by, \bs_t \big) = \bs_t + \gamma_t^{-1} \bA^T \bF_t \big( \by - \bA \bs_t \big) \label{eq:f_l_VAMP}
\end{equation}

\noindent
and of the denoising step

\noindent
\begin{equation}
    \bff_D(\br_t) = C_t \Big( \bg_D(\br_t) - \alpha_t \br_t \Big) \label{eq:f_d_VAMP}
\end{equation}

\noindent
In \eqref{eq:f_d_VAMP}, $\bg_D$ represents a denoiser and $\alpha_t$ corresponds to its divergence at $\br_t$, while $C_t$ is some finite scalar. On the other hand, \eqref{eq:f_l_VAMP} involves $\bF_t \big( \by - \bA \bs_t \big)$ that is the output of the CG algorithm with $i$ iterations approximating the system of linear equations (SLE) 

\noindent
\begin{equation}
    \bW_t \bmu_t = \bz_t \label{eq:SLE}
\end{equation}

\noindent
for $\mu_t$, where

\noindent
\begin{equation}
    \bz_t = \by - \bA \bs_t 
\end{equation}

\noindent
and 

\noindent
\begin{equation}
    \bW_t = v_w \bI_M + v_{q_t} \bA \bA^T \label{eq:W_t}
\end{equation}

\noindent
In \eqref{eq:W_t}, the scalar $v_{q_t}$ models the variance of the error

\noindent
\begin{equation}
    \bq_t = \bs_t - \bx
\end{equation}

\noindent
Similarly, we define $v_{h_t}$ to be the variance of the error of the linear step

\noindent
\begin{equation}
    \bh_t = \br_t - \bx
\end{equation}

\noindent
Lastly, the scalar $\gamma_t$ follows the oracle identity

\noindent
\begin{equation}
    \gamma_t = -\frac{\bq_t^T \bA^T \bF_t(\bz_t)}{N v_{q_t}} \label{eq:gamma_general}
\end{equation}

\noindent
and can be practically estimated by rewriting $- \bA \bq_t = \bz_t - \bw$ and leveraging Lemma \ref{lemma:LSL_gamma_recursion}.

Next, we study certain properties of CG-VAMP under the following assumptions

\textbf{Assumption 1}: The dimensions of the signal model $N$ and $M$ approach infinity with a fixed ratio $\delta = \frac{M}{N} = O(1)$

\textbf{Assumption 2}: The measurement matrix $\bA$ is right-orthogonally invariant, such that in the SVD of $\bA = \bU \bS \bV^T$, the matrix $\bV$ is independent of other random terms and is uniformly distributed on the set of orthogonal matrices. The matrix $\bU$ is allowed to be any orthogonal matrix and the matrix $\bS^T \bS$ is allowed to have any Limiting Eigenvalue Distribution with compact support.

\textbf{Assumption 3}: The denoiser $\bg_D$ is uniformly Lipschitz so that the sequence of functions $\bg_D: \mathbb{R}^N \mapsto \mathbb{R}^N$ indexed by $N$ are Lipschitz continuous with a Lipschitz constant $L_N < \infty$ as $N \rightarrow \infty$ \cite{NS-VAMP}, \cite{AMP_SE_non_separable}. Additionally, we assume the sequences of the following inner-products are almost surely finite as $N \rightarrow \infty$ \cite{NS-VAMP}

\noindent
\begin{gather*}
    \lim_{N \rightarrow \infty} \inv{N} \bg_D(\bx + \bd_1)^T \bg_D(\bx + \bd_2), \; \; \lim_{N \rightarrow \infty} \inv{N} \bx^T \bg_D(\bx + \bd_1), \\ 
    \lim_{N \rightarrow \infty} \inv{N} \bd_1^T \bg_D(\bx + \bd_2), \quad \lim_{N \rightarrow \infty} \inv{N} \bx^T \bz_1, \quad  \lim_{N \rightarrow \infty} \inv{N} ||\bx||^2 
\end{gather*}

\noindent
where $\bd_1, \bd_2 \in \mathbb{R}^N$ with $( \bd_{1,n}, \bd_{2,n} ) \sim \normDensity(\bzero, C)$ for some positive definite $C \in \mathbb{R}^2$.

\noindent
Additionally, without the loss of generality we let $\bA$ be normalized so that $\inv{N} Tr \big\{ \bA \bA^T \big\} = 1$. Then, under the above assumptions, we have that \cite{UnifiedSE}, \cite{VAMP}

\noindent
\begin{gather}
    \bh_t \sim \normDensity(0, v_{h_t} \bI_N) \label{eq:h_t_gaussian} \\
    \bb_t = \bV^T \bq_t \sim \normDensity(0, v_{q_t} \bI_N) \label{eq:b_t_gaussian}
\end{gather}

\noindent
where 

\noindent
\begin{gather}
    \lim_{N \rightarrow \infty} \inv{N} \bh_{\tau}^T \bq_{k} \overset{a.s.}{=} 0  \label{eq:h_q_independence} \\
    \lim_{N \rightarrow \infty} \inv{N} \bw^T \bD \bb_{\tau} \overset{a.s.}{=} 0 \label{eq:w_Aq_independence}
\end{gather}

\noindent
for any $\tau, k \leq t$ and with $\bq_0 = - \bx$ and for any matrix $\bD \in \mathbb{R}^{M \times N}$ whose limiting spectral distribution has finite support. 

\section{Formulation of the problem}

In the above context, we aim to prove Theorem 3 from the main paper that studies the function $\partial(\halfa)$

\noindent
\begin{equation}
    \partial(\halfa) =  u_2 + 2 u_3 \halfa \label{eq:quadratic_equation_derivative}
\end{equation}

\noindent
with

\noindent
\begin{equation}
    u_2 = \frac{2}{N} \Big( \big( \br_t - \bg_D(\br_t) \big)^T \br_t - \big( \by - \bA \bg_D(\br_t) \big)^T \bA \br_t \Big) \quad \quad \quad u_3 = \inv{N} ||\br_t||^2 -\inv{N} ||\bA \br_t||^2
\end{equation}

\noindent
at the point $\halfa = \alpha_t$. The special case of that theorem for CG-VAMP is formulated as follows.

\begin{theorem} \label{th:divergence_identity_OAMP_based_algorithms}
    
    Consider the CG-VAMP algorithms equipped with a denoiser $\bg_D(\br_t)$ and let $\alpha_t$ be the divergence of $\bg_D(\br_t)$. Define a corrected denoiser
    
    \noindent
    \begin{equation}
        \bbf(\br_t) = \bg_D(\br_t) - \alpha_t \br_t \label{eq:th_DF_func}
    \end{equation}
    
    \noindent
    and its error
    
    \noindent
    \begin{equation*}
        \bbq_{t+1} = \bbf(\br_t) - \bx
    \end{equation*}
    
    \noindent
    Additionally define an inner-product $\psi_t = \inv{N} \bq_t^T \bbq_{t+1}$, where $\bq_t$ is the error $\bq_t = \bs_t - \bx$ from the previous iteration. Then, under Assumptions 1-3 and assuming \eqref{eq:h_t_gaussian} - \eqref{eq:w_Aq_independence} hold up to iteration $t$, the function $\frac{1}{2}\partial(\halfa)$ at $\halfa = \alpha_t$ almost surely converges to

    \noindent
    \begin{equation}
        \lim_{N \rightarrow \infty} \frac{1}{2} \partial(\alpha_t) \overset{a.s.}{=} \frac{k_t v_{q_t} - \psi_t}{\gamma_t v_{q_t}} - (v_{q_t} - \psi_t) - v_w + \frac{ \frac{\psi_t}{v_{q_t}} \Big( \lim_{N\rightarrow \infty} v_w \inv{N} \bw^T \bmu_t^i - \delta v_w + 2 v_w \gamma_t v_{q_t} \Big) }{ \gamma_t v_{q_t}} \label{eq:LSL_deriv_CG_VAMP}
    \end{equation}
    
    \noindent
    where $\bmu_t^i$ is the CG approximation with $0 \leq i \leq M$ iterations of the system of linear equations \eqref{eq:SLE} and $k_t = \frac{\inv{N}||\bA^T \bmu_t^i||^2}{\gamma_t}$.

\end{theorem}

Besides proving Theorem \ref{th:divergence_identity_OAMP_based_algorithms}, we provide the motivation for assuming $\partial(\halfa)$ is positive for CG-VAMP at any $t$, which implies validity of Proposal 1 from the main paper for CG-VAMP.

\section{Proof of Theorem \ref{th:divergence_identity_OAMP_based_algorithms}}

\begin{proof}[\unskip\nopunct]

Before proceeding to the proof, we note that while there are multiple version of the CG algorithm, in this document we assume CG-VAMP utilizes the following version of CG approximating the SLE \eqref{eq:SLE}

\begin{algorithm}
\DontPrintSemicolon
\SetNoFillComment
\caption{Conjugate Gradient for approximating $\bW \bmu = \bz$}

    \KwInitialization{ $\bmu = \bzero, \br^0 = \bp^0 = \bz, i = 0$}

    \While{$i < i_{max}$}
    {
        $\bd^i = \bW \bp^i$ \;
        $a^i = \frac{||\br^i ||^2}{ (\bp^i)^T \bd^i}$ \;
        $\bmu^{i+1} = \mu^i + a^i \bp^i$ \;
        $\br^{i+1} = \br^i - a^i \bd^i$ \;
        $b^i = \frac{|| \br^{i+1} ||^2}{|| \br^{i} ||^2}$ \;
        $\bp^{i+1} = \br^{i+1} + b^i \bp^i = \bz - \bW \bmu^{i+1} + b^i \bp^i$ \;
        $i = i+1$
    }
    \KwOutput{$\bmu^{i+1}$}    

\end{algorithm}

\noindent
Additionally, in the following, we will make use of the following lemmas

\begin{lemma} \label{lemma:normal_vectors_matrix_interaction}
    
    \cite{rand_mat_silverstein}, \cite{NS-VAMP}, \cite{AMP_SE_non_separable}: Let $\bx_1$ and $\bx_2$ be two $N$-sized zero-mean isotropic Gaussian vectors with variances $v_1$ and $v_2$ respectively and $\bD$ be a $N$ by $N$ symmetric positive semidefinite matrix independent of $\bx_1$ and $\bx_2$. Additionally assume that as $N \rightarrow \infty$, the Empirical Eigenvalue Distribution of $\bD$ converges to a density with compact support. Then almost surely we have that
    
    \noindent
    \begin{equation}
        \lim_{N \rightarrow \infty} \inv{N} \bx_1^T \bD \bx_2 \overset{a.s.}{=} \frac{\bx_1^T \bx_2}{N} \inv{N} Tr \big\{ \bD \big\} \label{eq:normal_vect_matrix_inner_prod}
    \end{equation}
    
\end{lemma}

\begin{lemma} \label{lemma:LSL_CG_model}
    
\cite{CG_EP}: Let $\bmu_t^i = \bF_t^i(\bz_t)$ be the output of the CG algorithm as in Algorithm 1 after $i$ inner-loop iterations approximating the SLE \eqref{eq:SLE}. Then, under Assumptions 1-3 and assuming \eqref{eq:h_t_gaussian} - \eqref{eq:w_Aq_independence} hold up to iterations $t$, the CG algorithm almost surely converges to a linear mapping

\noindent
\begin{equation}
    \lim_{N \rightarrow \infty} \bF_t^{i} \overset{a.s.}{=} \bU \btF_t^{i} \bU^T \label{eq:LSL_CG_model}
\end{equation}

\noindent
of the vector $\bz_t = \bw - \bA \bq_t$. In \eqref{eq:LSL_CG_model} the matrix $\btF_t^{i}$ is a non-singular diagonal matrix whose elements depend on $\delta$, $v_{q_t}$, $v_w$ and up to $i$ order moments of the singular spectrum of $\bS \bS^T$, and is independent of $\bq_t$ and $\bw$.

\end{lemma}

With these results at hand, we are ready to prove Theorem \ref{th:divergence_identity_OAMP_based_algorithms}. First, as was shown in the main paper, the function $\frac{1}{2}\partial(\halfa)$ at $\halfa = \alpha_t$ almost surely converges to

\noindent
\begin{equation}
    \lim_{N\rightarrow \infty} \frac{1}{2}\partial(\alpha_t) = v_{h_t} - v_w + \psi_t + \frac{\lim_{N\rightarrow \infty} \inv{N} (\bF_t(\bz_t))^T\bW_t  \bA \bbq_{t+1} + v_w \gamma_t \psi_t }{ \gamma_t v_{q_t}} \label{eq:starting_point}
\end{equation}

\noindent
with $\psi_t$ and $\bbq_{t+1}$ are defined as in Theorem \ref{th:divergence_identity_OAMP_based_algorithms}. Note that Lemma \ref{lemma:LSL_CG_model} implies that we can treat the output of the CG algorithm $\bmu_t^i = \bF_t^i(\bz_t)$ as the vector $\bz_t$ mapped by a certain matrix $\bF_t^i$. Then, based on Lemma \ref{lemma:normal_vectors_matrix_interaction}, we can rewrite the inner-product in \eqref{eq:starting_point} as

\noindent
\begin{equation*}
    \lim_{N \rightarrow \infty} \inv{N} \bbq_{t+1}^T \bA^T \bW_t \bmu_t^i \overset{a.s.}{=} - \psi_t \inv{N} Tr \Big\{ \bA \bA^T \bW_t \bF_t^i \Big\}
\end{equation*}

\noindent
This implies that

\noindent
\begin{equation}
    \lim_{N \rightarrow \infty} \inv{N} \bbq_{t+1}^T \bA^T \bW_t \bmu_t^i \overset{a.s.}{=} \lim_{N \rightarrow \infty} - \frac{\psi_t}{v_{q_t}} \inv{N} \bq_t^T \bq_t \inv{N} Tr \Big\{ \bA \bA^T \bW_t \bF_t^i \Big\} \overset{a.s.}{=} \lim_{N \rightarrow \infty} \frac{\psi_t}{v_{q_t}} \inv{N} \bq_t^T \bA^T \bW_t \bmu_t^i \label{eq:CG_VAMP_q_hat_A_W_H_z}
\end{equation}

\noindent
Next, by noting that $\bz_t = \by - \bA \bs_t = \bw - \bA \bq_t$, we can arrive at

\noindent
\begin{equation}
    - \bq_t^T \bA^T \bW_t \bmu_t^i = \bz_t \bW_t \bmu_t^i - \bw^T \bW_t \bmu_t^i \label{eq:CG_VAMP_q_A_W_H_z}
\end{equation}

\noindent
Since in the CG algorithm we have $\bp_t^0 = \bz_t$ and due to the conjugacy properties of CG, we have that

\noindent
\begin{equation}
     \bz_t \bW_t \bmu_t^i = \big( \bp_t^0 \big)^T \bW_t \sum_{j = 0}^i a_t^j \bp_t^j = a_t^0 \big( \bp_t^0 \big)^T \bW_t \bp_t^0 \overset{(a)}{=} ||\bz_t||^2 \label{eq:z_W_mu}
\end{equation}

\noindent
where (a) comes from the fact that $a_t^0 = \frac{||\bz_t||^2}{\big( \bp_t^0 \big)^T \bW_t \bp_t^0}$. Note that \eqref{eq:W_t} suggests that

\noindent
\begin{equation}
    \inv{N} \bw^T \bW_t \bmu_t^i = v_w \inv{N} \bw^T \bmu_t^i + v_{q_t} \inv{N} \bw^T \bA \bA^T \bmu_t^i
\end{equation}

\noindent
In \cite{OurSecondPaper} we showed that $\lim_{N \rightarrow \infty} \inv{N} \bw^T \bA \bA^T \bmu_t^i \overset{a.s.}{=} v_w \gamma_t$ so that

\noindent
\begin{equation}
    \lim_{N \rightarrow \infty} \inv{N} \bw^T \bW_t \bmu_t^i \overset{a.s.}{=} v_w \inv{N} \bw^T \bmu_t^i + v_w v_{q_t} \gamma_t \label{eq:w_W_mu}
\end{equation}

\noindent
Using \eqref{eq:CG_VAMP_q_A_W_H_z} in \eqref{eq:CG_VAMP_q_hat_A_W_H_z} together with \eqref{eq:z_W_mu}, \eqref{eq:w_W_mu} and the fact that $\lim_{N \rightarrow \infty} \inv{N} ||\bz_t||^2 \overset{a.s.}{=} v_{q_t} + \delta v_w$ \cite{OAMP}, \cite{EP_Keigo} leads us to

\noindent
\begin{align}
    \lim_{N \rightarrow \infty} \inv{N} \bbq_{t+1}^T \bA^T \bW_t \bmu_t^i &\overset{a.s.}{=} \lim_{N \rightarrow \infty} \frac{\psi_t}{v_{q_t}} \Big( v_w \inv{N} \bw^T \bmu_t^i + v_w v_{q_t} \gamma_t - v_{q_t} - \delta v_w \Big) \nonumber\\
    &= \lim_{N \rightarrow \infty} \frac{\psi_t}{v_{q_t}} \Big( v_w \inv{N} \bw^T \bmu_t^i - \delta v_w \Big) + v_w \gamma_t \psi_t - \psi_t
\end{align}

\noindent
Substituting this result into \eqref{eq:starting_point} gives

\noindent
\begin{equation}
    \lim_{N\rightarrow \infty} \frac{1}{2}\partial(\alpha_t) = v_{h_t} + \psi_t - \frac{\psi_t}{\gamma_t v_{q_t}} - v_w + \frac{\lim_{N\rightarrow \infty} \frac{\psi_t}{v_{q_t}} \Big( v_w \inv{N} \bw^T \bmu_t^i - \delta v_w \Big) + 2 v_w \gamma_t \psi_t }{ \gamma_t v_{q_t}} \label{eq:second_point}
\end{equation}

\noindent
Next, we can use \eqref{eq:gamma_general} to show that the variance $v_{h_t} \overset{a.s.}{=} \lim_{N \rightarrow \infty} \inv{N} ||\br_t - \bx||^2$ has the following update rule

\noindent
\begin{equation}
    v_{h_t} \overset{a.s.}{=} \lim_{N\rightarrow \infty} \gamma^{-2} \inv{N} ||\bA^T \bmu_t||^2 - v_{q_t} = \lim_{N\rightarrow \infty} \gamma_t^{-1} k_t - v_{q_t}
\end{equation}

\noindent
where we defined

\noindent
\begin{equation}
    k_t = \frac{\inv{N} ||\bA^T \bmu_t||^2}{\gamma_t} \label{eq:k_t_original}
\end{equation}

\noindent
Then, we can show that

\noindent
\begin{equation}
    v_{h_t} + \psi_t - \frac{\psi_t}{\gamma_t v_{q_t}} = \gamma_t^{-1} k_t- \frac{\psi_t}{\gamma_t v_{q_t}} - (v_{q_t} - \psi_t) = \frac{k_t v_{q_t} - \psi_t}{\gamma_t v_{q_t}} - (v_{q_t} - \psi_t)
\end{equation}

\noindent
Substituting this back into \eqref{eq:second_point} gives the desired result

\noindent
\begin{equation}
    \lim_{N\rightarrow \infty} \frac{1}{2}\partial(\alpha_t) = \frac{k_t v_{q_t} - \psi_t}{\gamma_t v_{q_t}} - (v_{q_t} - \psi_t) - v_w + \frac{ \frac{\psi_t}{v_{q_t}} \Big( \lim_{N\rightarrow \infty} v_w \inv{N} \bw^T \bmu_t^i - \delta v_w + 2 v_w \gamma_t v_{q_t} \Big) }{ \gamma_t v_{q_t}} \label{eq:final_point}
\end{equation}

\noindent
which completes the proof of Theorem \ref{th:divergence_identity_OAMP_based_algorithms} and of Theorem 3 for CG-VAMP from the main paper. \end{proof}

\section{Motivation for positivity of \eqref{eq:final_point}}

In this section we provide a motivation for assuming \eqref{eq:final_point} is positive, which is required for validity of Proposal 1 from the main paper for CG-VAMP. Note that when the SLE \eqref{eq:SLE} is very poorly or accurately approximated, the positivity of \eqref{eq:final_point} is ensured, as discussed in the main paper. Below, we consider the case, where the SLE \eqref{eq:SLE} is approximated with only a moderate accuracy. Importantly, in \cite{EP_Keigo} it was shown that when the SLE \eqref{eq:SLE} is solved exactly, \eqref{eq:gamma_general} is the largest and can be defined through the Limiting Eigenvalue distribution of $\bA \bA^T$ as

\noindent
\begin{equation}
    \lim_{N \rightarrow \infty} \gamma_t = \delta \int \frac{\lambda}{v_w + v_{q_t} \lambda} p(\lambda) d \lambda \label{eq:VAMP_gamma}
\end{equation}

\noindent
which implies that

\begin{equation}
    \lim_{N \rightarrow \infty} \gamma_t v_{q_t} = \delta \int \frac{v_{q_t} \lambda}{v_w + v_{q_t} \lambda} p(\lambda) d \lambda < \delta \label{eq:gamma_v_q_bound}
\end{equation}

\noindent
However, since we consider the case where the CG approximation of the SLE \eqref{eq:SLE} is only moderately accurate, we have the values of $\gamma_t$ to be even smaller then \eqref{eq:VAMP_gamma} and expect $\gamma_t v_{q_t}$ to be bounded by a substantially smaller value than $\delta$. 

On the other hand, by rewriting $-\bA \bq_t = \bz_t - \bw$, we can show that the scalar $\gamma_t$ from \eqref{eq:gamma_general} is equivalent to

\noindent
\begin{equation}
    \gamma_t = \lim_{N \rightarrow \infty}  \frac{ \bz_t^T \bmu_t^i - \bw^T \bmu_t^i}{N v_{q_t}}
\end{equation}

\noindent
Additionally, by using \eqref{eq:W_t} we can show that

\noindent
\begin{equation}
    ||\bA^T \bmu_t^i||^2 = \frac{(\bmu_t^i)^T\bW_t \bmu_t^i - v_w ||\bmu_t^i||^2}{v_{q_t}} \label{eq:norm_At_H_z}
\end{equation}

\noindent
The two above results imply that the scalar $k_t$ from \eqref{eq:k_t_original} almost surely converges to

\noindent
\begin{equation}
    \lim_{N \rightarrow \infty} k_t \overset{a.s.}{=} \lim_{N \rightarrow \infty} \frac{(\bmu_t^i)^T\bW_t \bmu_t^i - v_w ||\bmu_t^i||^2}{(\bmu_t^i)^T \bW_t \bmu_t^i - \bw^T \bmu_t^i } \label{eq:k_t_identity}
\end{equation}

\noindent
Next, we analyze and compare the terms involved in \eqref{eq:k_t_identity}. First, note that the term $(\bmu_t^i)^T \bW_t \bmu_t^i$ can be bounded as

\noindent
\begin{align}
    (\bmu_t^i)^T \bW_t \bmu_t^i &\geq \kappa_{min}(\bW_t) ||\bmu_t^i||^2 \label{eq:mu_W_mu_bound_1} \\
    &= \kappa_{min}\big( v_w \bI + v_{q_t} \bA \bA^T \big) ||\bmu_t^i||^2 = \big( v_w \bI + v_{q_t} \kappa_{min}(\bA \bA^T) \big) ||\bmu_t^i||^2 > v_w ||\bmu_t^i||^2 \label{eq:mu_W_mu_bound_2}
\end{align}

\noindent
Here, the equality $(\bmu_t^i)^T \bW_t \bmu_t^i = \kappa_{min}(\bW_t) ||\bmu_t^i||^2$ corresponds to the case, where the vector $\bmu_t^i$ aligns with the smallest eigenvalue of the matrix $\bW_t$. But in the limit, the chance of that happening is infinitely small due to the nature of the vector $\bmu_t^i$. From the literature \cite{lin_nonlin_program} on the CG algorithm, we know that the output $\bmu_t^i$ of the CG algorithm with $i$ iterations can be represented as a polynomial $P_i(\bW_t)$ of the matrix $\bW_t$ times the input vector $\bz_t$

\noindent
\begin{equation}
    \bmu_t^i = \sum_{j = 0}^i \phi_j \bW_t^j \bz_t = P_i(\bW_t) \bz_t
\end{equation}

\noindent
for some scalars $\phi_j$. Note that in the limit, $\bz_t$ acts as $\bw + \bU \bS \bb_t$, where $\bw$ and $\bb_t$ are zero-mean i.i.d. Gaussian vectors independent of each other and of $\bW_t$. Then, by defining a matrix $\bD_t^i = P_i(\bW_t) \bW_t P_i(\bW_t)$ and using Lemma \ref{lemma:normal_vectors_matrix_interaction}, we have that the inner-product $(\bmu_t^i)^T \bW_t \bmu_t^i = \bz_t^T \bD_t^i \bz_t$ almost surely converges to

\noindent
\begin{equation}
    \lim_{N \rightarrow \infty} \inv{N} (\bmu_t^i)^T \bW_t \bmu_t^i = \lim_{N \rightarrow \infty} \inv{N} \bz_t^T \bD_t^i \bz_t \overset{a.s.}{=} \lim_{N \rightarrow \infty} v_w \inv{N} Tr \big\{ \bD_t^i \big\} + v_{q_t} \inv{N} Tr \big\{ \bA \bA^T \bD_t^i \big\} = \inv{N} Tr \big\{ \bW_t \bD_t^i \big\} \label{eq:mu_W_mu_LSL}
\end{equation}

\noindent
We see that in the limit, the vector $\bmu_t^i$ interacts with the eigenvalues of the matrix $\bW_t$ in the averaged way through the trace operator, while \eqref{eq:mu_W_mu_bound_1} assumes that $\bmu_t^i$ aligns with the smallest eigenvalue. Then, if the matrix $\bA$ is sufficiently ill-conditioned, it is reasonable to assume that \eqref{eq:mu_W_mu_LSL} is much greater than \eqref{eq:mu_W_mu_bound_1}. Otherwise, doing a few iterations of CG would be sufficient to get an accurate approximation of the SLE \eqref{eq:SLE} and therefore we should rather refer to the asymptotic result for VAMP considered in the main paper. Therefore, in the following we assume $v_w ||\bmu_t||^2$ has a negligible magnitude compared to $(\bmu_t^i)^T \bW_t \bmu_t^i$ and therefore

\noindent
\begin{equation}
    \lim_{N \rightarrow \infty} k_t \overset{a.s.}{\approx} \lim_{N \rightarrow \infty} \frac{(\bmu_t^i)^T\bW_t \bmu_t^i }{(\bmu_t^i)^T \bW_t \bmu_t^i - \bw^T \bmu_t^i } \label{eq:k_t_approximation}
\end{equation}

To proceed next, we introduce the following lemma

\begin{lemma} \label{lemma:w_mu_positivity}
    Consider the CG-VAMP algorithm. Define $\bmu_t^i$ to be the CG approximation with $0 \leq i \leq M$ iterations of the system of linear equations \eqref{eq:SLE}. Then, under Assumptions 1-3 and assuming \eqref{eq:h_t_gaussian} - \eqref{eq:w_Aq_independence} hold up to iterations $t$, we almost surely have that
    
    \noindent
    \begin{equation}
        \lim_{N \rightarrow \infty} \inv{N} \bw^T \bmu_t^{i} \overset{a.s.}{>} 0 \label{eq:w_mu_positivity}
    \end{equation}
\end{lemma}

\begin{proof}
    See Appendix A
\end{proof}

\noindent
This lemma together with \eqref{eq:k_t_approximation} suggests that $k_t$ is either tightly concentrated around $1$ if $\bw^T \bmu_t^i$ has a negligible magnitude comparing to $(\bmu_t^i)^T \bW_t \bmu_t^i$ or is evidently greater than $1$ if the two terms are of the same order. For the first case we can use \eqref{eq:gamma_v_q_bound} to show that \eqref{eq:starting_point} is tightly concentrated around

\noindent
\begin{equation}
    \lim_{N\rightarrow \infty} \frac{1}{2}\partial(\alpha_t) \overset{a.s.}{\approx} \frac{(v_{q_t} - \psi_t) - \gamma_t v_{q_t} (v_{q_t} - \psi_t) - \gamma_t v_{q_t} v_w}{\gamma_t v_{q_t}} + \frac{ \frac{\psi_t}{v_{q_t}} \Big( \lim_{N\rightarrow \infty} v_w \inv{N} \bw^T \bmu_t^i - \delta v_w + 2 v_w \gamma_t v_{q_t} \Big) }{ \gamma_t v_{q_t}} \label{eq:partial_approx}
\end{equation}

\noindent
Since $\gamma_t v_{q_t}$ is positive and we are interested in showing that \eqref{eq:starting_point} is positive, we can consider only the nominator and bound it by dropping the component $\lim_{N\rightarrow \infty} v_w \inv{N} \bw^T \bmu_t^i$, which is positive due to Lemma \ref{lemma:w_mu_positivity}. Thus we shall analyze the following difference

\noindent
\begin{align}
    (v_{q_t} - \psi_t) (1 - \gamma_t v_{q_t}) - v_w \big( \delta \frac{\psi_t}{v_{q_t}} + \gamma_t v_{q_t} - 2 \gamma_t \psi_t \big) &= (v_{q_t} - \psi_t) (1 - \gamma_t v_{q_t}) - v_w \big( \delta \frac{\psi_t}{v_{q_t}} + 2\gamma_t (v_{q_t} - \psi_t) - \gamma_t v_{q_t} \big) \nonumber\\
    &= (v_{q_t} - \psi_t) (1 - \gamma_t v_{q_t} - 2 \gamma_t v_w) - v_w \big( \delta \frac{\psi_t}{v_{q_t}} - \gamma_t v_{q_t} \big)
\end{align}

\noindent
Let $v_{q_t} > v_w$ so that

\noindent
\begin{equation}
    1 - \gamma_t v_{q_t} - 2 \gamma_t v_w > 1 - \gamma_t v_{q_t} - 2 \gamma_t v_{q_t} = 1 - 3 \gamma_t v_{q_t}
\end{equation}

\noindent
As discussed at the beginning of the section, we expect $\gamma_t v_{q_t}$ to be considerably smaller than $\delta$, which implies that $ 1 - 3 \gamma_t v_{q_t} $ is tightly concentrated around $1$. Similarly, we expect that

\noindent
\begin{equation}
    v_w \big( \delta \frac{\psi_t}{v_{q_t}} - \gamma_t v_{q_t} \big) \approx v_w  \frac{\psi_t}{v_{q_t}} \delta
\end{equation}

\noindent
As a result, we the nominator of \eqref{eq:partial_approx} could be approximated by

\noindent
\begin{equation}
    v_{q_t} - \psi_t - v_w  \frac{\psi_t}{v_{q_t}} \delta \label{eq:last_step}
\end{equation}

\noindent
In the main paper we provided the motivation for assuming $v_{q_t}$ is greater than $(\delta^{-1} - 1)^{-1} v_w$ in the context of VAMP. Here, the scaling $\frac{\psi_t}{v_{q_t}} \delta$ of $v_w$ is even smaller, since $\psi_t < v_{q_t}$. Additionally, the fixed point of CG-VAMP is assumed to be worse than of that of VAMP so that $v_{q_t}$ is even larger. All together, this motivates assuming \eqref{eq:last_step} is positive, which implies positivity of \eqref{eq:starting_point}. Note that in the case where $k_t$ is greater than $1$, the analysis above holds automatically, since $v_{q_t}$ in \eqref{eq:last_step} is additionally scaled by $k_t>1$.

\section*{Appendix A} 
\begin{proof}[\unskip\nopunct]

Next we prove Lemma \ref{lemma:w_mu_positivity}. The proof is based on the following lemma from \cite{OurPaper}

\begin{lemma}
\label{lemma:LSL_gamma_recursion}

\cite{OurPaper} For $i = 0,1,...$ let the scalars $a_t^{i}$ and $b_t^{i}$ be as in the CG algorithm shown in Algorithm 1 and define recursively two scalar functions $\eta_t^{i}$ and $\psi_t^{i}$ as follows

\noindent
\begin{gather}
    \eta_t^{i} = v_w \Big( \delta - \inv{N} \bz_t^T \bmu_t^i \Big) + b_t^{i-1} \eta_t^{i-1} \label{eq:LSL_gamma_A_rec_1}\\
    \psi_t^i = \psi_t^{i-1} + a_t^{i-1} \eta_t^{i-1} \label{eq:LSL_gamma_A_rec_2}
\end{gather}

\noindent
with $\psi_t^0 = 0$ and $\eta_t^0 = \delta v_w$. Then under Assumptions 1-3 and assuming \eqref{eq:h_t_gaussian} - \eqref{eq:w_Aq_independence} hold up to iterations $t$, the inner product $\inv{N} \bw^T \bmu_A^{i}$ almost surely converges to

\noindent
\begin{equation*}
    \lim_{N \rightarrow \infty} \inv{N} \bw^T \bmu_A^{i} \overset{a.s.}{=} \psi_t^i
\end{equation*}

\end{lemma}

\noindent
From Algorithm 1 we see that both $a_t^i$ and $b_t^i$ are positive. Then, by induction, we can prove that $\psi_t^i$ is non-negative if we can prove that $\delta - \inv{N} \bz_t^T \bmu_t^i$ is non-negative. First, note that

\noindent
\begin{equation}
     \bz_t^T \bmu_t^i = (\bmu_t^{M})^T \bW_t \bmu_t^i = \big( \sum_{j = 0}^M a_t^j \bp_t^j \big)^T \bW_t \sum_{j = 0}^i a_t^j \bp_t^j = \big( \sum_{j = 0}^i a_t^j \bp_t^j \big)^T \bW_t \sum_{j = 0}^i a_t^j \bp_t^j = (\bmu_t^{i})^T \bW_t \bmu_t^i
\end{equation}

\noindent
Since this inner-product monotonically increases with $i$ \cite{CG_properties}, it is sufficient to consider the case $i=M$, where $\bmu_t^M = \bW_t^{-1} \bz_t$

\noindent
\begin{align}
    &\lim_{N \rightarrow \infty} \inv{N} \bz_t^T \bW_t^{-1} \bz_t \nonumber\\
    &\overset{a.s.}{=} \lim_{N \rightarrow \infty} v_w \inv{N} Tr \big\{ \bW_t^{-1} \big\} +  v_{q_t} \inv{N} Tr \big\{ \bA \bA^T \bW_t^{-1} \big\} \label{eq:z_invW_z_intermidiate}\\
    &= \inv{N} Tr \big\{ (v_w + v_{q_t} \bA \bA^T ) \bW_t^{-1} \big\} = \delta \nonumber
\end{align}

\noindent
where \eqref{eq:z_invW_z_intermidiate} comes from Lemma \ref{lemma:normal_vectors_matrix_interaction} and \eqref{eq:w_Aq_independence}. Thus, we have that

\noindent
\begin{equation}
    \inv{N} \bz_t^T \bmu_t^i \leq \delta
\end{equation}

\noindent
with equality when $i=M$. This, together with Lemma \ref{lemma:LSL_gamma_recursion}, completes the proof. \end{proof}

\printbibliography